\documentclass{article}
\pdfoutput=1

\usepackage[english]{babel}
\usepackage[paperwidth=225mm,      
paperheight=279mm,        
top=1.95cm,        
bottom=1.95cm,      
left=2.85cm,         
right=2.85cm,       
marginparwidth=1.75cm]{geometry}

\usepackage{amsmath,amssymb,amsthm}
\usepackage{bbm}
\usepackage{amscd}

\allowdisplaybreaks[4]

\usepackage{indentfirst}

\usepackage{authblk}

\usepackage{abstract}

\usepackage[bookmarksnumbered]{hyperref}
\usepackage{appendix}
\usepackage{natbib}
\usepackage{pdfpages}

\usepackage{tikz}
\usepackage{subcaption}
\usetikzlibrary{positioning, fit, shapes.misc,arrows.meta, matrix}
\usetikzlibrary{calc} 
\usepackage{tikz-cd}

\usepackage{algorithm}
\usepackage{algpseudocode}

\usepackage{booktabs} 
\usepackage{array}    
\usepackage{multirow} 

\usepackage{float} 
\usepackage{graphicx}

\numberwithin{equation}{section}
\numberwithin{figure}{section}

\usepackage{cases} 
\usepackage{empheq} 

\usetikzlibrary{matrix}

\usepackage{thmtools,thm-restate}
\usepackage{cleveref}

\newtheorem{Satz}{Theorem}[section]
\newtheorem{Prop}[Satz]{Proposition}
\newtheorem{Lem}[Satz]{Lemma}

\newtheorem{Cor}[Satz]{Corollary}

\theoremstyle{definition}

\newtheorem{Dfn}[Satz]{Definition}

\newtheorem{remark}[Satz]{Remark}
\newtheorem{example}[Satz]{Example}

\author{Jiaxing Weng\thanks{J. Weng: School of Economics, Ocean University of China, Qingdao, 266100, China; wengjiaxing@stu.ouc.edu.cn. I am deeply indebted to Tongyu Wang for his invaluable guidance in shaping the financial intuition behind this work. I also thank Jia Song, who brought this problem back to my attention. I am also grateful to Dunhuang City, where I spent a wonderful summer and first came to appreciate the flavor of the problem. }}

\title{Conservation of Short-term Flows: Signed Optimal Transport}

\begin{document}
	\maketitle
	\begin{abstract}
		\begin{sloppypar}
			This paper develops a theoretical framework for signed optimal transport. A global flatness measure induced by the continuum transport equation serves as the regularizer. We examine transport networks from a harmonic analysis perspective, prove the existence and uniqueness of the optimal coupling in the variational problem, and provide an algorithmic scheme that guarantees lossless information transport under bi-marginal constraints. To bridge theory with practice, we design an empirical analysis framework for the resulting optimal estimators, which is sufficiently general to accommodate both time-series and panel-structural analyses of the network, owing to the intrinsic invariance properties of locally compact abelian groups. 
		\end{sloppypar}
	\end{abstract}
	
\indent \indent \textbf{Keywords:} Signed Optimal Transport, Quadratic Energy Regularization, Financial networks.
	
\newpage
\section{Introduction}\label{sect: intro}
Optimal Transport (Henceforth OT) is well established for recovering the optimal coupling $\pi_\varepsilon^* \in \mathbb{R}^+\times \mathbb{R}^+$ from marginal distributions. Its classical formulation is built upon nonnegative measures $p,q>0$, which, after normalization, endow the space with a probability manifold structure. Such nonnegativity is natural when the transported objects are physical mass or probability. However, when the observables of interest are net flows, Proposition \ref{prop: unboundedness} demonstrates that OT problem lacks compactness if $\pi\in \mathbb{R}\times \mathbb{R}$. A natural idea is to move OT into complex domain, yet the entropy regularization function becomes singular there. 

The signed structure is not a theoretically isolated construct; rather, it arises naturally in various dynamic networks. For instance, in financial networks, short-term liquidity fluctuations, changes in interbank risk exposures, and portfolio rebalancing all manifest signed variations in balance sheets. Likewise,inventory adjustments in supply chains, species migration balances in ecological networks, and deviations from reference flows in transportation systems are all intrinsically signed. In these cases, the state spaces no longer correspond to probability measures, but rather to signed measures on product spaces.

This paper develops two frameworks. The \textit{theoretical framework} is designed for the analysis of signed OT, for which we prove the existence and uniqueness of the optimal coupling $\pi_\varepsilon^*$. In the decoupling algorithm and iterative scheme proposed to compute $\pi_\varepsilon^*$, we show that lossless information transport, the well-definedness of the update rule, and the convergence of the iterative process are equivalent. In addition, this paper effects a \textit{paradigm shift} in the understanding of network structures: rather than treating networks merely as matrix operators, we move beyond this conventional perspective to adopt an intrinsic group-theoretic structures. 

\begin{figure}[H]
	\centering
	\begin{subfigure}[b]{0.42\textwidth}
		\includegraphics[width=\textwidth]{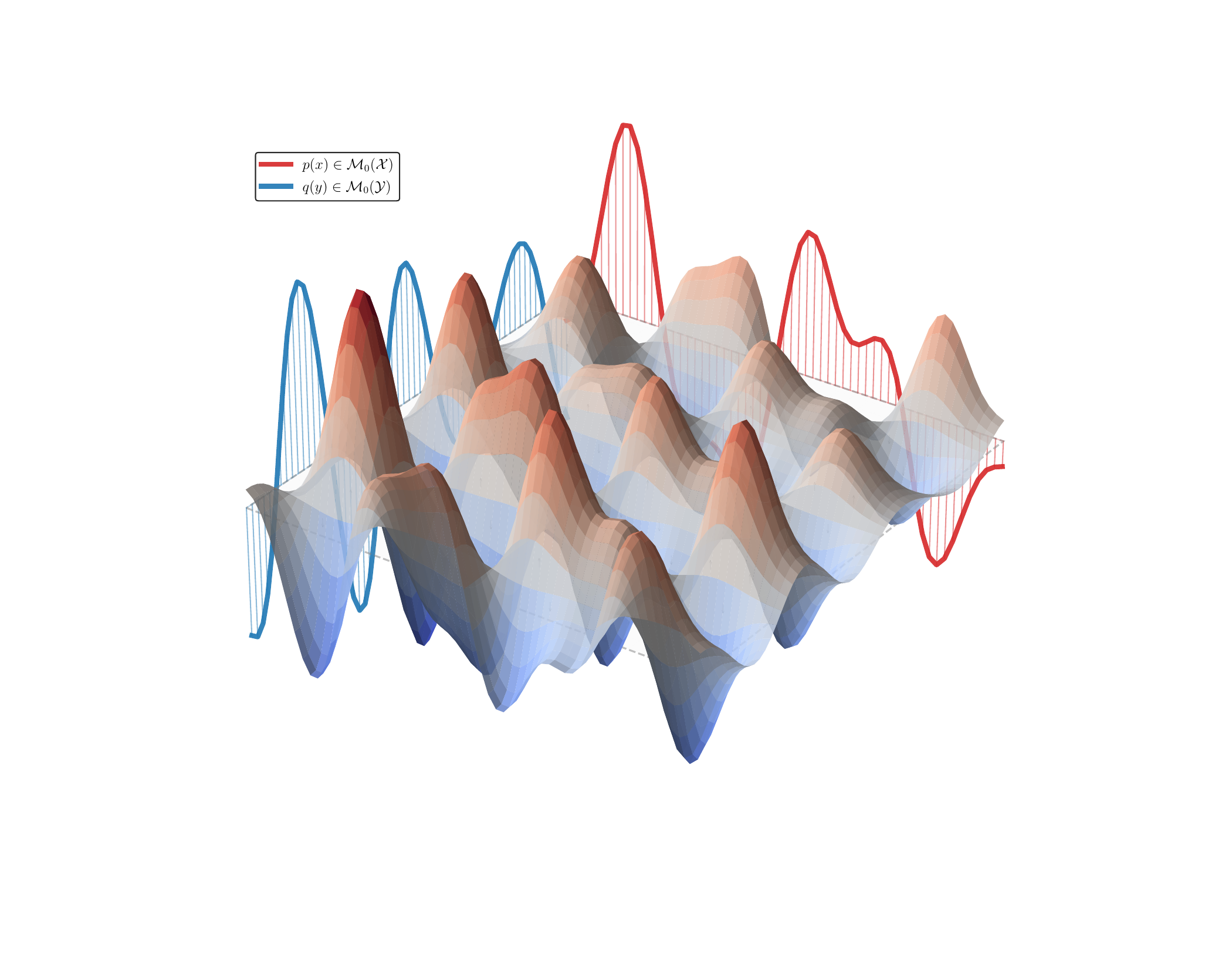}
		\caption{Normal Times}
	\end{subfigure}
	\hspace{0.5cm}
	\begin{subfigure}[b]{0.45\textwidth}
		\includegraphics[width=\textwidth]{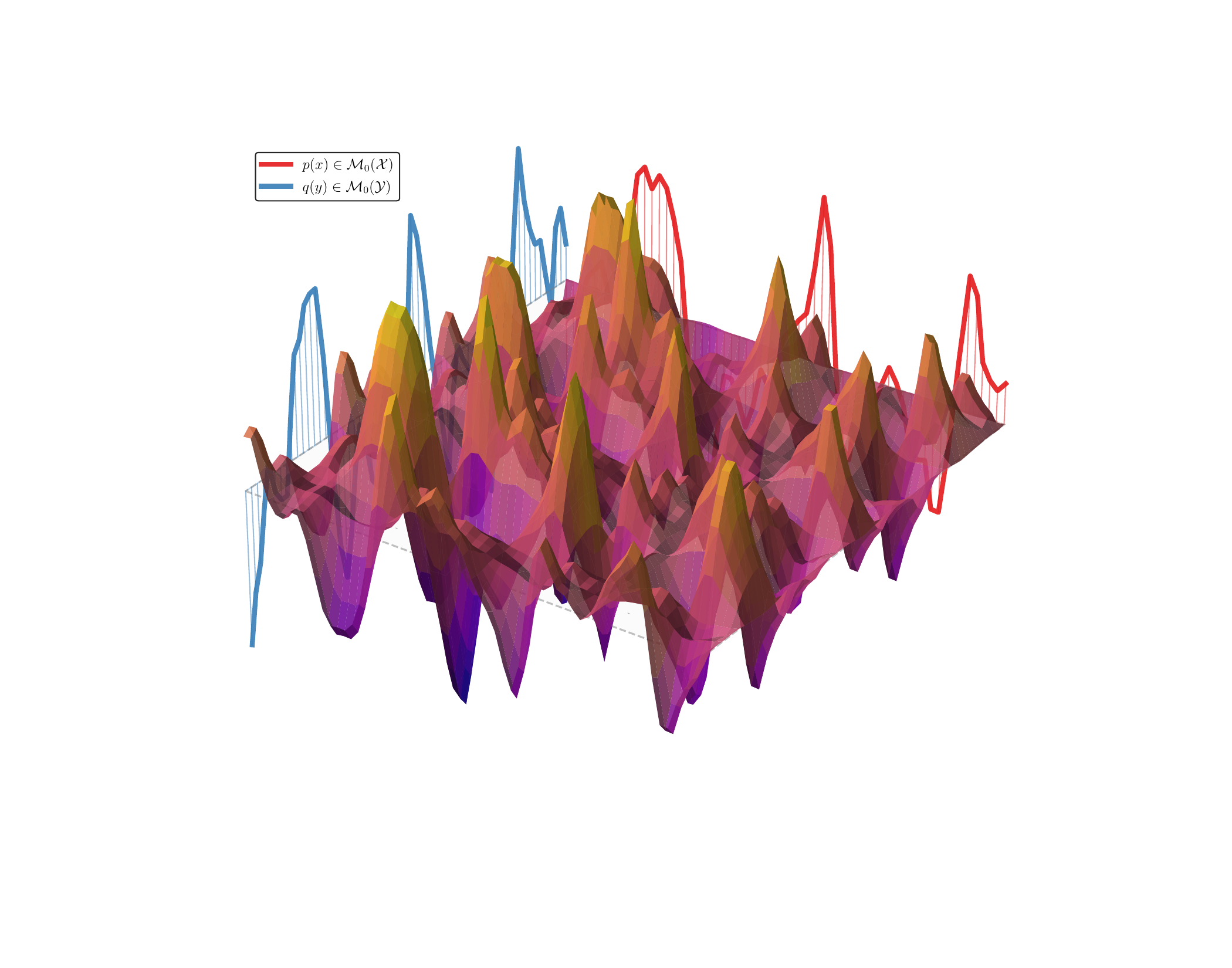}
		\caption{Financial Crises}
	\end{subfigure}
	\caption{Signed Optimal Transport}
	\label{fig: Signed Optimal Transport 3D}
\end{figure}

Once $\pi_\varepsilon^*$ has been estimated, we develop an \textit{empirical analysis framework} tailored to the structural properties of the underlying space. This framework enables us to: (1) characterize and precisely decompose the liquidity modes of the network; (2) detect arbitrary prior structures (e.g., core–periphery structures); (3) quantify both individual banks' risk exposures and the systemic risk of the banking system; and (4) track structural changes in the network over time—for instance, to identify whether new trading patterns emerge over the cycle from peacetime to the outbreak, propagation, and resolution of a financial crisis (see Figure \ref{fig: Signed Optimal Transport 3D}). This allows us to investigate how the mechanisms of interbank risk contagion evolve before and after financial crises, and to pinpoint exactly which propagation scales and which subgraph structures undergo liquidity flow patterns change. We elaborate on the theoretical framework in the main text, while the empirical analysis framework is presented in full in the Online Appendix.

Section 2 introduces the motivation and solution strategy for signed OT. There are various ways to restore compactness in the classical setting; but regularization demands a form of compactification that preserves data fidelity as much as possible. We investigate the generalizable feature underlying the maximum entropy principle, namely flatness, and, based on this insight, independently construct two curvature-based metrics. Since $\pi$ is signed, the notion of expectation ceases to apply, and with it the Fisher information matrix induced by the KL divergence is no longer available. Nevertheless, by taking a second-order expansion of the entropy regularization function and invoking the mass conservation constraint, we extract a local geometric structure $\|f\|_{L^2}^2$, which precisely characterizes the curvature information. However, this metric measures rigid pixel-wise variations: it registers only the total magnitude of perturbations, but fails to capture the dependence or correlation structure across different points. 

To extract further information, we reformulate the mass conservation constraint via the continuum transport equation for flow fields as introduced in \citet{chizat2018unbalanced}, which yields a formal representation of instantaneous density perturbations: 
$$
\partial_t\rho + \nabla\cdot(\rho v)=0.
$$
We set this as the initial perturbation $f$ and equip it with the norm $\|f\|_{\mathcal{H}^{-1}}^2$, thereby obtaining a global curvature measure $\int_\Omega|\nabla u|^2$. Spatial correlations are embedded into this measure naturally through the inverse of the Laplace operator. After considering the characteristics of systemic risk in financial networks, we argue that the curvature information encoded by $\|f\|_{\mathcal{H}^{-1}}^2$ is more suitable. However, since the underlying space on which it operates demands rather stringent conditions, we treat it here only as a heuristic guide. Nevertheless, the generator operator abstracted from its features will later serve as the regularization term in the variational problem.

Our goal is to design a regularization term $\mathcal H(\pi)$ on the underlying spaces $(\mathcal{X}, \mu)$ and $(\mathcal{Y}, \nu)$ such that the regularized signed OT \ref{eq: RegularizedProblem} admits a unique minimizer. At the same time, $\mathcal H(\pi)$ should mirror the behavior of Laplace operator in capturing global curvature information, so that the resulting optimal coupling $\pi_\varepsilon^*$ is as flat and sparse as possible.
\begin{equation}\label{eq: RegularizedProblem}
	\inf_{\pi \in \Pi_{\mathcal E_0}(p,q)} J_\varepsilon(\pi) := \int_{\mathcal{X}\times\mathcal{Y}} c(x,y)\,d\pi(x,y) + \varepsilon \mathcal H(\pi).
\end{equation}

We artificially assign bank space IDs so that $\mathcal X$ and $\mathcal Y$ are endowed with the structure of locally compact abelian groups, where bank IDs mapped to group elements via a mapping $\mathbb{N} \to \mathcal{X} \times \mathcal{Y}$. The detailed mapping procedure for the discrete cyclic group $\mathbb{Z}_2 \times \mathbb{Z}_2$ is provided in the Online Appendix. In fact, under the zero mass conservation setting, the structures of $\mathcal X$ and $\mathcal Y$ closely resemble those of compact groups, and many lemmas and propositions can be relaxed to finite measure spaces. Therefore, the group-theoretic setup is introduced primarily to facilitate the empirical framework and to streamline the proofs. Most importantly, of course, the algebraic properties of the operators, such as the spectral theorem, are intrinsic and do not entirely depend on the underlying space. Consequently, the functional form of $\pi_\varepsilon^*$, as well as the convergence and information preservation of the iterative algorithm, are primarily determined by the properties of the generator $L_{\mathcal X\times\mathcal Y}$.

Since a Riemannian gradient structure is generally unavailable on locally compact abelian groups, we induce a nonnegative, self-adjoint, and translation-invariant generator $L_{\mathcal X}$ on the space $\mathcal X$, so that it can play the role of the Laplace–Beltrami operator in measuring global curvature. In addition, the short-term flow conservation requires the generator to satisfy the Markov property, i.e., the spectral multiplier $m_{\mathcal X}(0) = 0$. We then further define the joint generator on the product space as $L_{\mathcal{X}\times\mathcal{Y}} = L_{\mathcal{X}} \otimes I + I \otimes L_{\mathcal{Y}}$, which inherits the Fourier characterization of the marginal generators (Lemma \ref{lem: UnitedGeneratorFourierTrans}).

We adopt the Dirichlet-type energy $\mathcal H(\pi) = \frac12 \|L_{\mathcal X\times\mathcal Y}^{1/2}\pi\|_{L^2}^2$ as the regularizer, where $\mathcal H(\pi)$ acts on the appropriate energy space $\mathcal E$:
$$
\mathcal E = D(L_{\mathcal X\times\mathcal Y}^{1/2}) = \left\{ \pi \in L^2(\mathcal{X}\times\mathcal{Y}) : \int m(\xi,\eta)|\hat\pi(\xi,\eta)|^2\,d\hat\mu d\hat\nu < \infty\right\}.
$$
This is a Hilbert space (Proposition \ref{prop: EnergySpaceIsHilbert}), which guarantees that $\mathcal H(\pi)$ exists and is the unique closed continuous quadratic extension, that is, $\mathcal H(\pi)$ is well-defined. If $\pi$ is $D(L_{\mathcal X\times\mathcal Y})$-regular, then
$$
\langle \pi, L_{\mathcal X\times\mathcal Y}\,\pi \rangle_{L^2}= \langle L_{\mathcal X\times\mathcal Y}^{1/2}\pi, L_{\mathcal X\times\mathcal Y}^{1/2}\pi \rangle_{L^2}= \|L_{\mathcal X\times\mathcal Y}^{1/2}\pi\|_{L^2}^2.
$$
We restrict the variational problem \ref{eq: RegularizedProblem} to $\mathcal E_0 = \mathcal M_0 \cap \mathcal E$, who collects signed density elements with finite spectral energy and zero total mass. The space $\mathcal E_0$ is again Hilbert, and $\mathcal H(\pi)$ is strictly convex on $\mathcal E_0$ (Lemma \ref{lem: N-EspaceHilbert}, Lemma \ref{lem: ConvexityOfDirletEnergy}). 
The set $\Pi_{\mathcal E_0}(p,q)$ of measures with marginal constraints is a closed convex set; hence, if a solution to \ref{eq: RegularizedProblem} exists, it is necessarily unique. Existence, however, requires a stronger priori control. We additionally require the generator to satisfy the spectral gap condition $m(\xi,\eta) \ge \lambda > 0$ for all $(\xi,\eta) \neq (0,0)$. This condition yields the coercivity of $L_{\mathcal X\times\mathcal Y}$, which not only guarantees the existence of a minimizer for \ref{eq: RegularizedProblem}, but also indirectly ensures lossless information transport in the iterative algorithm (Proposition \ref{prop: optimalEstimationExistUniq}).

We next derive, by means of the calculus of variations, the local weak Euler–Lagrange equation (henceforth, the E–L equation), which is defined only on $\ker A$ of the marginal operator $A$. Our ultimate goal, however, is to obtain a globally defined iterative algorithm that allows us to estimate the optimal coupling from arbitrary initial points. To extend the weak E–L equation from $\ker A$ to all of $\mathcal E_0$, we prove that when $\mathcal X$ and $\mathcal Y$ are compact groups, the image $\operatorname{Im} A$ is closed, and we identify its precise range as $\operatorname{Im} A = \mathcal E_{\mathcal X}^0 \times \mathcal E_{\mathcal Y}^0$, along with the domain of the adjoint operator $A^*$, namely $(\mathcal E_{\mathcal X}^0 \times \mathcal E_{\mathcal Y}^0)^*$ (Lemma \ref{lem: Closed Image}). By the duality isomorphism on product spaces, we obtain the duality principle $A^*(\phi^*, \psi^*) = \phi^* + \psi^*$ (Lemma \ref{lem: Adjoint Operator}). Applying the closed range theorem, we then express the first variation $DJ_\varepsilon(\pi^*)$ of \ref{eq: RegularizedProblem} as the functional $A^*$, thereby yielding the global weak E–L equation (Proposition \ref{prop: Weak-EL-Equation}). The proof of the global strong E–L equation relies on the $D(L_{\mathcal X\times\mathcal Y})$-regularity of $\pi$ and a density lemma. The latter shows that $\mathcal E_0$ is dense in $\mathcal M_0$—that is, every $L^2$ state with zero total mass but possibly infinite energy can be approximated by states of zero total mass and finite energy (Lemma \ref{lem: DenseLemma}). Finally, leveraging the geometric structure of $\mathcal M_0$, we extend the equation to the $L^2$ space (Proposition \ref{prop: Strong-EL-Equation}).

As with most variational problems, the optimal coupling triple $(\pi^*, \phi^*, \psi^*)$ satisfies not only the saddle-point condition (Proposition \ref{prop: Saddle Point}),
$$
\mathcal L(\pi^*, \phi, \psi) \le \mathcal L(\pi^*, \phi^*, \psi^*) \le \mathcal L(\pi, \phi^*, \psi^*),\quad \forall \pi \in \mathcal E_0,\; \phi, \psi.
$$
but also strong duality (Proposition \ref{prop: StrongDuality}),
$$
\inf_{\pi} \sup_{\phi,\psi} \mathcal L(\pi, \phi, \psi)=\sup_{\phi,\psi} \inf_{\pi} \mathcal L(\pi, \phi, \psi)=\mathcal L(\pi^*, \phi^*, \psi^*).
$$
and this does not require any additional regularity of $\pi$. In fact, this shows that the variational problem \ref{eq: RegularizedProblem} intrinsically embeds a zero-sum game structure on a continuous strategy space. 

Finally, we reformulate the original signed transport problem as the operator equation involving the spectral Green's operator (Equation \ref{eq: DualOperaotr}):
\begin{equation}\label{eq: DualOperaotr}
	A G A^* z = \varepsilon b + A G c.
\end{equation}
The zero-mass conservation ensures that the Green's operator $G := L_{\mathcal X\times\mathcal Y}^{-1}$ is well-defined. By introducing the block operator $K := A G A^*$ and the constant $d := \varepsilon b + A G c$, Equation \ref{eq: DualOperaotr} is further transformed into a linear operator equation for the dual variables $z = (\phi, \psi)$:
\begin{equation}\label{eq: LinearDualOperaotr}
	K z = d.
\end{equation}
Proposition \ref{prop: RichardsonIteration} provides the convergence of the Richardson iteration, namely that the decoupling equation $z^{n+1} = (I - \alpha^* K)z^n + \alpha^* d$ converges to the unique $z^*$. However, this baseline model does not yield further information on the state of information transport. Exploiting the bi-marginal structure of signed OT \ref{eq: RegularizedProblem}, we further reduce the original transport plan and construct the error propagation equation $e_\psi^{n+1} = -K_{\mathcal Y\mathcal Y}^{-1} K_{\mathcal Y\mathcal X} \left( -K_{\mathcal X\mathcal X}^{-1} K_{\mathcal X\mathcal Y} e_\psi^n \right)$. This reduction reveals the intrinsic block structure of problem \ref{eq: RegularizedProblem}: $e_\psi^{n+1} = M e_\psi^n$, with convergence governed by the spectral radius condition $\rho(M) < 1$. For the recovery of the optimal coupling from marginal observations, however, the update rule must be well-defined, which requires $K_{\mathcal X\mathcal X} = P_\mathcal X G P_\mathcal X^*$ to be invertible. In addition, we must ensure that information is transported losslessly under the lifting operators $P_\mathcal X^*$ and $P_\mathcal Y^*$, that is, whether $\|P_\mathcal X^*\phi\| \ge C\|\phi\|$ holds. Lemma \ref{lem: CoercivityToInvertible}, Lemma \ref{lem: CoecivityOfLifitingOperator}, and Proposition \ref{prop: legitimacy ensures convergence} show that these conditions are equivalent and are automatically satisfied within the theoretical framework of this paper.

Figure \ref{fig: Framework} presents the overall framework of this paper.
\begin{figure}[H]
	\centering
	\includegraphics[width = 0.8\textwidth]{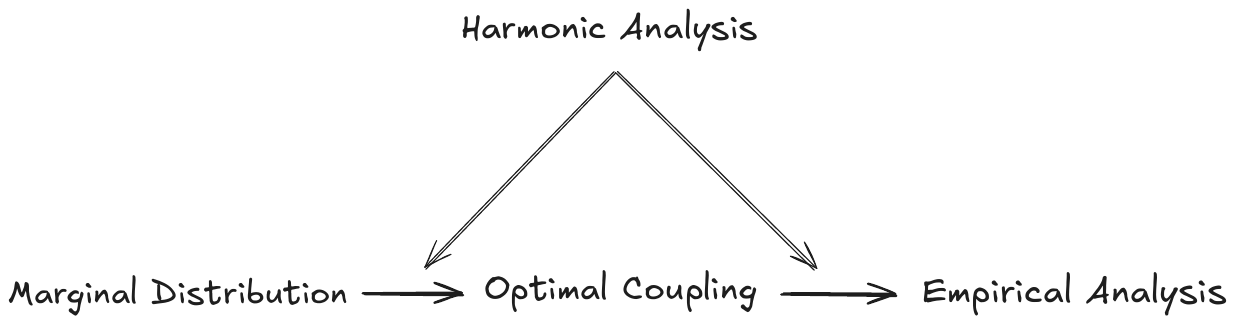}
	\caption{Framework}
	\label{fig: Framework}
\end{figure}

On the empirical side, our experimental paradigm differs markedly from conventional network analyses, which rely primarily on eigenvector-based (or singular-vector-based) centrality measures. The distinction is not merely a matter of a richer statistical toolkit or more refined traceability techniques. In our framework, banks and their funding positions are embedded in a state space endowed with a fixed group structure, for which the entire framework admits an invariant Fourier basis (i.e., characters). If network structures were to be treated merely as matrix operators, then even with the same set of banks across periods, their leading singular vectors $u_1$ would vary with the matrices $W_0, W_1$ as they evolve; any time-series analysis lacking a fixed benchmark reference would be inherently distorted. Moreover, the information that singular vectors can capture is rather limited. While singular values may characterize centrality, they can hardly capture phenomena such as long-term liquidity migration or regional risks (e.g., local shocks). Our concern is not simply whether the financial network structure has changed, but also to identify and track at what scale the structural transitions occur, and in what mode of risk propagation these transitions take place. 

\subsection{Literature Review}
This paper is largely motivated by the work of \citet{wang2025biggest}, who developed the Directed MaxEnt Method to estimate aggregate structural changes in banking networks during economic downturns. We abstract the underlying signed transport problem from their work, extend the algorithmic structure inherent in their estimation procedure, and provide a rigorous theoretical justification for it. 

The signed transport problem has been discussed in various contexts, for instance, by \citet{piccoli2014generalized} and \citet{mainini2012description}, who interpret signed measures as the setting in economic terms, where, for example, negative liabilities can be viewed as assets. The Wasserstein distance they adopt takes the form $\mathbb{W}_p(\mu, \nu) := W_p(\mu^+ + \nu^-,\ \nu^+ + \mu^-)$. \citet{mainini2012description} extend the optimal transport framework to the space of real-valued measures, with the primary motivation arising from the study of non-positive solutions to certain evolution PDEs. As for the choice of regularizers, it is by no means unique. \citet{blondel2018smooth} exploit the squared 2-norm to obtain sparse optimal plans; although their discussion remains within the probability simplex, their approach could in fact also play a role in the variational problem \ref{eq: RegularizedProblem}. The global flatness measure we derive relies on the continuum transport equation proposed by \citet{chizat2018unbalanced}. In their work, \citet{chizat2018unbalanced} develop a general theoretical framework for unbalanced optimal transport on nonnegative Radon measures, unifying the dynamic (fluid-dynamic) and static (Kantorovich) formulations, and establish their equivalence under suitable regularity conditions. For broader background and further intuition on optimal transport, we refer to \citet{villani2009optimal} and \citet{santambrogio2015optimal}.

\section{Entropy-Inspired Quantification of Flatness}\label{sect: entropy}
In this chapter, we introduce the motivation for studying the signed optimal transport problem. We construct curvature-based metrics to restore the missing compactness, while preserving as much as possible the flatness of the optimal coupling.
\subsection{Motivation}
\subsubsection{Classic Optimal Transport}
In the classical OT framework, we set $\pi\in X \times Y\subset \mathbb{R}\times \mathbb{R}$, this means that the distributions can take arbitrary real values, not merely positive ones. Let $\mathcal G$ denote the space of signed measures. Given marginal signed measures $\mu, \nu \in \mathbb R$ satisfying the mass conservation condition $\mu(X) = \nu(Y)$:
$$
\mathcal{G}=\inf_{\pi \in \mathcal{M}(X \times Y)} \left\{ \int c \, d\pi \;\Big|\; P_X\#\pi = \mu,\; P_Y\#\pi = \nu \right\},
$$
the feasible set is:
$$
\Pi(\mu,\nu) = \left\{ \pi \in X \times Y : P_X\#\pi = \mu,\; P_Y\#\pi = \nu \right\}.
$$
In the absence of nonnegativity constraints, the feasible set becomes an affine linear manifold. The solution takes the form $\pi = \mu \otimes \nu + q$, where $q$ is a signed measure with zero marginals. The optimization problem then becomes:
$$
\langle c, \pi \rangle = \langle c, \mu \otimes \nu \rangle + \langle c, q \rangle 
$$
Note that the first term is constant, so it suffices to optimize the second term. The zero-marginal measure $q$ satisfies the following equation:
$$
\int_y q_{xy} = 0, \,\,\, \int_x q_{xy} = 0, \,\,\, \forall x\in X,y\in Y.
$$
This problem is clearly ill-posed (Proposition \ref{prop: unboundedness}); that is, the classical OT formulation lacks compactness when faced with signed marginal measures. 
\begin{Prop}[Unboundedness Below]\label{prop: unboundedness}
	When $q \neq 0$, $\mathcal G = \varnothing$; when $\langle c, q \rangle = 0$, $|\mathcal G| = +\infty$.
\end{Prop}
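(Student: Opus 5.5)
We read the statement with $\mathcal G$ as the set of minimizers of the signed problem, so that ``$\mathcal G=\varnothing$'' means the infimum is not attained, and ``$|\mathcal G|=+\infty$'' means there are infinitely many minimizers. We interpret the two cases through the decomposition $\pi=\mu\otimes\nu+q$ established just above, with $q$ ranging over the linear space $\mathcal Q=\{q:\int_y q_{xy}=0,\ \int_x q_{xy}=0\}$ of zero-marginal signed measures. The hypotheses are then read as follows: in the first case there is some $q\in\mathcal Q$ with $\langle c,q\rangle\neq0$; in the second case $\langle c,q\rangle=0$ for every $q\in\mathcal Q$. The plan is a pure linearity argument. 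Since no sign constraint is imposed on $\pi$, the feasible set $\Pi(\mu,\nu)=\mu\otimes\nu+\mathcal Q$ is an affine space, and the objective $\langle c,\cdot\rangle$ is a linear functional restricted to it.

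First I show that $\mathcal Q$ is nontrivial as soon as $X$ and $Y$ each contain at least two points. Pick $x_1\neq x_2$ and $y_1\neq y_2$, and set
$$q_0=(\delta_{x_1}-\delta_{x_2})\otimes(\delta_{y_1}-\delta_{y_2}).$$
Both marginals of $q_0$ vanish, because each factor has zero total mass. Moreover
$$\langle c,q_0\rangle=c(x_1,y_1)-c(x_1,y_2)-c(x_2,y_1)+c(x_2,y_2).$$
In the continuous setting I would replace the Dirac masses by normalized indicators of small disjoint sets, so that $q_0$ is an honest density.

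In the first case, suppose $\langle c,q\rangle=a\neq0$ for some $q\in\mathcal Q$. Set $\pi_t=\mu\otimes\nu+t\,q$ for $t\in\mathbb R$. Each $\pi_t$ is feasible, since the marginals of $q$ vanish. Its cost is
$$\langle c,\pi_t\rangle=\langle c,\mu\otimes\nu\rangle+t\,a.$$
Letting $t\to-\operatorname{sign}(a)\cdot\infty$ drives the cost to $-\infty$. Hence the infimum is $-\infty$, and no feasible $\pi$ can attain it, i.e.\ $\mathcal G=\varnothing$. This is the ``unboundedness below'' of the title.

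In the second case, $\langle c,\pi\rangle=\langle c,\mu\otimes\nu\rangle$ holds for every feasible $\pi$. The objective is therefore constant on $\Pi(\mu,\nu)$, and every feasible plan is a minimizer. By the first step, $\mathcal Q\neq\{0\}$, so the affine space $\mu\otimes\nu+\mathbb R q_0$ already lies inside $\mathcal G$. This gives infinitely many minimizers, $|\mathcal G|=+\infty$.

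The main obstacle is not analytic but interpretive. The statement is literally phrased in terms of $q\neq0$ and $\langle c,q\rangle=0$ for a single $q$, whereas the argument must be organized around whether $\langle c,\cdot\rangle$ vanishes on all of $\mathcal Q$. I would make this reformulation explicit first. I would also state the mild standing assumption $|X|,|Y|\ge2$, since this is what guarantees $\mathcal Q\neq\{0\}$.
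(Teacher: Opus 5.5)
Your argument is the same as the paper's: move along the ray $\pi_0+tq$ in a zero-marginal direction $q$, on which the cost $\langle c,\pi_0\rangle+t\langle c,q\rangle$ is affine in $t$. Your additions are substantive rather than cosmetic.

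Read literally, the first clause of the statement is false. For $c(x,y)=f(x)+g(y)$ one has $\langle c,q\rangle=0$ for every $q\in\mathcal Q$, so every feasible plan is optimal even though $q\neq0$. Likewise, a single $q$ with $\langle c,q\rangle=0$ does not make the objective constant, contrary to what the paper's proof of the second case asserts. Your dichotomy, ``$\langle c,\cdot\rangle\not\equiv0$ on $\mathcal Q$'' versus ``$\langle c,\cdot\rangle\equiv0$ on $\mathcal Q$'', is the correct hypothesis. The paper also never checks $\mathcal Q\neq\{0\}$, which fails when $|X|=1$ or $|Y|=1$, since the feasible set is then a single point. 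Your $q_0$ and the assumption $|X|,|Y|\ge2$ close a real gap.

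One step needs repair: $\mu\otimes\nu$ is not a valid base point. For signed marginals of common mass $m=\mu(X)=\nu(Y)$, the marginals of $\mu\otimes\nu$ are $m\mu$ and $m\nu$. So $\pi_t=\mu\otimes\nu+tq$ is feasible only when $m=1$, and in the zero-mass setting the paper cares about, $\mu\otimes\nu$ has zero marginals. The paper's sentence ``$\pi=\mu\otimes\nu+q$'' contains the same slip, but its proof works with an unspecified feasible $\pi_0$.

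Use an arbitrary feasible $\pi_0$ instead. To keep the argument self-contained, exhibit one, e.g.\ $\pi_0(x,y)=\mu(x)/|Y|+\nu(y)/|X|-m/(|X|\,|Y|)$ in the discrete case, whose marginals are $\mu$ and $\nu$ for every $m$. Everything else goes through with $\langle c,\mu\otimes\nu\rangle$ replaced by $\langle c,\pi_0\rangle$.
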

\begin{proof}
	See Section \ref{sect: proof of unboundedness}.
\end{proof}

A natural idea is to add an entropy regularization term. However, in the classical entropy regularization approach, the entropy function $H(\rho)=-\int_{\Pi(\mu,\nu)} \rho\log\rho \,d\pi$ requires that the signed measure $\pi$ does not take values on the negative real half-axis $\mathbb{R}^-$, which prevents the above problem from being fully solved over the real domain. Nevertheless, the extension of $H(\rho)$ to the complex domain is well-defined. For example (with $\arg z = \pi$):
$$
\ln(-6) = \ln 6 + \ln(-1)  = \ln 6 + \pi i.
$$

\subsubsection{Complex Entropy} 
\begin{Lem} \label{lem: complexEntropy}
	For $\rho\in\mathbb C\setminus\{0\}$, we have:
	\begin{enumerate}
		\item[(1)] $H(\rho)=-\int_{\Pi(\mu,\nu)} \rho\log|\rho| \,d\pi -\pi i\int_{\Pi^-(\mu,\nu)} \rho\,d\pi$
		\item[(2)] $H(\rho)=-\int_{\Pi(\mu,\nu)} |\rho|\log|\rho| \,d\pi -2\int_{\Pi^-(\mu,\nu)} \rho\log|\rho| \,d\pi -\pi i\int_{\Pi^-(\mu,\nu)} \rho\,d\pi$.
	\end{enumerate}
\end{Lem}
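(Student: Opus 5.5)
The plan is to compute pointwise with the principal branch of the complex logarithm and then integrate. Although the lemma is stated for $\rho\in\mathbb C\setminus\{0\}$, the identities only make sense when $\rho$ is real-valued (a signed density), so I take that as the intended reading. I write $\Pi^-(\mu,\nu)$ for the set where $\rho<0$ and $\Pi^+(\mu,\nu)$ for the set where $\rho>0$. The sign pattern of the two formulas also fixes the branch convention: $\arg z=\pi$ for negative reals, as in the example $\ln(-6)=\ln 6+\pi i$ given just before the lemma.

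The first step is the pointwise identity for $\log\rho$:
\[
\log\rho=\log|\rho|+i\arg\rho=\log|\rho|+\pi i\,\mathbbm{1}_{\{\rho<0\}}.
\]
This holds because $\arg\rho=0$ on $\Pi^+$ and $\arg\rho=\pi$ on $\Pi^-$. Multiplying by $-\rho$ gives
\[
-\rho\log\rho=-\rho\log|\rho|-\pi i\,\rho\,\mathbbm{1}_{\{\rho<0\}}.
\]
Integrating over $\Pi(\mu,\nu)$ and using linearity of the integral, valid whenever $\rho\log|\rho|$ and $\rho$ are integrable, yields (1) directly. The indicator turns the second integral into one over $\Pi^-(\mu,\nu)$.

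For (2), I rewrite the real part of (1) in terms of $|\rho|$. On $\Pi^+$ we have $\rho=|\rho|$, and on $\Pi^-$ we have $\rho=-|\rho|$, so $|\rho|=\rho$ on $\Pi^+$ and $|\rho|=-\rho$ on $\Pi^-$. This gives
\[
\int_{\Pi}\rho\log|\rho|=\int_{\Pi}|\rho|\log|\rho|+\int_{\Pi^-}(\rho-|\rho|)\log|\rho|=\int_{\Pi}|\rho|\log|\rho|+2\int_{\Pi^-}\rho\log|\rho|.
\]
Negating this and substituting it into (1) gives (2). The imaginary term carries over unchanged.

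The main obstacle is not algebraic but one of well-definedness. I need the integrals to converge, so I assume $\rho\log|\rho|$ and $\rho$ lie in $L^1$, which justifies splitting the integral. I also need to fix the branch of the logarithm consistently, since a different choice such as $\arg=-\pi$ would flip the sign of the imaginary term. I would state these assumptions explicitly at the start. The zero set $\{\rho=0\}$ is excluded by hypothesis, and in any case contributes nothing under the convention $0\log 0=0$.
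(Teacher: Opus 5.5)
Your proof is correct and follows essentially the same route as the paper's. You split according to the sign of $\rho$ and use $\log\rho=\log|\rho|+\pi i$ on $\Pi^-(\mu,\nu)$ to obtain (1), then add and subtract $|\rho|\log|\rho|$ to obtain (2); the paper writes the correction as $\int_{\Pi(\mu,\nu)}(|\rho|-\rho)\log|\rho|\,d\pi$, which vanishes on $\Pi^+(\mu,\nu)$. Your explicit assumptions (real-valued $\rho$, integrability of $\rho$ and $\rho\log|\rho|$, and the branch $\arg=\pi$ on the negative axis) are left implicit in the paper, but they are exactly what its computation uses.
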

It is easy to check that the real part of the complex entropy $\mathcal{H}_{\text{abs}}(\rho) = -\int \rho \log|\rho| \, d\pi$ is singular at $\rho = 0$. Moreover, consistent with the situation described in Proposition \ref{prop: unboundedness}, the entropy-regularized problem with this function suffers from unboundedness: the objective function is driven downward along the optimization direction due to discontinuous jumps in the optimal coupling, so that $\mathcal G = \varnothing$ and the optimization problem admits no solution. Note that this ill-posedness arises from the particular choice of the complex entropy function $H(\rho)$ itself, rather than from signed measures themselves. The second term of $H(\rho)$ arises entirely from the contribution of negative-density regions. When $-2\int_{\Pi^-(\mu,\nu)} \rho\log|\rho| \,d\pi=0$, the entropy regularization reduces to the classical Shannon entropy. In this case, $H_{\text{abs}}(\rho) = H_{\text{abs}}(-\rho)$, meaning that the energy is invariant under a global sign flip—i.e., positive and negative densities are energetically equivalent. Nevertheless, local singularities and jumps in the coupling persist, and there remain infinitely many feasible solutions matching the marginals, with the objective function unbounded below. Interestingly, if the entropy takes the form $H = |\pi| \log \rho$, then the original transport problem admits a projected iterative solution; see \citet{junius2003solution}. The above discussion of complex entropy regularization also suggests that eliminating the singularity is of crucial importance.

\subsubsection{Flatness}
A regularization form compatible with the signed transport problem is bound to exist. For the missing compactness in the original problem, one may simply add a coercive function to restore it. However, not all forms of compactness are equally desirable. The real question is: what kind of compactness best preserves the data fidelity? This calls for a more objective criterion. We therefore return to the positive-probability setting and ask what features are generalizable. 

We identify the flat structure underlying the maximum entropy principle, as revealed by both the Sinkhorn algorithm and the complex entropy results, and take it as our starting point. Recall the classical entropy regularization. When the regularization is strong, the optimal coupling approaches the flattest possible distribution, namely the independent coupling of the marginals, which is a flat linear structure. As the regularization weakens, the flatness of the optimal coupling gradually diminishes, and the solution degenerates to that of the unregularized problem:
$$
\lim_{\varepsilon\rightarrow \infty}P_{ij} = \lim_{\varepsilon\rightarrow \infty}a_i e^{-C_{ij}/\epsilon} b_j \to \mu_i \nu_j 
$$
Moreover, the real part of $H(\rho)$ naturally encodes a preference for flatness: $-\int_{\Pi(\mu,\nu)} |\rho|\log|\rho| \,d\pi$. 

\subsection{Local Geometry of Entropy}
Since the notion of expectation is no longer valid in this setting, we cannot proceed as in the standard derivation of the Fisher information matrix. For the case $\int_\Omega \rho \, dx = \gamma$ with $\rho \in L^2$, we perform a Taylor expansion of $H(\rho)$ and obtain:
$$
\begin{aligned}H(\rho_\varepsilon) &= -\int_\Omega (\rho_0+\varepsilon f)\log(\rho_0+\varepsilon f)\,dx \\ &= -\int_\Omega \left[
	\rho_0\log\rho_0
	+
	\varepsilon(1+\log\rho_0)f
	+
	\frac{\varepsilon^2}{2}\frac{f^2}{\rho_0}
	+
	O(\varepsilon^3)
	\right] dx. \end{aligned}
$$
The original system is subject to the constraint $\int_\Omega p_0 \, dx = \gamma$. The mass conservation constraint (for a closed system) requires that $\int_\Omega \rho_0 + \varepsilon f \,dx=\gamma$, which implies:
$$
\int_\Omega f  \,dx=0.
$$
We set the reference density to $\rho_0 \equiv \frac{\gamma}{|\Omega|}$, which corresponds to a uniform distribution of mass over the entire domain $\Omega$. By a scaling transformation, we normalize it to $\rho_0 \equiv 1$. This allows us to reformulate $H(\rho_\varepsilon)$ as a measure of the local geometric (curvature) information of the Shannon entropy function: $H(\rho_\varepsilon) =  -\frac{\varepsilon^2}{2}  \int_\Omega f^2=-\frac{\varepsilon^2}{2}\|f\|_{L^2}^2$, which exhibits properties analogous to those of the Fisher information metric or the K–L divergence.

\subsection{Continuum Transport Equation}
To obtain further analytical properties, we reformulate the mass conservation constraint via the continuum transport equation for flow fields, which additionally requires $f$ to satisfy:
$$\partial_t\rho + \nabla\cdot(\rho v)=0 \;\Longrightarrow\; f = -\nabla\cdot(\rho_0 v).$$
From the small perturbation $\rho(x,t) = \rho_0 + \varepsilon f(x,t)$, we obtain $\partial_t (\rho_0 + \varepsilon f) + \nabla \cdot [(\rho_0 + \varepsilon f) v] = 0$. Since the reference density satisfies $\partial_t \rho_0 = 0$, we have $\varepsilon \partial_t f + \nabla \cdot (\rho_0 v) + \varepsilon \nabla \cdot (f v) = 0$. By the arbitrariness of the velocity field, we set $v = \nabla u = O(\varepsilon)$, and neglect the higher-order term $\varepsilon \nabla \cdot (f v)$, yielding $\partial_t f  = - \frac{1}{\varepsilon}\nabla \cdot (\rho_0 v)$. Setting $\tau = t / \varepsilon$ yields $\partial_t=\frac1\varepsilon\partial_\tau$, and the equation becomes:
$$
\partial_\tau f
=
-\nabla\cdot(\rho_0v)
$$
It is clear that $-\nabla \cdot (\rho_0 v)$ is also a form of perturbation. We accordingly define $g$ as the instantaneous density perturbation induced by the velocity field:
$$
g = -\nabla \cdot (\rho_0 v)
$$
It is worth noting that the above notation is introduced solely to avoid conceptual confusion. In fact, one may define $f = -\nabla \cdot (\rho_0 v)$ from the very beginning, with $u \in \mathcal{H}^1_0(\Omega)$, so that:
$$
f = -\nabla \cdot (\nabla u) = -\Delta u
$$
In other words, $f \in \text{Im}(-\Delta) = \{ -\Delta u : u \in \mathcal{H}^1_0(\Omega) \}$. Let the dual space of $\mathcal{H}^{1}_0(\Omega)$ be denoted by $(\mathcal{H}^1_0)^* = \mathcal{H}^{-1}_0(\Omega)$. Then $-\Delta: \mathcal{H}^1_0(\Omega) \to \mathcal{H}^{-1}_0(\Omega)$. This implies that $f \in \mathcal{H}^{-1}_0(\Omega)$. We endow $\mathcal{H}^{-1}_0(\Omega)$ with the norm $\|f\|_{\mathcal{H}^{-1}}^2 = \langle f, (-\Delta)^{-1} f \rangle$, which yields:
$$
\begin{aligned}
	\|f\|_{\mathcal{H}^{-1}_0}^2 
	= \langle f, (-\Delta)^{-1} f \rangle &= \langle -\Delta u, (-\Delta)^{-1} (-\Delta u) \rangle \\
	&= \langle -\Delta u, u \rangle  \\
	&= \int_\Omega |\nabla u|^2 \, dx 
\end{aligned}
$$

In information geometry, the K-L divergence (or, equivalently, the local Fisher information metric) measures the local curvature of the space of probability distributions. This corresponds to the norm $\|f\|_{L^2}^2$, which quantifies pixel-wise variations of the density function. However, for economic problems, particularly those concerning systemic risk and trading stability, the global flatness measure offers greater economic insight than local curvature. While the $L^2$-norm focuses solely on the magnitude of local perturbations of the density, the norm $\|f\|_{\mathcal{H}^{-1}}^2$ measures, from a global structural perspective, the cost of spatial movement required to generate such perturbations. It is more flexible than the $L^2$-norm and inherently embeds spatial correlations through the inverse of the Laplace operator. Flatness, in turn, serves as a measure of global risk exposure. For these reasons, we adopt the latter as the regularization term in the variational problem, in order to capture the deep frictions inherent in liquidity transport and matching directions.

Moreover, the choice of the $\|f\|_{\mathcal{H}^{-1}}^2$ implies that our analytical framework can naturally map bilateral trading positions into distinct risk modes. These modes are not merely a coarse decomposition; rather, they serve to precisely identify and trace the propagation paths of liquidity, that is, the transport structure of risk. For trading paths that account for a significant proportion of total positions, the flatness measure $\|f\|_{\mathcal{H}^{-1}}^2$ imposes a penalty. This mechanism compels banks to engage in an endogenous trade-off between transaction costs and risk exposure, thereby steering industry’s liquidity profile toward a configuration that is as flat as possible. This configuration, in turn, constitutes the micro-foundation for minimal systemic risk. In contrast, $\|f\|_{L^2}^2$ treats the underlying space as rigid, with no distance cost between points; its Fourier representation is given by $\|f\|_{L^2}^2 = \int |\hat f(\xi)|^2 d\xi$, and therefore it treats all frequencies equally.

The essence of $\|f\|_{\mathcal{H}^{-1}}^2$ lies in its Fourier representation:
$$
\int|\nabla u|^2=\int|\xi|^2|\widehat u|^2,
$$
where $|\xi|$ denotes frequency and $\widehat u$ the Fourier modes. The $\mathcal{H}^{-1}$-norm penalizes high-frequency fluctuations; consequently, the flatter the function (i.e., the more it is dominated by low frequencies), the smaller $\int_\Omega |\nabla u|^2$. In this case, the $\mathcal{H}^{-1}$-norm is also small, which is consistent with the maximum entropy principle, the latter also favors the most uniform and unbiased distribution given the constraints. Figure \ref{fig: Comparison of flatness} provides a visual comparison of this flatness difference: the left panel exhibits a highly folded structure with dense spikes, corresponding to a large $\mathcal{H}^{-1}$-norm; the right panel, by contrast, shows a smoother structure with a lower level of global risk exposure, whose risk transmission paths are more dispersed and more resilient. Moreover, $\int_\Omega |\nabla u|^2$ can be interpreted as a weighted sum of the energies of the Fourier modes, which serves as the most basic idea underlying the subsequent modal decomposition.
\begin{figure}[H]
	\centering
	\begin{subfigure}[b]{0.44\textwidth}
		\includegraphics[width=\textwidth]{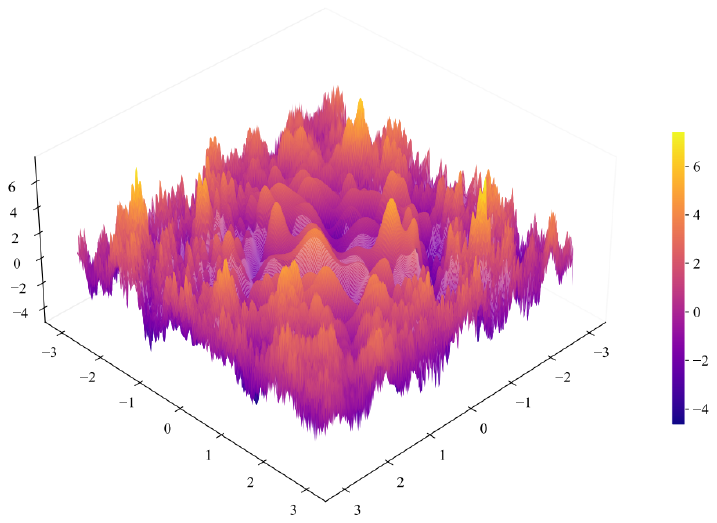}
		\caption{Highly Folded}
	\end{subfigure}
	\hspace{0.5cm}
	\begin{subfigure}[b]{0.45\textwidth}
		\includegraphics[width=\textwidth]{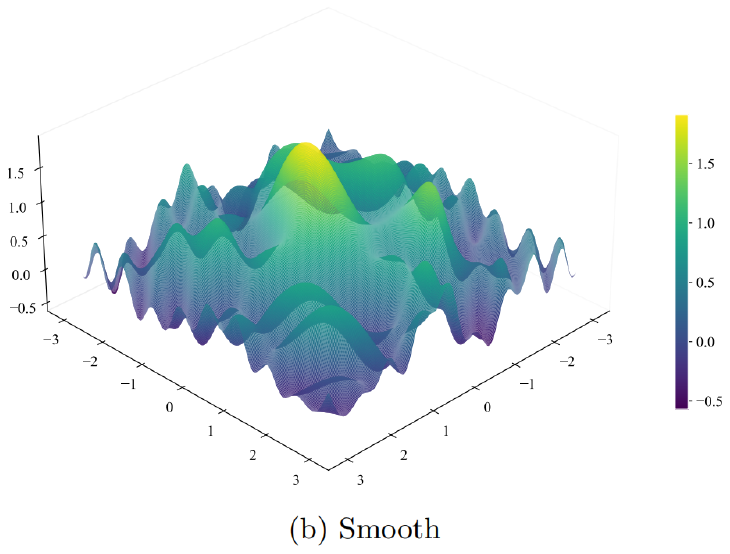}
		\caption{Smooth}
	\end{subfigure}
	\caption{Comparison of Flatness}
	\label{fig: Comparison of flatness}
\end{figure}

\section{Signed Optimal Transport}\label{sect: signedOT}
This section models the signed transport problem from the perspective of harmonic analysis\footnote{\citep{folland2016course} provides an excellent introduction.} and abstracts the previously introduced curvature measure as the Laplace operator, which then serves as the regularizer. 
\subsection{State Space}
Let the marginal spaces $(\mathcal{X}, \mu)$ and $(\mathcal{Y}, \nu)$ be locally compact abelian groups equipped with Haar measures $\mu$ and $\nu$, respectively, and let their Pontryagin dual groups be denoted by $\widehat{\mathcal{X}}$ and $\widehat{\mathcal{Y}}$. We begin by defining the space of signed measures $\mathcal{M}_0(\mathcal{X})$.

\begin{Dfn}[Zero-Mass State Space]
	$$
	\mathcal{M}_0(\mathcal{X}) = \left\{ p \in L^2(\mathcal{X}, \mu) : \int_{\mathcal{X}} p(x)\,d\mu(x) = 0 \right\}.
	$$
	i.e., $\mathcal{M}_0(\mathcal{X}) \neq \emptyset$ denotes the space of all signed densities with zero total mass.
\end{Dfn}

The space $\mathcal{M}_0(\mathcal{X})$ is a zero-mean hyperplane of codimension $1$ in $L^2(\mathcal{X}) = \int_{\widehat{\mathcal{X}}}^{\oplus} \mathbb{C} \cdot \chi_\xi \, d\hat{\mu}(\xi)$. Its definition retrospectively singles out the admissible locally compact group objects, such as compact groups. The space $\mathcal{M}_0(\mathcal{Y})$ is defined analogously. The total mass conservation condition $\int p = \int q = 0$ represents short-term liquidity conservation (total assets $=$ total liabilities).

\begin{remark}\label{remark: geometricIntuition}
	Geometrically, $L^2(\mathcal{X}) = \bigoplus\limits_{\xi \in \widehat{\mathcal{X}}} \mathbb{C} \cdot \chi_\xi = \text{span}\{1\} \oplus \mathcal{M}_0(\mathcal{X})$, and $\text{span}\{1\}$ correspond to the zero-frequency component $\xi=0$ in the Fourier spectrum. The space $\mathcal{M}_0(\mathcal{X})$, as its orthogonal complement, has vanishing spectrum at the origin (due to the trivial character $\chi_0(x) = 1$), i.e., $\mathcal{M}_0(\mathcal{X})$ retains all oscillatory modes at nonzero frequencies ($\xi\neq0$). 
	$$
	\hat{p}(0) = \int_{\mathcal{X}} p(x) \overline{\chi_0(x)}\,d\mu(x) = \int_{\mathcal{X}} p(x)\,d\mu(x) = 0.
	$$
	If $\text{span}\{1\}\subset L^2(\mathcal{X}\times\mathcal{Y})$ is required, one further needs the Haar measures to be finite, i.e., $\mathcal{X}$ and $\mathcal{Y}$ are compact.
	Moreover, for any $f\in L^2(\mathcal{X}\times\mathcal{Y})$, one may uniquely decompose it along the constant and zero-mean directions as $f=f_{\parallel}+f_{\perp}$, with $f_{\parallel}=a\mathbf{1}$ subject to $f-f_{\parallel}\in\mathcal M_0$, i.e., satisfying $\int_{\mathcal{X}\times\mathcal{Y}}(f-a)d\lambda=0$. Solving yields $a=\frac1{\lambda(\mathcal{X}\times\mathcal{Y})}\int_{\mathcal{X}\times\mathcal{Y}}f(z)d\lambda(x,y)$, which is the average of $f$ over $\mathcal{X}\times\mathcal{Y}$.
\end{remark}

\begin{remark}
	Plancherel theorem states that the Fourier transform preserves the $L^2$ inner product, i.e., $\mathcal{F}: L^2(\mathcal{X}) \to L^2(\widehat{\mathcal{X}})$ is an isometric isomorphism. Therefore, the Fourier-domain representation of $\mathcal{M}_0(\mathcal{X})$ is given by:
	$$
	\mathcal{M}_0(\mathcal{X}) \simeq \left\{ \hat{p} \in L^2(\widehat{\mathcal{X}}, \hat\mu) : \hat{p}(0) = 0 \right\}.
	$$
\end{remark}

\begin{remark}\label{remark: Invertible}
	The term $\hat{p}(0) = 0$ is used to remove the singularity of the equation at zero frequency. In the Fourier domain, the equation $Lp = f$ takes the form $\widehat{L p}(\xi) =m(\xi)\hat{p}(\xi) = \hat{f}(\xi)$ (with spectral multiplier $m(\xi) \ge 0$). The well-definedness of its inverse operator $\widehat{L^{-1} f}(\xi) = \frac{\hat{f}(\xi)}{m(\xi)}=\hat{p}(\xi)$ at $\xi = 0$ (where $m(\xi) \neq 0$ for $\xi \neq 0$) is guaranteed by $\hat{p}(0) = 0$, i.e., $\widehat{L^{-1} f}(0)=0$.
\end{remark}

\begin{Dfn}[Joint State Space and Transport Plan]
	Define the space of joint signed densities:
	$$
	\mathcal{M}_0(\mathcal{X}\times\mathcal{Y}) = \left\{ \pi \in L^2(\mathcal{X}\times\mathcal{Y}, \mu\otimes\nu) : \int_{\mathcal{X}\times\mathcal{Y}} \pi(x,y)\,d\mu(x)d\nu(y) = 0 \right\}.
	$$
	For given marginal densities $p \in \mathcal{M}_0(\mathcal{X})$ and $q \in \mathcal{M}_0(\mathcal{Y})$, define the set of transport plans (marginal constraints):
	$$
	\Pi_0(p,q) = \left\{ \pi \in \mathcal{M}_0(\mathcal{X}\times\mathcal{Y}) : P_{\mathcal X}\pi = p,\ P_{\mathcal Y}\pi = q \right\},
	$$
	where the projection operators $P_{\mathcal X}: L^2(\mathcal{X}\times\mathcal{Y}) \to L^2(\mathcal{X})$ and $P_{\mathcal Y}: L^2(\mathcal{X}\times\mathcal{Y}) \to L^2(\mathcal{Y})$ are defined by
	$$
	\begin{aligned}(P_{\mathcal X}\pi)(x) &= \int_{\mathcal{Y}} \pi(x,y)\,d\nu(y),\\
		(P_{\mathcal Y}\pi)(y) &= \int_{\mathcal{X}} \pi(x,y)\,d\mu(x). \end{aligned}
	$$
\end{Dfn}

\subsection{Fourier Transform}
On a locally compact abelian group $\mathcal X$, for each $\xi \in \widehat{\mathcal{X}}$, we define the character (irreducible unitary representation):
$$
\chi_\xi(x) = \langle \xi, x \rangle,
$$
where $\langle \cdot, \cdot \rangle: \widehat{\mathcal{X}} \times \mathcal{X} \to \mathbb{T}$ denotes the Pontryagin pairing. In the special case of Lie groups such as $\mathcal X = \mathbb{R}^d$, this pairing reduces to $\langle \xi, x \rangle = e^{2\pi i \xi \cdot x}$.

For $p \in \mathcal{M}_0(\mathcal{X}) \subset L^2(\mathcal{X})$, the Fourier transform and its inverse are given by:
$$
\begin{aligned} \hat{p}(\xi) &= \int_{\mathcal{X}} p(x) \overline{\chi_\xi(x)}\,d\mu(x), \\ p(x) &= \int_{\widehat{\mathcal{X}}} \hat{p}(\xi) \chi_\xi(x)\,d\hat{\mu}(\xi), \quad \xi \in \widehat{\mathcal{X}}.  \end{aligned}
$$
Here, $\hat{\mu}$ denotes the Plancherel measure on $\widehat{\mathcal X}$, i.e., the unique measure that makes the Fourier transform $\mathcal{F}: L^2(\mathcal{X}) \to L^2(\widehat{\mathcal{X}})$ an isometric isomorphism.

\begin{remark}
	The explicit form of the Plancherel measure depends on the structure of the group:
	\begin{enumerate}
		\item[(1)] If $\mathcal X$ is compact (e.g., $\mathbb T^d$), then $\widehat{\mathcal X}$ is discrete and $\hat{\mu}$ is the counting measure;
		\item[(2)] If $\mathcal X$ is non-compact (e.g., $\mathbb R^d$), then $\widehat{\mathcal X}$ is continuous and $\hat{\mu}$ is the Haar measure.
	\end{enumerate}
	In all derivations throughout this paper, we use the unified integral notation $\int_{\widehat{\mathcal{X}}} \cdots d\hat{\mu}(\xi)$ without distinguishing between the discrete and continuous cases.
\end{remark}

For the joint space $\mathcal X\times\mathcal Y$, since both $\mathcal X$ and $\mathcal Y$ are locally compact abelian groups, its Fourier structure is given by $\widehat{\mathcal X\times\mathcal Y}= \widehat{\mathcal X}\times\widehat{\mathcal Y}$, with joint frequencies $(\xi,\eta)$ and corresponding characters
$$
\chi_{(\xi,\eta)}(x,y)=\chi_\xi(x)\chi_\eta(y)
$$
In the Euclidean setting, for example, $e^{i\xi x}e^{i\eta y}=e^{i(\xi x+\eta y)}$.

For $\pi \in \mathcal{M}_0(\mathcal{X}\times\mathcal{Y})$, we define the two-dimensional Fourier transform:
$$
\hat{\pi}(\xi,\eta) = \int_{\mathcal{X}\times\mathcal{Y}} \pi(x,y) \overline{\chi_\xi(x)}\overline{\chi_\eta(y)}\,d\mu(x)d\nu(y), \quad (\xi,\eta) \in \widehat{\mathcal{X}}\times\widehat{\mathcal{Y}}.
$$
By Plancherel's theorem, Parseval's identity holds:
$$
\langle \pi_1, \pi_2 \rangle_{L^2(\mu\otimes\nu)} = \langle \hat{\pi}_1, \hat{\pi}_2 \rangle_{L^2(\hat{\mu}\otimes\hat{\nu})}.
$$
In particular, the norm is conserved:
$$
\int_{\mathcal{X}\times\mathcal{Y}} |\pi|^2\,d\mu d\nu = \int_{\widehat{\mathcal{X}}\times\widehat{\mathcal{Y}}} |\hat{\pi}(\xi,\eta)|^2\,d\hat{\mu}(\xi)d\hat{\nu}(\eta).
$$
From the mass conservation condition $\int q = \int \pi = 0$, we similarly obtain the induced zero-frequency conditions $\hat{q}(0)=0$ and $\hat{\pi}(0,0) = 0$. These conditions ensure that the subsequent variational equations do not diverge at $(\xi,\eta)=(0,0)$, since the numerator also vanishes at the origin, thereby rendering division in frequency space well-defined.
\subsection{Laplace Operator}
For a general locally compact abelian group $\mathcal{X}$, there is no natural Riemannian gradient structure $\nabla$, and hence no canonical Laplace–Beltrami operator $\Delta = \operatorname{div} \nabla$. To address this, we adopt the generator approach to abstractly define a second-order operator: we replace the Laplace operator with a translation-invariant self-adjoint operator that satisfies basic physical axioms, and characterize it completely in the Fourier domain via its spectral multiplier $m(\xi)$.

\begin{Dfn}[Generator]\label{def: generator}
	Let $L_{\mathcal{X}}: D(L_{\mathcal{X}}) \subset L^2(\mathcal{X}) \to L^2(\mathcal{X})$ be a linear operator satisfying the following conditions:
	\begin{enumerate}
		\item[(1)] Self-adjointness: $L_{\mathcal{X}}^* = L_{\mathcal{X}}$;
		\item[(2)] Nonnegativity: $\langle p, L_{\mathcal{X}} p \rangle_{L^2} \ge 0$ for all $p \in D(L_{\mathcal{X}})$;
		\item[(3)] Translation invariance: for any translation $\tau_a: x \mapsto x+a$, we have $L_{\mathcal{X}}(\tau_a p) = \tau_a(L_{\mathcal{X}} p)$.
	\end{enumerate}
	We additionally require the generator to satisfy the Markov property, i.e., $m_{\mathcal X}(0)=0$, and the spectral gap condition $m(\xi,\eta) \ge \lambda > 0$ for all $(\xi, \eta) \neq (0,0)$.
\end{Dfn}

By the spectral theory of translation-invariant self-adjoint operators, $L_{\mathcal X}$ is diagonalized in the Fourier domain as a multiplication operator:
$$
\widehat{L_{\mathcal{X}} p}(\xi) = m_{\mathcal{X}}(\xi) \hat{p}(\xi), \quad \xi \in \widehat{\mathcal{X}}
$$
where $m_{\mathcal{X}}: \widehat{\mathcal{X}} \to (0, \infty)$ is the spectral multiplier.

\begin{remark}
	In fact, the Fourier transform procedure uses only the third property of $L_{\mathcal{X}}$, since any translation-invariant linear operator is necessarily a convolution operator which is naturally diagonalized under the Fourier basis. The full set of defining properties, however, is included to better mimic the behavior of the Laplace operator. In other words, $L_{\mathcal X}$ serves as the mechanism of diffusion and equilibration in the space of capital flows, and the spectral multiplier $m(\xi)$ measures the recovery intensity of risk/liquidity perturbations at different scales. The flatness measure $\mathcal{H}(\pi)$ defined in this paper penalizes imbalances in the joint capital flow distribution across different spatial scales, where high frequencies $m(\xi,\eta)\gg1$ correspond to locally sharp fluctuations and liquidity imbalances.
\end{remark}

\begin{remark}[Fourier Coefficients]\label{remark: FourierCoefficient}
	The Fourier basis functions are the eigenfunctions of the generator $L_{\mathcal X}\chi_\xi=m_{\mathcal X}(\xi)\chi_\xi$; hence, for any function $f(x)=\int\hat f(\xi)\chi_\xi(x)d\xi$, the operator of $L_{\mathcal X}$ acts as:
	$$
	L_{\mathcal X}f=L_{\mathcal X}\int\hat f(\xi)\chi_\xi(x)d\xi=\int\hat f(\xi)L_{\mathcal X}\chi_\xi(x)d\xi=\int m_{\mathcal X}(\xi)\hat f(\xi)\chi_\xi(x)d\xi
	$$
	That is, the Fourier coefficients are:
	$$
	\widehat{L_{\mathcal X}f}(\xi)=m_{\mathcal X}(\xi)\hat f(\xi).
	$$
	In particular, the zero frequency satisfies $m_{\mathcal X}(0)=0$ where no diffusion is generated. The operator acts nontrivially only on the nonzero oscillatory modes, i.e., $L_{\mathcal X}\mathbf1=0$.
\end{remark}

To ensure $L_{\mathcal{X}} p \in L^2(\mathcal{X})$, and to satisfy $\widehat{L_{\mathcal{X}} p}(\xi) = m(\xi)\hat{p}(\xi)$ together with Plancherel's theorem, we define the operator domain $D(L_{\mathcal{X}})$:
$$
D(L_{\mathcal{X}}) = \left\{ p \in L^2(\mathcal{X}) : \int_{\widehat{\mathcal{X}}} m_{\mathcal{X}}(\xi)^2 |\hat{p}(\xi)|^2\,d\hat{\mu}(\xi) < \infty \right\}.
$$
Combined with the zero-mean condition, this yields the physical state space on which the generator acts:
$$
\mathcal{M}_0(\mathcal{X}) \cap D(L_{\mathcal{X}})
= \left\{ p \in L^2(\mathcal{X}) : \hat{p}(0) = 0,\; \int_{\widehat{\mathcal{X}}} m_{\mathcal{X}}(\xi)^2 |\hat{p}(\xi)|^2\,d\hat{\mu}(\xi) < \infty \right\}.
$$

The following two examples illustrate that if $\mathcal X$ happens to possess a differential geometric structure, then $L_{\mathcal X}$ should degenerate to $-\Delta_{\mathcal X}$.

\begin{example}[Riemannian Special Case]
	If $\mathcal{X}$ is an abelian Lie group equipped with a left-invariant Riemannian metric, then we take $L_{\mathcal{X}} = -\Delta_{\mathcal{X}}$ (the negative Laplace–Beltrami operator). In this case, the spectral multiplier is given by
	$$
	\begin{aligned}
		m_{\mathcal{X}}(\xi) &= |\xi|^2,\\  \widehat{\Delta_{\mathcal{X}} p}(\xi) &= -|\xi|^2 \hat{p}(\xi) 
	\end{aligned}
	$$
	where $|\xi|$ denotes the norm on the dual group. For example:
	\begin{enumerate}
		\item[(1)] $\mathcal{X} = \mathbb{R}^d$: $\Delta = \sum_{i=1}^d \partial_i^2$, $\widehat{\Delta p}(\xi) = -|\xi|^2 \hat{p}(\xi)$.
		\item[(2)] $\mathcal{X} = \mathbb{T}^d$: $\xi \in \mathbb{Z}^d$, $\widehat{\Delta p}(\xi) = -|\xi|^2 \hat{p}(\xi)$, where $|\xi|^2 = \xi_1^2 + \cdots + \xi_d^2$.
	\end{enumerate}
\end{example}

\begin{example}[Laplace Operator in the Distributional Sense]
	If $p \notin D(\Delta)$ and $\mathcal X$ is an abelian Lie group, then for $\phi \in C_c^\infty(\mathcal{X})$, we define $\langle \Delta p, \phi \rangle = \langle p, \Delta \phi \rangle$. In this case, $\widehat{\Delta p}(\xi) = -m(\xi)\hat{p}(\xi)$ still holds in the distributional sense.
\end{example}

In the Euclidean setting, $\Delta_{x,y} = \Delta_x + \Delta_y$, and the total generator is the sum of two independent generators, meaning that the system evolves independently in the two directions. We expect the joint generator $L_{\mathcal{X}\times\mathcal{Y}}$ to possess a similar property, where $L_{\mathcal{X}} \otimes I$ denotes an operator acting only in the $x$-direction, independently of the $y$-direction. Taking the transport matrix as an example, $(L_X \otimes I)$: for each row (fixing a certain $y$), one applies $L_{\mathcal{X}}$ to all column data in that row, while different columns do not interact with one another.

\begin{Dfn}[Joint Generator]
	For the product space $\mathcal{X} \times \mathcal{Y}$, define the joint generator
	$$
	L_{\mathcal{X}\times\mathcal{Y}} = L_{\mathcal{X}} \otimes I + I \otimes L_{\mathcal{Y}}.
	$$
	Here $L_{\mathcal X}\otimes I$ denotes that
	$$
	(L_{\mathcal X}\otimes I)\pi(x,y)=L_{\mathcal X}[\pi(\cdot,y)](x).
	$$
\end{Dfn}

\begin{Lem}\label{lem: UnitedGeneratorFourierTrans}
	In the Fourier domain,
	$$
	\widehat{L_{\mathcal{X}\times\mathcal{Y}} \pi}(\xi,\eta) = \big( m_{\mathcal{X}}(\xi) + m_{\mathcal{Y}}(\eta) \big) \hat{\pi}(\xi,\eta).
	$$
\end{Lem}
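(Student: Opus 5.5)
The plan is to compute the two-dimensional Fourier transform of each summand of $L_{\mathcal X\times\mathcal Y}=L_{\mathcal X}\otimes I+I\otimes L_{\mathcal Y}$ separately and then add the results by linearity. Throughout we work with $\pi$ in the natural domain
$$
D(L_{\mathcal X\times\mathcal Y})=\Bigl\{\pi\in L^2(\mathcal X\times\mathcal Y):\int (m_{\mathcal X}(\xi)+m_{\mathcal Y}(\eta))^2|\hat\pi(\xi,\eta)|^2\,d\hat\mu\,d\hat\nu<\infty\Bigr\}.
$$
The Fourier transform on $\mathcal X\times\mathcal Y$ factors as a Fourier transform in $x$ followed by one in $y$, since $\chi_{(\xi,\eta)}(x,y)=\chi_\xi(x)\chi_\eta(y)$. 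By Fubini,
$$
\hat\pi(\xi,\eta)=\int_{\mathcal Y}\Bigl(\int_{\mathcal X}\pi(x,y)\overline{\chi_\xi(x)}\,d\mu(x)\Bigr)\overline{\chi_\eta(y)}\,d\nu(y)=\int_{\mathcal Y}\widehat{\pi(\cdot,y)}(\xi)\,\overline{\chi_\eta(y)}\,d\nu(y).
$$
This holds at least for $\pi\in L^1\cap L^2$. For general $L^2$ functions we use the partial Fourier transform $\mathcal F_x\otimes I$, which is unitary on $L^2(\mathcal X\times\mathcal Y)\cong L^2(\mathcal Y;L^2(\mathcal X))$. The identity $\mathcal F_{\mathcal X\times\mathcal Y}=(I\otimes\mathcal F_{\mathcal Y})(\mathcal F_{\mathcal X}\otimes I)$ then extends from the dense subspace $L^1\cap L^2$ by continuity.

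For the first summand, fix $y$ and apply the multiplier characterisation of $L_{\mathcal X}$ (Remark \ref{remark: FourierCoefficient}) to the slice $\pi(\cdot,y)$. This gives $\widehat{(L_{\mathcal X}\otimes I)\pi(\cdot,y)}(\xi)=m_{\mathcal X}(\xi)\,\widehat{\pi(\cdot,y)}(\xi)$ for $\nu$-a.e. $y$. Taking the $y$-Fourier transform, $m_{\mathcal X}(\xi)$ does not depend on $y$ and factors out, so
$$
\widehat{(L_{\mathcal X}\otimes I)\pi}(\xi,\eta)=m_{\mathcal X}(\xi)\hat\pi(\xi,\eta).
$$
By symmetry, with the roles of $x$ and $y$ exchanged,
$$
\widehat{(I\otimes L_{\mathcal Y})\pi}(\xi,\eta)=m_{\mathcal Y}(\eta)\hat\pi(\xi,\eta).
$$
Adding these and using linearity of $\mathcal F$ gives the claim. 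Equivalently, in operator language, $L_{\mathcal X}\otimes I=(\mathcal F_{\mathcal X}\otimes I)^{-1}M_{m_{\mathcal X}}(\mathcal F_{\mathcal X}\otimes I)$, and conjugating by $I\otimes\mathcal F_{\mathcal Y}$, which commutes with multiplication by a function of $\xi$ alone, yields multiplication by $m_{\mathcal X}(\xi)$ on $\widehat{\mathcal X}\times\widehat{\mathcal Y}$.

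The main obstacle is the domain and measurability bookkeeping. One must ensure that for a.e. $y$ the slice $\pi(\cdot,y)$ lies in $D(L_{\mathcal X})$, that $y\mapsto L_{\mathcal X}[\pi(\cdot,y)]$ is a measurable $L^2$-valued map, and that the interchange of integrals is legitimate. I would sidestep most of this by defining $L_{\mathcal X}\otimes I$ spectrally, as the closure of the operator acting on the algebraic tensor product $D(L_{\mathcal X})\odot L^2(\mathcal Y)$. On elementary tensors $p\otimes r$ the identity is immediate:
$$
\widehat{L_{\mathcal X}p\otimes r}=m_{\mathcal X}(\xi)\hat p(\xi)\hat r(\eta).
$$
It extends by linearity, and then by closedness of multiplication operators to the full domain. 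The domain condition is handled as follows. Because $m_{\mathcal X},m_{\mathcal Y}\ge0$, we have $(m_{\mathcal X}+m_{\mathcal Y})^2\ge m_{\mathcal X}^2,m_{\mathcal Y}^2$. Hence $\pi\in D(L_{\mathcal X\times\mathcal Y})$ implies $\pi$ lies in the domains of both summands, so the sum is defined and the formula holds on all of $D(L_{\mathcal X\times\mathcal Y})$.
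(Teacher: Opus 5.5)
Your proposal is correct and follows essentially the same route as the paper's proof: apply the multiplier identity of Remark~\ref{remark: FourierCoefficient} slice by slice in $x$, use Fubini to pull $m_{\mathcal X}(\xi)$ out of the $y$-transform, treat $I\otimes L_{\mathcal Y}$ symmetrically, and add by linearity. Your extra domain bookkeeping goes beyond what the paper does and is sound, with one small caveat: identifying the domain of the closure of $L_{\mathcal X}\odot I$ with the full multiplier domain $\{\int m_{\mathcal X}^2|\hat\pi|^2<\infty\}$ needs more than closedness of the multiplication operator. It uses that $D(L_{\mathcal X})\odot L^2(\mathcal Y)$ is a core, which holds because elementary tensors are dense in $L^2(\widehat{\mathcal X},(1+m_{\mathcal X}^2)\,d\hat\mu)\otimes L^2(\widehat{\mathcal Y})$.
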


\begin{Cor}
	In the Riemannian special case,
	$$
	\widehat{\Delta_{x,y}\pi}(\xi,\eta) = -\big( |\xi|^2 + |\eta|^2 \big) \hat{\pi}(\xi,\eta).
	$$
\end{Cor}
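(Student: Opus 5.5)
The corollary should follow directly from Lemma \ref{lem: UnitedGeneratorFourierTrans} once we fix the Riemannian choice of the marginal generators. In the Riemannian special case we take $L_{\mathcal X} = -\Delta_x$ and $L_{\mathcal Y} = -\Delta_y$, with spectral multipliers $m_{\mathcal X}(\xi) = |\xi|^2$ and $m_{\mathcal Y}(\eta) = |\eta|^2$, as recorded in the Riemannian Special Case example. The only extra ingredient is the identity $L_{\mathcal X\times\mathcal Y} = -\Delta_{x,y}$, which comes from the Euclidean splitting $\Delta_{x,y} = \Delta_x + \Delta_y$ noted before the definition of the joint generator.

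The identity itself needs a word of justification. The operator $-\Delta_x$ acting on a function of $(x,y)$ differentiates only in the $x$-variables. This is exactly $(L_{\mathcal X}\otimes I)\pi(x,y) = L_{\mathcal X}[\pi(\cdot,y)](x)$, and the same holds in $y$. More generally, for an abelian Lie group product carrying the product left-invariant metric, the Laplace–Beltrami operator of the product splits as the sum of the factor Laplacians.

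Next I would apply Lemma \ref{lem: UnitedGeneratorFourierTrans}, which gives
\[
\widehat{-\Delta_{x,y}\pi}(\xi,\eta) = \big(|\xi|^2 + |\eta|^2\big)\hat\pi(\xi,\eta).
\]
Linearity of the Fourier transform then supplies the minus sign and yields the stated formula.

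The only point needing care is the domain. For $\pi$ in $D(L_{\mathcal X\times\mathcal Y})$, the identity holds in $L^2(\hat\mu\otimes\hat\nu)$. For $\pi$ outside this domain, I would interpret it distributionally, as in the example on the Laplace operator in the distributional sense. To do so, pair $\Delta_{x,y}\pi$ with test functions $\phi \in C_c^\infty$, use $\widehat{\Delta\phi} = -(|\xi|^2+|\eta|^2)\hat\phi$, and conclude by Parseval's identity. No step is a genuine obstacle; the substance lies in the lemma already assumed.
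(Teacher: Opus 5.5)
Your proposal is correct and matches the paper's route. The paper states this corollary without a separate proof, as the immediate specialization of Lemma \ref{lem: UnitedGeneratorFourierTrans} with $m_{\mathcal X}(\xi)=|\xi|^2$, $m_{\mathcal Y}(\eta)=|\eta|^2$ and $L_{\mathcal X\times\mathcal Y}=-\Delta_{x,y}$; your justification of the product-Laplacian splitting and your remarks on the domain and the distributional interpretation are sound additions that the paper does not spell out.
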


\begin{Lem}[Fourier Characterization of Marginal Projections]\label{lem: maginalPorjFourierRepret}
	For any $\xi \in \widehat{\mathcal{X}}$ and $\eta \in \widehat{\mathcal{Y}}$,
	$$
	\begin{aligned}
		\widehat{P_{\mathcal X}\pi}(\xi) &= \hat{\pi}(\xi, 0)\\
		\widehat{P_{\mathcal Y}\pi}(\eta) &= \hat{\pi}(0, \eta).
	\end{aligned}
	$$
\end{Lem}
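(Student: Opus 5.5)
The plan is to compute $\widehat{P_{\mathcal X}\pi}(\xi)$ straight from the definitions and reduce it to the two-dimensional transform at $\eta=0$ with Fubini, using $\chi_0\equiv 1$. I would first do this for $\pi\in L^1\cap L^2(\mathcal X\times\mathcal Y)$, where every integral converges absolutely, and then extend by density and continuity. The statement for $P_{\mathcal Y}$ follows by exchanging the roles of the two factors.

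For the computation, take $\pi\in L^1(\mathcal X\times\mathcal Y)\cap L^2(\mathcal X\times\mathcal Y)$. By Fubini, $P_{\mathcal X}\pi(x)=\int_{\mathcal Y}\pi(x,y)\,d\nu(y)$ is defined for a.e.\ $x$ and lies in $L^1(\mathcal X)$. Then
\[
\widehat{P_{\mathcal X}\pi}(\xi)=\int_{\mathcal X}\int_{\mathcal Y}\pi(x,y)\,d\nu(y)\,\overline{\chi_\xi(x)}\,d\mu(x)
=\int_{\mathcal X\times\mathcal Y}\pi(x,y)\overline{\chi_\xi(x)}\,\overline{\chi_0(y)}\,d\mu\,d\nu=\hat\pi(\xi,0).
\]
Here I insert $\overline{\chi_0(y)}=1$, since the trivial character of $\widehat{\mathcal Y}$ is identically one, and Fubini is justified because $|\pi\,\overline{\chi_\xi}\,\overline{\chi_0}|=|\pi|\in L^1$. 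The same computation, integrating over $\mathcal X$ instead, gives $\widehat{P_{\mathcal Y}\pi}(\eta)=\hat\pi(0,\eta)$.

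The main obstacle is extending from $L^1\cap L^2$ to all of $L^2$. The operator $P_{\mathcal X}$ is bounded $L^2(\mathcal X\times\mathcal Y)\to L^2(\mathcal X)$ only when $\nu(\mathcal Y)<\infty$, with $\|P_{\mathcal X}\pi\|_{L^2}\le \nu(\mathcal Y)^{1/2}\|\pi\|_{L^2}$ by Cauchy--Schwarz. This is consistent with Remark \ref{remark: geometricIntuition}: one needs $\mathbf 1\in L^2$, so the groups are compact. I will therefore work in the compact setting, where $L^1\cap L^2=L^2$ and the computation above already holds for every $\pi$ with no limiting argument. For a noncompact $\mathcal Y$ I would read the identity for $\pi\in L^1\cap L^2$, or in the distributional sense, and remark on this. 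If a limiting argument were needed, I would use that both sides are continuous in $\pi$ as maps into $L^2(\widehat{\mathcal X})$ and pass to the limit along a dense sequence. In the compact case the only slightly delicate point is that $\hat\pi(\cdot,0)$ is the restriction of an $L^2(\hat\mu\otimes\hat\nu)$ function to a slice. Since $\widehat{\mathcal Y}$ is discrete with counting measure, the slice $\eta=0$ has positive measure, so this restriction is well defined, and it coincides with the pointwise formula used above.
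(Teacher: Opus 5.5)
Your proposal is correct and takes the same route as the paper: expand $\widehat{P_{\mathcal X}\pi}(\xi)$ from the definitions, interchange the integrals by Fubini, and insert $\overline{\chi_0(y)}\equiv 1$ to recognize $\hat\pi(\xi,0)$, with the $P_{\mathcal Y}$ case following by symmetry. Your extra care fills in details the paper leaves implicit without changing the argument: you justify Fubini through $L^2\subset L^1$ on compact groups, and you note that the slice $\eta=0$ is meaningful because $\widehat{\mathcal Y}$ is then discrete.
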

\begin{proof}
	We prove the case for $P_{\mathcal X}\pi$. By the definition of the Fourier transform $\widehat{P_{\mathcal X}\pi}(\xi)= \int_{\mathcal{X}} (P_{\mathcal X}\pi)(x) \overline{\chi_\xi(x)}\,d\mu(x)$, and substituting the definition of the marginal projection $(P_{\mathcal X}\pi)(x) = \int_{\mathcal{Y}} \pi(x,y)\,d\nu(y)$, we obtain
	$$
	\widehat{P_{\mathcal X}\pi}(\xi)
	= \int_{\mathcal{X}}\int_{\mathcal{Y}} \pi(x,y) \overline{\chi_\xi(x)}\,d\nu(y)d\mu(x).
	$$
	Applying Fubini's theorem to interchange the order of integration, and noting that $\chi_0(y) \equiv 1$ (when $\eta=0$), we have $\overline{\chi_\eta(y)}\big|_{\eta=0} = 1$. Hence,
	$$
	\begin{aligned}
		\widehat{P_{\mathcal X}\pi}(\xi) &=\int_{\mathcal{X}\times\mathcal{Y}}\pi(x,y)\overline{\chi_\xi(x)}d\mu d\nu \\&= \int_{\mathcal{X}\times\mathcal{Y}} \pi(x,y) \overline{\chi_\xi(x)} \cdot 1 \, d\mu(x)d\nu(y) \\&= \int_{\mathcal{X}\times\mathcal{Y}} \pi(x,y) \overline{\chi_\xi(x)} \overline{\chi_0(y)}\,d\mu(x)d\nu(y) \\&= \hat{\pi}(\xi, 0).
	\end{aligned}
	$$
	The proof for $\widehat{P_{\mathcal Y}\pi}(\eta) = \hat{\pi}(0, \eta)$ follows analogously.
\end{proof}

\begin{remark}\label{remark: fixedMarginalDistrib}
	Lemma \ref{lem: maginalPorjFourierRepret} is equivalent to
	$$
	\begin{aligned}\hat{\pi}(\xi, 0) &= \hat{p}(\xi), \\ \hat{\pi}(0, \eta) &= \hat{q}(\eta). \end{aligned}
	$$
	This means that the Fourier transform $\hat{\pi}(\xi,\eta)$ of the joint distribution is completely fixed by the marginals on the two coordinate axes, occupying only the zero-frequency axes. Moreover, when $\mathcal{X}$ and $\mathcal{Y}$ are compact abelian groups, the spectra on $\widehat{\mathcal{X}}$ and $\widehat{\mathcal{Y}}$ are given by discrete counting measures, and we may select frequency slices $\xi=0$ or $\eta=0$ to truncate the energy:
	$$
	\widehat{\mathcal{X}} \times \widehat{\mathcal{Y}}=\big(\{0\} \times \widehat{\mathcal{Y}}\big)\cup\big(\widehat{\mathcal{X}} \times \{0\}\big)\cup\big(\widehat{\mathcal{X}} \setminus \{0\}\big) \times\big(\widehat{\mathcal{Y}} \setminus \{0\}\big).
	$$
\end{remark}

\subsection{Dirichlet Energy and Regularization}
\subsubsection{Energy Space}
\begin{Dfn}[Energy Space]
	The energy space induced by the generator is given by $\mathcal E$, which serves as the natural domain of the Dirichlet energy:
	$$
	\mathcal E = D(L_{\mathcal X\times\mathcal Y}^{1/2}) = \left\{ \pi \in L^2(\mathcal{X}\times\mathcal{Y}) : \int m(\xi,\eta)|\hat\pi(\xi,\eta)|^2\,d\hat\mu d\hat\nu < \infty\right\}. 
	$$
	where $m(\xi,\eta) = m_{\mathcal X}(\xi) + m_{\mathcal Y}(\eta)$. This is the minimal space on which $\langle\pi,L_{\mathcal X\times\mathcal Y}\pi\rangle<\infty$ is finite, i.e., the space of functions with finite energy. 
\end{Dfn}

\begin{remark}\label{remark: halfGeneratorIsSelfAdjointOpe}
	The spectral theorem implies that if $L_{\mathcal X\times\mathcal Y}$ is self-adjoint and $L_{\mathcal X\times\mathcal Y}\geq 0$, then $\sigma(L_{\mathcal X\times\mathcal Y})\subset[0,\infty)$. Moreover, $L_{\mathcal X\times\mathcal Y}^{1/2}:\mathcal E \to L^2(\mathcal{X}\times\mathcal{Y})$ is automatically self-adjoint with $\sigma(L_{\mathcal X\times\mathcal Y}^{1/2})\subset[0,\infty)$.
\end{remark}

Recall the domain of the Laplace operator, i.e., the space on which $L_{\mathcal X\times\mathcal Y}\pi\in L^2$:
$$
D(L_{\mathcal X\times\mathcal Y}) = \left\{\pi\in L^2(\mathcal{X}\times\mathcal{Y}) : \int m(\xi,\eta)^2|\hat\pi(\xi,\eta)|^2\,d\hat\mu d\hat\nu < \infty\right\}.
$$

\begin{Lem}[Relationships between Energy Spaces]\label{lem: RelationshipsBetweenEnergySpaces}
	The relationships between the spaces are given by
	$$
	D(L_{\mathcal X\times\mathcal Y}) \subset \mathcal E = D(L_{\mathcal X\times\mathcal Y}^{1/2}) \subset L^2(\mathcal X\times\mathcal Y)
	$$
\end{Lem}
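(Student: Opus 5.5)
The plan is to move everything to the Fourier side, where both domains are described by weighted $L^2$ conditions on $\hat\pi$. By Plancherel, $\pi\in L^2(\mathcal X\times\mathcal Y)$ if and only if $\int|\hat\pi|^2\,d\hat\mu d\hat\nu<\infty$. By definition, $\pi\in\mathcal E$ requires $\int m|\hat\pi|^2<\infty$ and $\pi\in D(L_{\mathcal X\times\mathcal Y})$ requires $\int m^2|\hat\pi|^2<\infty$, where $m=m_{\mathcal X}+m_{\mathcal Y}\ge 0$ by Lemma \ref{lem: UnitedGeneratorFourierTrans} and the nonnegativity of the marginal multipliers. So both inclusions reduce to pointwise comparisons of the weights $1$, $m$ and $m^2$ on $\widehat{\mathcal X}\times\widehat{\mathcal Y}$.

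The second inclusion, $\mathcal E\subset L^2$, is immediate, since $\mathcal E$ is defined as a subset of $L^2(\mathcal X\times\mathcal Y)$. There is nothing to prove beyond noting that the square root $L^{1/2}_{\mathcal X\times\mathcal Y}$ is the multiplier $m^{1/2}$ (Remark \ref{remark: halfGeneratorIsSelfAdjointOpe}). Hence its natural domain is exactly the set of $L^2$ functions with $\int m|\hat\pi|^2<\infty$.

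For the first inclusion, take $\pi\in D(L_{\mathcal X\times\mathcal Y})$ and split the dual group into $S_1=\{m\le 1\}$ and $S_2=\{m>1\}$. On $S_1$ we have $m|\hat\pi|^2\le|\hat\pi|^2$, which is integrable because $\pi\in L^2$. On $S_2$ we have $m<m^2$, so $m|\hat\pi|^2\le m^2|\hat\pi|^2$, which is integrable by assumption. Equivalently, the pointwise bound $m\le\tfrac12(1+m^2)$ gives
\[
\int m|\hat\pi|^2\,d\hat\mu d\hat\nu\le \tfrac12\|\pi\|_{L^2}^2+\tfrac12\|L_{\mathcal X\times\mathcal Y}\pi\|_{L^2}^2<\infty,
\]
so $\pi\in\mathcal E$. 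This also shows the inclusion is continuous with respect to the graph norm, which may be useful later.

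The only point needing care, and the main (mild) obstacle, is to make sure the Fourier characterization of $D(L_{\mathcal X\times\mathcal Y})$ is legitimate for the joint generator. This means checking that $L_{\mathcal X}\otimes I+I\otimes L_{\mathcal Y}$, on its natural domain, is the multiplication operator by $m$, so that $m\ge0$ holds everywhere and no cross terms arise. I would settle this by invoking Lemma \ref{lem: UnitedGeneratorFourierTrans} together with Plancherel, and by defining the domains through the multiplier, as the paper does. Once that is accepted, the argument is purely measure-theoretic, and the spectral gap and Markov conditions are not needed.
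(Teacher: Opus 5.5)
Your proof is correct and follows essentially the same route as the paper. You pass to the Fourier side, split the dual group into $\{m\le 1\}$ and $\{m>1\}$, and bound $m|\hat\pi|^2$ by $|\hat\pi|^2$ and by $m^2|\hat\pi|^2$ respectively, with the second inclusion holding by definition; your additional pointwise bound $m\le\tfrac12(1+m^2)$ also gives continuity of the embedding $D(L_{\mathcal X\times\mathcal Y})\hookrightarrow\mathcal E$ in the graph norms, which the paper does not record.
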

Consequently, the Dirichlet energy functional need not be confined to the strong solution space $\mathcal{M}_0(\mathcal{X}) \cap D(L_{\mathcal{X}})$ defined earlier; it can be defined in a more relaxed setting, for instance, merely in the quadratic-form sense. Lemma \ref{lem: RelationshipsBetweenEnergySpaces} shows that $\mathcal E$ is precisely the set of all square-integrable functions with finite energy.

\begin{example}[Sobolev Nesting for the Laplace Operator]
	Let $\Omega \subset \mathbb{R}^n$ be bounded and smooth, and let $L = -\Delta$ be equipped with Dirichlet boundary conditions. Then we have the following Sobolev nesting:
	$$
	H^2(\Omega) \cap H_0^1(\Omega) \subset H_0^1(\Omega) \subset L^2(\Omega)
	$$
\end{example}

We prove the first major proposition that will be frequently used in the remainder of this paper.

\begin{Prop}\label{prop: EnergySpaceIsHilbert}
	$\left(\mathcal E,\|\pi\|_{\mathcal E}\right)$ is a Hilbert space, where $\|\pi\|_{\mathcal E} = \left(\|\pi\|_{L^2}^2 + \|L^{1/2}\pi\|_{L^2}^2\right)^{1/2}$ denotes the graph norm induced by the inner product $\langle u,v\rangle_{\mathcal E}=\langle u,v\rangle_{L^2}+\langle L^{1/2}u,L^{1/2}v\rangle_{L^2}$.
\end{Prop}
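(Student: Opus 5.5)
The plan is to move everything to the Fourier side, where $\mathcal E$ becomes a weighted $L^2$ space, and read off completeness there. Put $m(\xi,\eta)=m_{\mathcal X}(\xi)+m_{\mathcal Y}(\eta)\ge 0$ as in Lemma \ref{lem: UnitedGeneratorFourierTrans}. Since $L_{\mathcal X\times\mathcal Y}$ is diagonalized by the Fourier transform, $L^{1/2}$ acts as multiplication by $m^{1/2}$. By Plancherel,
$$
\|\pi\|_{\mathcal E}^2=\int_{\widehat{\mathcal X}\times\widehat{\mathcal Y}}\bigl(1+m(\xi,\eta)\bigr)|\hat\pi(\xi,\eta)|^2\,d\hat\mu\,d\hat\nu .
$$
Hence $\mathcal F$ maps $\mathcal E$ isometrically onto the weighted space $L^2(\widehat{\mathcal X}\times\widehat{\mathcal Y},(1+m)\,d\hat\mu\,d\hat\nu)$. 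Surjectivity holds because any $g$ in the weighted space also lies in unweighted $L^2$, since $1+m\ge 1$. Therefore $g=\hat\pi$ for some $\pi\in L^2$, and this $\pi$ has finite energy.

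Next I check the inner-product axioms. The form $\langle\cdot,\cdot\rangle_{\mathcal E}$ is sesquilinear and Hermitian, since it is a sum of two such forms. It is positive definite because $\|\pi\|_{\mathcal E}\ge\|\pi\|_{L^2}$, and it is finite on $\mathcal E$ by definition of the domain.

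For completeness I would give a direct argument alongside the isomorphism. Take $(\pi_n)$ Cauchy in $\mathcal E$. Then $(\pi_n)$ is Cauchy in $L^2$, so $\pi_n\to\pi$ in $L^2$, and $(L^{1/2}\pi_n)$ is Cauchy in $L^2$, so $L^{1/2}\pi_n\to h$ in $L^2$. Passing to a subsequence, $\hat\pi_n\to\hat\pi$ and $m^{1/2}\hat\pi_n\to\hat h$ almost everywhere. Hence $\hat h=m^{1/2}\hat\pi$ a.e., which gives $\pi\in\mathcal E$ and $L^{1/2}\pi=h$. It follows that $\|\pi_n-\pi\|_{\mathcal E}\to0$.

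The one subtle point is the identification $h=L^{1/2}\pi$. It is the closedness of $L^{1/2}$, which Remark \ref{remark: halfGeneratorIsSelfAdjointOpe} guarantees because the operator is self-adjoint. The a.e. Fourier argument above proves it directly. An alternative route is Fatou's lemma applied to $\int m|\hat\pi_n-\hat\pi_k|^2$ as $k\to\infty$.
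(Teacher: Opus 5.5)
Your proposal is correct and takes essentially the same route as the paper. It identifies $\mathcal E$ via Plancherel with the weighted space $L^2\bigl((1+m)\,d\hat\mu\,d\hat\nu\bigr)$, uses $1+m\ge 1$ for surjectivity, and transfers completeness from Riesz--Fischer; your extra subsequence/a.e.\ argument is a sound hands-on proof of the closedness of $L^{1/2}$, though it is not needed once the isometric isomorphism is in place.
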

\begin{proof}
	Since $L_{\mathcal X\times\mathcal Y}^{1/2}$ is a linear operator, it is straightforward to verify that $\langle u,v\rangle_{\mathcal E}=\langle u,v\rangle_{L^2}+\langle L^{1/2}u,L^{1/2}v\rangle_{L^2}$ defines an inner product.
	
	For completeness, we first characterize $\mathcal E=D(L_{\mathcal X\times\mathcal Y}^{1/2})$. The spectral theorem states that any self-adjoint operator on a Hilbert space can be represented as an integral with respect to its spectral projections. From Lemma \ref{lem: RelationshipsBetweenEnergySpaces} and Remark \ref{remark: halfGeneratorIsSelfAdjointOpe}, we know that $L_{\mathcal X\times\mathcal Y}^{1/2}$ is a nonnegative self-adjoint operator on the Hilbert space $L^2(\mathcal{X}\times\mathcal{Y})$ (in the following derivation, we abbreviate $L^{1/2}_{\mathcal X\times\mathcal Y}\pi$ as $L^{1/2}\pi$). Applying the spectral theorem on $L^2(\mathcal{X}\times\mathcal{Y})$, we have $L^{1/2}_{\mathcal X\times\mathcal Y} = \int_0^\infty \sqrt{\lambda} \, dE_\lambda$. For $\pi \in \mathcal D(L_{\mathcal X\times\mathcal Y}^{1/2})$, $L^{1/2}\pi\in L^2(\mathcal{X}\times\mathcal{Y})$ satisfies
	$$
	\begin{aligned}L^{1/2}\pi &= \int_0^\infty \sqrt{\lambda} \, dE_\lambda \pi,\\ \|L^{1/2}\pi\|^2&=\int_0^\infty\lambda d\|E_\lambda \pi\|^2. \end{aligned}
	$$
	which further yields
	$$
	\pi\in D(L^{1/2})\iff\int_0^\infty\lambda d\|E_\lambda \pi\|^2<\infty.
	$$
	From Lemma \ref{lem: RelationshipsBetweenEnergySpaces}, we have $\pi \in \mathcal D(L_{\mathcal X\times\mathcal Y}^{1/2})\subset L^2(\mathcal{X}\times\mathcal{Y})$, so $\|\pi\|^2<\infty$, and consequently $\|\pi\|^2+\|L^{1/2}\pi\|^2<\infty$. This shows that $D(L^{1/2}_{\mathcal X\times\mathcal Y})$ can be precisely characterized by the spectral representation as
	$$
	D(L^{1/2}_{\mathcal X\times\mathcal Y})=\{\pi\in L^2(\mathcal{X}):\|\pi\|^2+\|L^{1/2}\pi\|^2<\infty\}.
	$$
	Since $L_{\mathcal X\times\mathcal Y}$ is translation-invariant, we have $\widehat{L_{\mathcal X\times\mathcal Y}\pi} = m(\xi,\eta)\hat \pi(\xi,\eta)$ with $m(\xi,\eta)\ge0$. By the spectral functional calculus, $L_{\mathcal X\times\mathcal Y}^{1/2}$ is also translation-invariant, and from $L_{\mathcal X\times\mathcal Y}^{1/2} = \mathcal F^{-1} M_{\sqrt{m}} \mathcal F$ we obtain $\mathcal F(L^{1/2}\pi)= \sqrt{m(\xi)}\,\hat \pi(\xi)$, i.e.,
	$$
	\widehat{L^{1/2}\pi} = \sqrt{m(\xi,\eta)}\hat \pi(\xi,\eta).
	$$
	Thus, $\pi\in D(L_{\mathcal X\times\mathcal Y}^{1/2})$ is equivalent to $\sqrt m\hat \pi\in L^2(\mathcal{X}\times\mathcal{Y})$, i.e., $\int m(\xi,\eta)|\hat \pi(\xi,\eta)|^2 d\hat\mu d\hat\nu < \infty$. By Parseval's formula, we have $\|\pi\|_2^2 = \int |\hat \pi|^2$ and $\|L^{1/2}\pi\|_2^2 = \int m|\hat \pi|^2$, i.e.,
	$$
	\|\pi\|_{\mathcal E}^2 = \int (1+m(\xi,\eta))|\hat \pi(\xi,\eta)|^2 d\hat\mu d\hat\nu
	$$
	
	Let $w(\xi,\eta)=1+m(\xi,\eta)\ge1$, and define the weighted space $L^2_w = \left\{ f : \int |f|^2 w \, d\hat\mu d\hat\nu < \infty \right\}$. This means that $\|\pi\|_{\mathcal E}^2 = \int w|\hat \pi|^2 \, d\hat\mu d\hat\nu = \|\hat \pi\|_{L^2_w}^2$, i.e., $\|\pi\|_{\mathcal E} = \|\hat \pi\|_{L^2_w}$.
	We now verify that $\mathcal F: \mathcal E \to L^2_w$ ($\pi \mapsto \hat \pi$) is bijective. If $\hat \pi = \hat r$, by the unitarity of the Fourier transform we have $\pi = r$, so $\mathcal F$ is injective. For any $f \in L^2_w$, since $w \ge 1$, we have $f \in L^2$. Since Fourier transform $\mathcal F: L^2 \to L^2$ is unitary, there exists a unique $\pi = \mathcal F^{-1}f \in L^2$. From $\hat \pi = f$, we obtain
	$$
	\int m|\hat \pi|^2 \, d\hat\mu d\hat\nu = \int m|f|^2 \, d\hat\mu d\hat\nu \le \int (1+m)|f|^2 \, d\hat\mu d\hat\nu < \infty.
	$$
	i.e., $\hat \pi$ satisfies the Fourier characterization of $\mathcal E$, so $\pi \in \mathcal E$. Therefore $\mathcal F$ is surjective. In summary, $\mathcal F$ is bijective. Hence,
	$$
	\mathcal E \cong L^2_w.
	$$
	The space $L^2_w$ is an $L^2$-space equipped with the inner product $\langle f,g\rangle_w = \int f\bar g \, w \, d\hat\mu d\hat\nu$. Since the measure satisfies $d\rho=w\,d\hat\mu d\hat\nu$, we have $L^2_w=L^2(\rho)$. By the Riesz–Fischer theorem, $L^2_w$ is complete.
	Now let $\pi_n$ be a Cauchy sequence in $\mathcal E$, i.e., $\|\pi_n-\pi_m\|_{\mathcal E}\rightarrow0$. After Fourier transform, $\hat \pi_n$ satisfies
	$$
	\int (1+m)|\hat \pi_n-\hat \pi_m|^2 \rightarrow0.
	$$
	i.e., $\{\hat \pi_n\}$ is a Cauchy sequence in $L^2_w$. Since $L^2_w$ is complete, there exists $f\in L^2_w$ such that $\hat \pi_n\rightarrow f$. By $\mathcal E \cong L^2_w$, there exists a unique $\pi=\mathcal F^{-1}f$, with $\pi\in \mathcal E$. Moreover,
	$$
	\|\pi_n - \pi\|_{\mathcal E} = \|\hat \pi_n - f\|_{L^2_w} \to 0.
	$$
	Thus $\{\pi_n\}$ converges to $\pi \in \mathcal E$. Therefore $\mathcal E$ is complete, i.e., $(D(L_{\mathcal X\times\mathcal Y}^{1/2}), \langle\cdot,\cdot\rangle_{\mathcal E})$ is a Hilbert space.
\end{proof}

\begin{remark}\label{remark: HalfOperatorFourierTrans}
	The proof of Proposition \ref{prop: EnergySpaceIsHilbert} contains the following Fourier-domain characterization:
	$$
	\widehat{L^{1/2}_{\mathcal X\times\mathcal Y}\pi} = \sqrt{m(\xi,\eta)}\hat \pi(\xi,\eta).
	$$
\end{remark}

\begin{remark}\label{remark: DenseResults}
	The space on which the subsequent entropy regularization acts is $\mathcal E = D(L_{\mathcal X\times\mathcal Y}^{1/2})$. The existence of a minimizer for the optimization problem relies on the reflexivity of $\left(\mathcal E,\|\cdot\|_{\mathcal E}\right)$. In fact, $D(L_{\mathcal X\times\mathcal Y})$, when equipped with the graph norm $\|\pi\|_{D(L)}=\left(\|\pi\|_{L^2}^2+\|L\pi\|_{L^2}^2\right)^{1/2}$ is also a Hilbert space, whose completeness is guaranteed by the closedness of the nonnegative self-adjoint operator $L_{\mathcal X\times\mathcal Y}$. It is worth noting that the Hilbert properties of both spaces are induced by the self-adjoint operator itself, rather than by the underlying locally compact abelian group structure. Furthermore, by introducing the spectral truncation operator $P_n := E([0, n]) = \int_0^n dE_\lambda$, it is easy to check that $D(L_{\mathcal X\times\mathcal Y})\stackrel{dense}{\subset}\mathcal E$ holds under the norm $\|\cdot\|_{\mathcal E}$, i.e., $\overline{D(L)}^{\|\cdot\|_{\mathcal E}} = \mathcal E$. This conclusion is contained in the proof of the Density Lemma \ref{lem: DenseLemma}. By Lemma \ref{lem: RelationshipsBetweenEnergySpaces}, this is equivalent to showing that states with infinite energy can be approximated by states with finite energy. The density lemma further states that when $\mathcal X$ and $\mathcal Y$ are compact, $\mathcal E_0 \stackrel{dense}{\subset} \mathcal M_0$ holds in the $L^2$-sense, i.e., $\overline{\mathcal E_0}^{L^2}=\mathcal M_0$. 
\end{remark}

\subsubsection{Dirichlet Energy}
\begin{Dfn}[Dirichlet Energy]
	For $\pi \in \mathcal E = D(L^{1/2}_{\mathcal X\times\mathcal Y}) \subset L^2(\mathcal X\times\mathcal Y)$, the Dirichlet energy is defined as
	$$\mathcal H(\pi) = \frac12 \|L_{\mathcal X\times\mathcal Y}^{1/2}\pi\|_{L^2}^2.$$
\end{Dfn}

\begin{remark}[Well-definedness]
	The quadratic form $\frac12 \langle \pi, L\pi \rangle_{L^2}$ is defined only on $\pi\in D(L_{\mathcal X\times\mathcal Y})$. For $\pi\in\mathcal E\setminus D(L_{\mathcal X\times\mathcal Y})$, the inner product need not exist. Note that for any $\pi \in D(L_{\mathcal X\times\mathcal Y})$, since $L = (L^{1/2})^*L^{1/2}$ and $L^{1/2}$ is self-adjoint, we have
	$$
	\langle \pi, L\pi \rangle_{L^2}= \langle L^{1/2}\pi, L^{1/2}\pi \rangle_{L^2}= \|L^{1/2}\pi\|_{L^2}^2.
	$$
	However, the right-hand side requires only $L^{1/2}\pi \in L^2_{\mathcal X\times\mathcal Y}$, i.e., $\pi \in \mathcal E$. Therefore, the Dirichlet energy $\mathcal H(\pi) = \frac12 \|L_{\mathcal X\times\mathcal Y}^{1/2}\pi\|_{L^2}^2$ is the extension of $\frac12 \langle \pi, L\pi \rangle_{L^2}$ to $\mathcal E$. The density of $D(L_{\mathcal X\times\mathcal Y})$ in $\mathcal E$ and the completeness of $\left(\mathcal E,\|\cdot\|_{\mathcal E}\right)$ guarantee the existence and uniqueness of the closed continuous quadratic extension $\mathcal H(\pi)$. Hence $\mathcal H(\pi)$ is well-defined.
\end{remark}

From Remark \ref{remark: HalfOperatorFourierTrans} and Parseval's identity, the Fourier representation of the Dirichlet energy is given by
$$
\begin{aligned} \mathcal H(\pi) &= \frac12 \|L_{\mathcal X\times\mathcal Y}^{1/2}\pi\|_{L^2}^2 =\frac12 \|\widehat{L_{\mathcal X\times\mathcal Y}^{1/2}\pi}\|_{L^2}^2   \\ &=\frac12 \int_{\widehat{\mathcal X}\times\widehat{\mathcal Y}}\left| \widehat{L_{\mathcal X\times\mathcal Y}^{1/2}\pi}(\xi,\eta) \right|^2\, d\hat{\mu}d\hat{\nu}  \\&= \frac12 \int_{\widehat{\mathcal X}\times\widehat{\mathcal Y}}m(\xi,\eta)|\hat{\pi}(\xi,\eta)|^2\, d\hat{\mu}d\hat{\nu}. \end{aligned}
$$
that is,
\begin{equation}\label{eq: DirichletEnerFourierRepre}
	\mathcal H(\pi)= \frac12 \int_{\widehat{\mathcal X}\times\widehat{\mathcal Y}}\big( m_{\mathcal X}(\xi) + m_{\mathcal Y}(\eta) \big)|\hat{\pi}(\xi,\eta)|^2\, d\hat{\mu}d\hat{\nu},\quad \pi \in \mathcal E.
\end{equation}

\begin{example}
	In the Riemannian special case where $m_{\mathcal X} = |\xi|^2$ and $m_{\mathcal Y} = |\eta|^2$, equation \ref{eq: DirichletEnerFourierRepre} degenerates to the classical Dirichlet energy:
	$$
	\begin{aligned} \mathcal{H}(\pi) &= \frac{1}{2}\int_{\mathcal{X}\times\mathcal{Y}} |\nabla_{x,y}\pi(x,y)|^2\,d\mu(x)d\nu(y)
		\\&= \frac{1}{2}\int_{\widehat{\mathcal{X}}\times\widehat{\mathcal{Y}}} (|\xi|^2+|\eta|^2)|\hat{\pi}(\xi,\eta)|^2\,d\hat{\mu}d\hat{\nu}. \end{aligned}
	$$
\end{example}

To ultimately prove Proposition \ref{prop: optimalEstimationExistUniq}, we establish the following lemmas to understand the features of the signed transport problem after the introduction of the regularization term, as well as the structure of feasible solutions in the relevant function spaces. 

\begin{Lem} \label{lem: N-EspaceHilbert}
	$(\mathcal M_0 \cap \mathcal E,\|\cdot\|_\mathcal E)$ is a Hilbert space.
\end{Lem}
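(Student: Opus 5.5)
Since Proposition \ref{prop: EnergySpaceIsHilbert} already shows that $(\mathcal E,\|\cdot\|_{\mathcal E})$ is a Hilbert space, it suffices to prove that $\mathcal M_0\cap\mathcal E$ is a closed linear subspace of $\mathcal E$. A closed subspace of a Hilbert space is complete under the restricted norm, and the inner product $\langle\cdot,\cdot\rangle_{\mathcal E}$ restricts to it. Linearity is immediate, because the zero-mass condition $\int_{\mathcal X\times\mathcal Y}\pi\,d\mu d\nu=0$ is linear in $\pi$ and $\mathcal E$ is a vector space.

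For closedness, I would write $\mathcal M_0\cap\mathcal E=\ker\ell$, where $\ell(\pi):=\int_{\mathcal X\times\mathcal Y}\pi\,d\mu d\nu$, and show that $\ell$ is a continuous linear functional on $(\mathcal E,\|\cdot\|_{\mathcal E})$. In the compact setting, which Remark \ref{remark: geometricIntuition} says is the one where $\mathbf 1\in L^2$, the Haar measure is finite, so $\ell(\pi)=\langle\pi,\mathbf 1\rangle_{L^2}$. Cauchy--Schwarz then gives
$$|\ell(\pi)|\le\lambda(\mathcal X\times\mathcal Y)^{1/2}\|\pi\|_{L^2}\le\lambda(\mathcal X\times\mathcal Y)^{1/2}\|\pi\|_{\mathcal E},$$
since $\|\pi\|_{L^2}\le\|\pi\|_{\mathcal E}$ by the definition of the graph norm. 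The kernel of a bounded functional is closed, so $\mathcal M_0\cap\mathcal E$ is closed in $\mathcal E$ and therefore Hilbert.

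The main obstacle is interpretive. If $\mathcal X\times\mathcal Y$ is non-compact, $\mathbf 1\notin L^2$, and $\ell$ need not even be defined on all of $L^2$. In that case I would instead use the Fourier description $\mathcal M_0=\{\hat\pi(0,0)=0\}$ and split on whether $\{(0,0)\}$ has positive Plancherel measure.

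If it has positive Plancherel measure, evaluation at $(0,0)$ is bounded on $L^2_w$, via $\mathcal E\cong L^2_w$ from the proof of Proposition \ref{prop: EnergySpaceIsHilbert}. Concretely, $|\hat\pi(0,0)|^2\hat\mu\hat\nu(\{0\})\le\|\hat\pi\|^2_{L^2_w}$, so the same kernel argument applies.

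If it has measure zero, the pointwise condition is vacuous in $L^2$, the constraint is trivial, and $\mathcal M_0\cap\mathcal E=\mathcal E$, which is Hilbert by Proposition \ref{prop: EnergySpaceIsHilbert}.

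Alternatively, one can argue sequentially. If $\pi_n\to\pi$ in $\mathcal E$ with $\pi_n\in\mathcal M_0$, then $\pi_n\to\pi$ in $L^2$, and $\mathcal M_0$ is $L^2$-closed, being the orthogonal complement of $\operatorname{span}\{1\}$ in Remark \ref{remark: geometricIntuition}. Hence $\pi\in\mathcal M_0$.
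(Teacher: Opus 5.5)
Your compact-case argument is exactly the paper's proof: $\mathcal M_0\cap\mathcal E$ is the kernel of $\pi\mapsto\langle\pi,\mathbf 1\rangle_{L^2}$, Cauchy--Schwarz together with $\|\pi\|_{L^2}\le\|\pi\|_{\mathcal E}$ makes this functional bounded on $\mathcal E$, and so the set is a closed subspace of the Hilbert space $\mathcal E$ (the paper likewise needs $\|\mathbf 1\|_{L^2}<\infty$, i.e.\ finite Haar measure, without stating it). The non-compact digression is unnecessary and partly empty: a singleton has positive Haar measure in $\widehat{\mathcal X}\times\widehat{\mathcal Y}$ only when that dual group is discrete, which happens exactly when $\mathcal X\times\mathcal Y$ is compact, so your first subcase never occurs there; your second subcase works only by declaring the zero-mass constraint vacuous, which contradicts the paper's description of $\mathcal M_0$ as a codimension-one hyperplane.
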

\begin{proof}
	Define the linear functional $T: \mathcal E \to \mathbb R$ by $T(\pi) = \int \pi= \langle \pi, \mathbf{1} \rangle_{L^2}$. Applying Cauchy–Schwarz yields
	$$
	|T(\pi)| \le \|\pi\|_{L^2} \|\mathbf{1}\|_{L^2} \le C\|\pi\|_{\mathcal E}.
	$$
	Thus $T$ is bounded (hence continuous). The space $\mathcal M_0 = \ker T := \{ \pi \in \mathcal E : T(\pi) = 0 \}$ is the kernel of the continuous linear functional $T$, and the singleton $\{0\}$ is closed in $\mathbb R$. Therefore, $\mathcal M_0$ is a closed subspace of $\mathcal E$. Since $\mathcal M_0 \cap \mathcal E$ is complete under the inherited norm, it is a Hilbert space.
\end{proof}

\begin{Lem}[Convexity of Dirichlet Energy]\label{lem: ConvexityOfDirletEnergy}
	$\mathcal H(\pi) = \frac12 \|L_{\mathcal X\times\mathcal Y}^{1/2}\pi\|_{L^2}^2$ is strictly convex on $\mathcal M_0 \cap \mathcal E$. 
\end{Lem}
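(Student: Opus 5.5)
The plan is to work entirely in the Fourier representation \eqref{eq: DirichletEnerFourierRepre}, which expresses $\mathcal H(\pi)=\frac12\int m(\xi,\eta)|\hat\pi(\xi,\eta)|^2\,d\hat\mu d\hat\nu$ for $\pi\in\mathcal E$. Strict convexity will follow from the exact parallelogram-type identity for quadratic forms. For $\pi_1,\pi_2\in\mathcal M_0\cap\mathcal E$ and $t\in(0,1)$, expanding the square pointwise in frequency gives
\begin{equation*}
t\mathcal H(\pi_1)+(1-t)\mathcal H(\pi_2)-\mathcal H(t\pi_1+(1-t)\pi_2)=\frac{t(1-t)}{2}\int m(\xi,\eta)\,|\hat\pi_1-\hat\pi_2|^2\,d\hat\mu d\hat\nu = t(1-t)\,\mathcal H(\pi_1-\pi_2).
\end{equation*}
Because $m\ge 0$ and all integrals are finite on $\mathcal E$, this already gives convexity on all of $\mathcal E$. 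Real-valuedness of $\pi$ is irrelevant here, since the identity $|ta+(1-t)b|^2=t|a|^2+(1-t)|b|^2-t(1-t)|a-b|^2$ holds in $\mathbb C$. Equivalently, one can write $\mathcal H(\pi)=\frac12\|L^{1/2}\pi\|^2$ and use the fact that $L^{1/2}$ is linear, so no Fourier computation is strictly needed. The Fourier form is convenient, however, for the next step.

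It then remains to show that the right-hand side is strictly positive whenever $\pi_1\neq\pi_2$, i.e.\ that $\mathcal H(h)>0$ for every nonzero $h\in\mathcal M_0\cap\mathcal E$. This is where the zero-mass restriction and the spectral gap in Definition \ref{def: generator} enter, and it is the only genuine obstacle: on all of $\mathcal E$ strictness fails, because $\ker L_{\mathcal X\times\mathcal Y}$ contains the constants, by the Markov property $m(0,0)=m_{\mathcal X}(0)+m_{\mathcal Y}(0)=0$. For $h=\pi_1-\pi_2\in\mathcal M_0\cap\mathcal E$ (a linear subspace, so $h$ lies in it), Remark \ref{remark: geometricIntuition} gives $\hat h(0,0)=0$. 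Hence
\begin{equation*}
\mathcal H(h)=\frac12\int_{(\xi,\eta)\neq(0,0)} m\,|\hat h|^2 \ge \frac{\lambda}{2}\int |\hat h|^2=\frac{\lambda}{2}\|h\|_{L^2}^2,
\end{equation*}
using $m\ge\lambda>0$ off the origin together with Plancherel. If $h\neq 0$ in $L^2$, this is strictly positive, which yields the strict inequality for $t\in(0,1)$.

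I would add a remark on the measure-theoretic subtlety. In the compact case the point $(0,0)$ carries counting mass, so the condition $\hat h(0,0)=0$ is what removes the kernel. In a noncompact setting the singleton has Plancherel measure zero, and the bound $m\ge\lambda$ almost everywhere suffices directly. Either way, the lower bound $\mathcal H(h)\ge\frac{\lambda}{2}\|h\|^2$ is the key estimate. It also gives the quantitative form $\lambda$-strong convexity, which is worth recording for use in Proposition \ref{prop: optimalEstimationExistUniq}.
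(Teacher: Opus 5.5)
Your proof is correct and follows the paper's route. Both arguments use the parallelogram identity to write the convexity defect as $\frac{t(1-t)}{2}\|L^{1/2}(\pi_1-\pi_2)\|_{L^2}^2$, and then use the spectral gap together with $\hat h(0,0)=0$ on $\mathcal M_0$ to make it strictly positive; your explicit bound $\mathcal H(h)\ge\frac{\lambda}{2}\|h\|_{L^2}^2$ treats the zero frequency more carefully than the paper's bare claim that $m>0$.
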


\begin{Lem}[Coercivity]\label{lem: Coercivity of Operator}
	$\|\pi\|_{\mathcal E}^2 \le C \|L_{\mathcal X\times\mathcal Y}^{1/2}\pi\|_{L^2}^2$, $\forall \pi \in \mathcal E_0$.
\end{Lem}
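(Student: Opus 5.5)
The plan is to work entirely in the Fourier domain and reduce the inequality to a pointwise comparison of weights. By Proposition \ref{prop: EnergySpaceIsHilbert} and Remark \ref{remark: HalfOperatorFourierTrans}, for every $\pi \in \mathcal E$ we have
$$
\|\pi\|_{\mathcal E}^2=\int_{\widehat{\mathcal X}\times\widehat{\mathcal Y}}\big(1+m(\xi,\eta)\big)|\hat\pi(\xi,\eta)|^2\,d\hat\mu\,d\hat\nu,
$$
and
$$
\|L_{\mathcal X\times\mathcal Y}^{1/2}\pi\|_{L^2}^2=\int_{\widehat{\mathcal X}\times\widehat{\mathcal Y}} m(\xi,\eta)|\hat\pi(\xi,\eta)|^2\,d\hat\mu\,d\hat\nu .
$$
So it suffices to find $C$ with $1+m(\xi,\eta)\le C\,m(\xi,\eta)$ on the support of $\hat\pi$, up to a $\hat\mu\otimes\hat\nu$-null set.

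Next I use the zero-mass constraint. For $\pi\in\mathcal E_0=\mathcal M_0\cap\mathcal E$, the computation in Remark \ref{remark: geometricIntuition} (with $\chi_{(0,0)}\equiv 1$) gives $\hat\pi(0,0)=\int\pi\,d\mu\,d\nu=0$. Hence the integrand vanishes at the origin, and both integrals may be taken over $(\widehat{\mathcal X}\times\widehat{\mathcal Y})\setminus\{(0,0)\}$.

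On that set, the spectral gap condition of Definition \ref{def: generator} gives $m(\xi,\eta)\ge\lambda>0$. Therefore
$$
1+m\le \Big(\tfrac1\lambda+1\Big)m .
$$
Integrating against $|\hat\pi|^2$ yields the claim with $C=1+\lambda^{-1}$, which is independent of $\pi$.

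The main obstacle is justifying that removing the single point $(0,0)$ loses nothing. In the compact case the dual is discrete with counting measure, so the point carries mass, and it is exactly $\hat\pi(0,0)=0$ that kills that term. In the non-compact case $\{(0,0)\}$ is typically $\hat\mu\otimes\hat\nu$-null, so the gap condition alone suffices almost everywhere; the zero-mass condition then only matters through well-definedness of $\mathcal M_0$, which, as Remark \ref{remark: geometricIntuition} notes, essentially forces compactness anyway. I would state both cases briefly.
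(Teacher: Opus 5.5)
Your proof is correct and follows essentially the paper's route: Plancherel, $\hat\pi(0,0)=0$ to discard the origin, and the spectral gap $m\ge\lambda$ to control the $L^2$ part of $\|\pi\|_{\mathcal E}^2$ by $\|L^{1/2}\pi\|_{L^2}^2$. Your pointwise comparison $1+m\le(1+\lambda^{-1})m$ gives the sharper constant $C=1+\lambda^{-1}$. The paper's $C=1+2/\lambda$ is a looser but still valid bound, since $\frac{2}{\lambda}\mathcal H(\pi)=\frac{1}{\lambda}\|L^{1/2}\pi\|_{L^2}^2$.
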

\begin{proof}
	For any $\pi \in \mathcal E_0$, we have $\hat{\pi}(0,0) = 0$. Then, by the spectral gap condition, we obtain
	$$
	\|\pi\|_{L^2}^2=\int_{\widehat{\mathcal X}\times\widehat{\mathcal Y}}|\hat{\pi}(\xi,\eta)|^2\, d\hat{\mu}(\xi)d\hat{\nu}(\eta)\le\frac{1}{\lambda}\int_{\widehat{\mathcal X}\times\widehat{\mathcal Y}}m(\xi,\eta)|\hat{\pi}(\xi,\eta)|^2\, d\hat{\mu}(\xi)d\hat{\nu}(\eta)=\frac{2}{\lambda}\mathcal H(\pi).
	$$
	Hence,
	$$
	\|\pi\|_{\mathcal E}^2 = \|\pi\|_{L^2}^2 + \|L^{1/2}\pi\|_{L^2}^2
	\le \left(1 + \frac{2}{\lambda}\right) \|L^{1/2}\pi\|_{L^2}^2.
	$$
	That is, there exists a constant $C > 0$ such that
	$$
	\|\pi\|_{\mathcal E}^2 \le C \|L^{1/2}\pi\|_{L^2}^2, \quad \forall \pi \in \mathcal E_0.
	$$
\end{proof}

\begin{remark} \label{remark: CoercivityOfOperator}
	Coercivity implies that, on $\mathcal E_0 = \mathcal M_0 \cap \mathcal E$, the graph norm is equivalent to the energy seminorm. The proof also contains $\|\pi\|_{L^2}^2 \le \frac{2}{\lambda} \mathcal H(\pi)$. Moreover, if $\pi$ is $D(L_{\mathcal X\times\mathcal Y})-$regular, i.e., $\pi \in D(L_{\mathcal X\times\mathcal Y})$, then
	$$
	\|\pi\|_{\mathcal E}^2 \le C \|L^{1/2}\pi\|_{L^2}^2 = C \langle L^{1/2}\pi, L^{1/2}\pi \rangle_{L^2} = C\langle \pi, L\pi \rangle_{L^2} 
	$$
\end{remark}

\subsubsection{Regularization}
Let $\mathcal E_0 := \mathcal M_0 \cap \mathcal E= \left\{ \pi \in D(L^{1/2}_{\mathcal X\times\mathcal Y}) : \hat{\pi}(0,0) = 0 \right\}$. Define the feasible set:
$$
\Pi_{\mathcal E_0}(p,q) := \left\{ \pi \in \mathcal E_0 : P_{\mathcal X}\pi = p,\; P_{\mathcal Y}\pi = q \right\}.
$$
Clearly, $\Pi_{\mathcal E_0}(p,q)\neq \varnothing$, since the independent coupling $\pi_{Ind}\in \Pi_{\mathcal E_0}(p,q)$ satisfies the constraints.

The entropy-regularized formulation of the signed transport problem is given by
$$
\inf_{\pi \in \Pi_{\mathcal E_0}(p,q)} J_\varepsilon(\pi) := \int_{\mathcal{X}\times\mathcal{Y}} c(x,y)\,d\pi(x,y) + \varepsilon \mathcal H(\pi),
$$
where $c \in L^2(\mathcal X\times\mathcal Y)$ and $\langle c, \pi \rangle_{L^2}= \int_{\mathcal X\times\mathcal Y} c(x,y) \pi(x,y) \, d\mu d\nu$. Denote the optimal solution of the problem by $\pi_\varepsilon^* := \arg\min_{\pi \in \Pi_{\mathcal E_0}(p,q)}J_\varepsilon(\pi)$.

\begin{Lem}[Structure of Feasible Solutions]\label{lem: StructrueOfFeasibleSol}
	The set $\Pi_{\mathcal E_0}(p,q)$ is a closed convex subset of $\mathcal E_0$.
\end{Lem}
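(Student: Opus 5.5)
The plan is to write $\Pi_{\mathcal E_0}(p,q)$ as the preimage of a single point under a bounded linear map defined on the Hilbert space $\mathcal E_0$ (Lemma \ref{lem: N-EspaceHilbert}). Define the marginal operator
$$
A:\mathcal E_0\to L^2(\mathcal X)\times L^2(\mathcal Y),\qquad A\pi=(P_{\mathcal X}\pi,\,P_{\mathcal Y}\pi).
$$
Then $\Pi_{\mathcal E_0}(p,q)=A^{-1}(\{(p,q)\})$. The proof reduces to three points:
\begin{itemize}
\item[(i)] $A$ is linear;
\item[(ii)] $A$ is continuous from $(\mathcal E_0,\|\cdot\|_{\mathcal E})$ into the product $L^2$ space;
\item[(iii)] the preimage of a point under a continuous affine-type constraint is closed and convex.
\end{itemize}

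\textbf{Convexity.} This follows from linearity alone. Take $\pi_1,\pi_2\in\Pi_{\mathcal E_0}(p,q)$ and $t\in[0,1]$. Since $\mathcal E_0$ is a linear space, $t\pi_1+(1-t)\pi_2\in\mathcal E_0$. Linearity of the integrals defining $P_{\mathcal X}$ and $P_{\mathcal Y}$ gives
$$
P_{\mathcal X}\big(t\pi_1+(1-t)\pi_2\big)=tp+(1-t)p=p,
$$
and the same holds for $q$. No topology is needed here.

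\textbf{Closedness.} This is the only step with content, namely the boundedness of $P_{\mathcal X}$ and $P_{\mathcal Y}$. I would use Lemma \ref{lem: maginalPorjFourierRepret}, which gives $\widehat{P_{\mathcal X}\pi}(\xi)=\hat\pi(\xi,0)$. By Plancherel,
$$
\|P_{\mathcal X}\pi\|_{L^2(\mathcal X)}^2=\int_{\widehat{\mathcal X}}|\hat\pi(\xi,0)|^2\,d\hat\mu(\xi).
$$
In the compact setting the paper adopts from Remark \ref{remark: geometricIntuition} onward, $\hat\nu$ is counting measure on the discrete group $\widehat{\mathcal Y}$. The slice $\widehat{\mathcal X}\times\{0\}$ is then a subset of $\widehat{\mathcal X}\times\widehat{\mathcal Y}$ carrying its restricted product measure, so
$$
\int_{\widehat{\mathcal X}}|\hat\pi(\xi,0)|^2\,d\hat\mu\le\|\hat\pi\|_{L^2}^2=\|\pi\|_{L^2}^2\le\|\pi\|_{\mathcal E}^2.
$$

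I would also record a second, more elementary route, since it avoids any dependence on the normalization of $\hat\nu$: apply Cauchy–Schwarz in $y$ directly,
$$
|P_{\mathcal X}\pi(x)|^2\le\nu(\mathcal Y)\int_{\mathcal Y}|\pi(x,y)|^2\,d\nu(y),
$$
and integrate in $x$ to get
$$
\|P_{\mathcal X}\pi\|_{L^2}\le\nu(\mathcal Y)^{1/2}\|\pi\|_{L^2}.
$$
This needs finite Haar measure, which is exactly the compactness assumption already required for $\mathbf 1\in L^2$ in Remark \ref{remark: geometricIntuition}. On $\mathcal E_0$ one could alternatively use Lemma \ref{lem: Coercivity of Operator}, but that is not needed because $\|\cdot\|_{L^2}\le\|\cdot\|_{\mathcal E}$ already.

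With $A$ bounded, take $\pi_n\in\Pi_{\mathcal E_0}(p,q)$ with $\pi_n\to\pi$ in $\mathcal E$. Then $\pi\in\mathcal E_0$ because $\mathcal E_0$ is closed (Lemma \ref{lem: N-EspaceHilbert}), and
$$
\|P_{\mathcal X}\pi-p\|_{L^2}=\|P_{\mathcal X}(\pi-\pi_n)\|_{L^2}\le C\|\pi-\pi_n\|_{\mathcal E}\to0,
$$
with the same estimate for $P_{\mathcal Y}$. Hence $\pi\in\Pi_{\mathcal E_0}(p,q)$. Equivalently, $\Pi_{\mathcal E_0}(p,q)=A^{-1}(\{(p,q)\})$ is the preimage of a closed singleton under a continuous map.

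\textbf{Expected obstacle.} The main difficulty is the boundedness of the marginal projections, because it genuinely needs finiteness of the Haar measures. For noncompact groups such as $\mathbb R^d$, $P_{\mathcal X}$ is not bounded on $L^2$, and the Fourier restriction $\hat\pi(\xi,0)$ to a null slice is not even well defined for general $L^2$ functions. I would therefore state the compactness assumption explicitly at this point, consistent with the standing setting used for $\mathcal M_0$ and $\mathbf 1\in L^2$, and then run the Cauchy–Schwarz argument. Nonemptiness, via the independent coupling noted just before the lemma, is not part of the claim but can be mentioned alongside it.
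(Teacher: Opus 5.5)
Your proposal is correct and follows the paper's proof. Convexity comes from linearity of the marginal constraints. Closedness comes from the Cauchy--Schwarz bound $\|P_{\mathcal X}\pi\|_{L^2(\mathcal X)}^2\le\nu(\mathcal Y)\|\pi\|_{L^2}^2\le\nu(\mathcal Y)\|\pi\|_{\mathcal E}^2$, so that $\Pi_{\mathcal E_0}(p,q)=P_{\mathcal X}^{-1}(\{p\})\cap P_{\mathcal Y}^{-1}(\{q\})$ is closed.

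Your explicit finite-Haar-measure assumption is worth keeping. The paper uses $\nu(\mathcal Y)<\infty$ silently even though the lemma states no compactness hypothesis. You are also right that your Fourier-slice variant has constant $1$ only when $\nu(\mathcal Y)=1$. In general it picks up the same factor $\nu(\mathcal Y)$, which does not affect the conclusion.
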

\begin{proof}
	For any $\pi_1, \pi_2 \in \Pi_{\mathcal E_0}(p,q)$ and $t \in [0,1]$, let $\pi_t = t\pi_1 + (1-t)\pi_2$. Then
	$$
	P_{\mathcal X}\pi_t = t p + (1-t)p = p, \quad P_{\mathcal Y}\pi_t = q,
	$$
	and $\pi_t \in \mathcal E_0$ (since $\mathcal E_0$ is a linear space). Hence $\Pi_{\mathcal E_0}(p,q)$ is convex.
	The marginal projection operators $P_{\mathcal X}: \mathcal E_0 \to L^2(\mathcal X)$ and $P_{\mathcal Y}: \mathcal E_0 \to L^2(\mathcal Y)$ are continuous linear operators (the proof is given below). Since the singletons $\{p\}$ and $\{q\}$ are closed sets, we have
	$$
	\Pi_{\mathcal E_0}(p,q) = P_{\mathcal X}^{-1}(\{p\}) \cap P_{\mathcal Y}^{-1}(\{q\}),
	$$
	which is closed.
	We now prove the continuity of the marginal projections. For $P_{\mathcal X}$, by Cauchy–Schwarz,
	$$
	\begin{aligned}|P_{\mathcal X}\pi(x)| &= \left| \int_{\mathcal Y} \pi(x,y)\,d\nu(y) \right|
		\\&\le \|\pi(x,y)\|_{L^2(\mathcal Y)} \|1\|_{L^2(\mathcal Y)}=\sqrt{\nu(\mathcal Y)} \left( \int_{\mathcal Y} |\pi(x,y)|^2\,d\nu(y) \right)^{1/2}, \end{aligned}
	$$
	where $\|1\|_{L^2(\mathcal Y)}^2 = \int_{\mathcal Y} 1^2 \, d\nu(y) = \nu(\mathcal Y)$. Squaring and integrating yields
	$$
	\|P_{\mathcal X}\pi\|_{L^2(\mathcal X)}^2 \le \nu(\mathcal Y) \|\pi\|_{L^2(\mathcal X\times\mathcal Y)}^2 \le \nu(\mathcal Y) \|\pi\|_{\mathcal E}^2.
	$$
	Therefore, $P_{\mathcal X}$ is continuous. The continuity of $P_{\mathcal Y}$ follows analogously.
\end{proof}

\begin{remark}\label{remark: ProjOperatorIsConti}
	The proof of Lemma \ref{lem: StructrueOfFeasibleSol} contains the continuity of the marginal projection operators $P_{\mathcal X}$ and $P_{\mathcal Y}$. For any $\pi_1,\pi_2\in\Pi_{\mathcal E_0}(p,q)$, we have $P_X(\pi_1+\pi_2)=2p$, which implies $\pi_1+\pi_2\notin\Pi_{\mathcal E_0}(p,q)$. Consequently, $\Pi_{\mathcal E_0}(p,q)$ is not a linear space, and hence not a Hilbert space.
\end{remark}

\begin{Lem} \label{lem: RestritionFunctionIsConti}
	The embedding $\mathcal E_0 \hookrightarrow L^2(\mathcal{X}\times\mathcal{Y})$ is continuous, and the linear functional $T_c(\pi): L^2(\mathcal{X}\times\mathcal{Y})\to \mathbb{R}$ ($\pi\mapsto \langle c, \pi \rangle_{L^2}$) is continuous on $\mathcal E_0$, i.e., $T_c|_{\mathcal E_0}=\langle \pi, \eta_c \rangle_{\mathcal E_0}$ is a bounded linear functional.
\end{Lem}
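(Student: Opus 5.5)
The claim has two parts, and I will prove them in order: first the continuity of the embedding, then the boundedness of $T_c$ and its Riesz representation. Continuity of the inclusion $\iota:\mathcal E_0\hookrightarrow L^2(\mathcal X\times\mathcal Y)$ is immediate from the definition of the graph norm in Proposition \ref{prop: EnergySpaceIsHilbert}. For every $\pi\in\mathcal E_0$,
$$
\|\pi\|_{L^2}^2\le \|\pi\|_{L^2}^2+\|L^{1/2}\pi\|_{L^2}^2=\|\pi\|_{\mathcal E}^2,
$$
so $\iota$ is a bounded linear map with operator norm at most $1$. If one wants a bound in terms of the energy seminorm alone, Lemma \ref{lem: Coercivity of Operator} (via Remark \ref{remark: CoercivityOfOperator}) gives the sharper estimate $\|\pi\|_{L^2}^2\le \frac{2}{\lambda}\mathcal H(\pi)$ on $\mathcal E_0$. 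This is not needed for continuity, but it is convenient later.

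Next I treat $T_c$. Since $c\in L^2(\mathcal X\times\mathcal Y)$, Cauchy--Schwarz on $L^2$ gives
$$
|T_c(\pi)|=|\langle c,\pi\rangle_{L^2}|\le \|c\|_{L^2}\|\pi\|_{L^2}\le \|c\|_{L^2}\|\pi\|_{\mathcal E}, \qquad \pi\in\mathcal E_0 .
$$
Hence $T_c|_{\mathcal E_0}=T_c\circ\iota$ is a bounded, and therefore continuous, linear functional on $(\mathcal E_0,\|\cdot\|_{\mathcal E})$, with norm at most $\|c\|_{L^2}$. Linearity is inherited directly from the $L^2$ inner product.

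To obtain the representation $T_c|_{\mathcal E_0}=\langle \pi,\eta_c\rangle_{\mathcal E_0}$, I invoke Lemma \ref{lem: N-EspaceHilbert}, which says $\mathcal E_0$ is a Hilbert space under the inherited inner product $\langle\cdot,\cdot\rangle_{\mathcal E}$. The Riesz representation theorem then yields a unique $\eta_c\in\mathcal E_0$ with $T_c(\pi)=\langle \pi,\eta_c\rangle_{\mathcal E}$ for all $\pi\in\mathcal E_0$, and $\|\eta_c\|_{\mathcal E}=\|T_c|_{\mathcal E_0}\|\le\|c\|_{L^2}$.

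For concreteness I would also identify $\eta_c$ on the Fourier side. Write $c_\perp$ for the projection of $c$ onto $\mathcal M_0$ from Remark \ref{remark: geometricIntuition}, which kills the zero frequency. Then $\hat\eta_c=\hat c_\perp/(1+m)$, and one checks that $\langle\pi,\eta_c\rangle_{\mathcal E}=\int(1+m)\hat\pi\,\overline{\hat\eta_c}=\langle\pi,c\rangle_{L^2}$ whenever $\hat\pi(0,0)=0$. Two sign conventions need care here. If $c$ is complex-valued, the order of the inner product introduces a conjugate, which disappears for real $c$. Also, $c$ may carry a nonzero mean, but that component is invisible on $\mathcal E_0$ and can be discarded.

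There is no serious obstacle in this argument. The only delicate points are bookkeeping ones: the Riesz representative must be taken with respect to the $\mathcal E$-inner product and not the $L^2$ one, and it must lie in $\mathcal E_0$. Both are guaranteed by the closedness of $\mathcal E_0$ established in Lemma \ref{lem: N-EspaceHilbert}.
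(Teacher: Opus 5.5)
Your argument is correct and is essentially the paper's: the graph-norm inequality $\|\pi\|_{L^2}\le\|\pi\|_{\mathcal E_0}$ gives continuity of the embedding, and Cauchy--Schwarz then gives $|T_c(\pi)|\le\|c\|_{L^2}\|\pi\|_{\mathcal E_0}$. Your Riesz step via Lemma \ref{lem: N-EspaceHilbert}, together with the explicit representative $\hat\eta_c=\hat c_\perp/(1+m)$, goes slightly beyond the paper's proof of this lemma; the paper stops at boundedness and only invokes the Riesz representation later, in the existence part of Proposition \ref{prop: optimalEstimationExistUniq}.
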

\begin{proof}
	Clearly $\mathcal E_0 \subset L^2$. It suffices to show that for all $\pi \in \mathcal E_0$, $\|\pi\|_{L^2} \le C \|\pi\|_{\mathcal E_0}$. By the definition of the graph norm,
	$$
	\|\pi\|_{L^2}^2 \le \|\pi\|_{L^2}^2 + \|L^{1/2}\pi\|_{L^2}^2 = \|\pi\|_{\mathcal E_0}^2.
	$$
	Taking square roots yields $\|\pi\|_{L^2} \le \|\pi\|_{\mathcal E_0}$, so $\mathcal E_0 \hookrightarrow L^2(\mathcal{X}\times\mathcal{Y})$ is a continuous embedding.
	We now prove that $T_c|_{\mathcal E_0}$ is a continuous linear functional on $\mathcal E_0$. For any $\pi_1, \pi_2 \in \mathcal E_0$ and $\alpha, \beta \in \mathbb R$,
	$$
	T_c(\alpha\pi_1 + \beta\pi_2) = \langle c, \alpha\pi_1 + \beta\pi_2 \rangle_{L^2}
	= \alpha\langle c, \pi_1 \rangle_{L^2} + \beta\langle c, \pi_2 \rangle_{L^2}.
	$$
	By the Cauchy–Schwarz inequality (in $L^2(\mathcal{X}\times\mathcal{Y})$), we have $|T_c(\pi)| = |\langle c, \pi \rangle_{L^2}| \le \|c\|_{L^2} \|\pi\|_{L^2}$. Since $\mathcal E_0 \hookrightarrow L^2$ is a continuous embedding, i.e., $\|\pi\|_{L^2} \le \|\pi\|_{\mathcal E_0}$, it follows that
	$$
	|T_c(\pi)| \le \|c\|_{L^2} \|\pi\|_{\mathcal E_0}.
	$$
	Therefore, $T_c|_{\mathcal E_0}$ is a bounded linear functional with bound $\|c\|_{L^2}<\infty$ where $c \in L^2(\mathcal X\times\mathcal Y)$.
\end{proof}

\begin{Prop}\label{prop: optimalEstimationExistUniq}
	$\pi_\varepsilon^*$  exists and is unique.
\end{Prop}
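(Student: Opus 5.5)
Existence follows from the direct method of the calculus of variations on the Hilbert space $(\mathcal E_0,\|\cdot\|_{\mathcal E})$ of Lemma \ref{lem: N-EspaceHilbert}; uniqueness follows from strict convexity (Lemma \ref{lem: ConvexityOfDirletEnergy}) on the convex feasible set (Lemma \ref{lem: StructrueOfFeasibleSol}). I would first record that the feasible set is nonempty, since the independent coupling lies in $\Pi_{\mathcal E_0}(p,q)$. Hence $m:=\inf J_\varepsilon$ is well defined, and $m<+\infty$ because $J_\varepsilon(\pi_{Ind})<\infty$.

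\emph{Step 1 (lower bound and coercivity).} By Lemma \ref{lem: RestritionFunctionIsConti}, $|\langle c,\pi\rangle|\le \|c\|_{L^2}\|\pi\|_{\mathcal E}$. By Lemma \ref{lem: Coercivity of Operator}, $\|\pi\|_{\mathcal E}^2\le C\|L^{1/2}\pi\|^2=2C\mathcal H(\pi)$ on $\mathcal E_0$. Combining these,
$$J_\varepsilon(\pi)\ \ge\ \frac{\varepsilon}{2C}\|\pi\|_{\mathcal E}^2-\|c\|_{L^2}\|\pi\|_{\mathcal E}.$$
Thus $J_\varepsilon$ is bounded below, so $m>-\infty$, and $J_\varepsilon(\pi)\to\infty$ as $\|\pi\|_{\mathcal E}\to\infty$. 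Consequently any minimizing sequence $(\pi_n)\subset\Pi_{\mathcal E_0}(p,q)$ is bounded in $\mathcal E_0$. This is where the spectral gap $m(\xi,\eta)\ge\lambda$ is essential. Without it, the energy seminorm would not control the $L^2$ part, and the linear cost term could drive $J_\varepsilon$ to $-\infty$, as in Proposition \ref{prop: unboundedness}.

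\emph{Step 2 (weak compactness and closedness).} Since $\mathcal E_0$ is a Hilbert space, it is reflexive. A subsequence therefore satisfies $\pi_n\rightharpoonup\pi^*$ weakly in $\mathcal E_0$. The set $\Pi_{\mathcal E_0}(p,q)$ is convex and strongly closed, so by Mazur's lemma it is weakly closed, and hence $\pi^*\in\Pi_{\mathcal E_0}(p,q)$. Alternatively, one can argue directly: $P_{\mathcal X},P_{\mathcal Y}$ are bounded linear maps, so they are weak–weak continuous, and therefore the marginals pass to the limit.

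\emph{Step 3 (weak lower semicontinuity).} The map $\pi\mapsto\langle c,\pi\rangle$ is a bounded linear functional on $\mathcal E_0$, so it is weakly continuous. The map $\mathcal H(\pi)=\tfrac12\|L^{1/2}\pi\|^2$ is convex and continuous in $\|\cdot\|_{\mathcal E}$, since $\|L^{1/2}\pi\|\le\|\pi\|_{\mathcal E}$. A convex, strongly continuous functional is weakly lower semicontinuous; equivalently, one can view $\mathcal H$ as a norm squared composed with a bounded operator. Hence
$$J_\varepsilon(\pi^*)\le\liminf_n J_\varepsilon(\pi_n)=m,$$
so $\pi^*$ is a minimizer. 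This step is the main technical point, but it is routine once $J_\varepsilon$ is written as a continuous linear term plus a continuous convex quadratic term on $\mathcal E_0$.

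\emph{Step 4 (uniqueness).} Suppose $\pi_1\neq\pi_2$ are both minimizers. Their midpoint is feasible by convexity. The cost term is linear in $\pi$, and strict convexity of $\mathcal H$ on $\mathcal E_0$ (Lemma \ref{lem: ConvexityOfDirletEnergy}) gives $J_\varepsilon\big(\tfrac{\pi_1+\pi_2}{2}\big)<m$, a contradiction. One can also see this concretely: the parallelogram identity yields a defect $\tfrac{\varepsilon}{8}\|L^{1/2}(\pi_1-\pi_2)\|^2$. This is strictly positive because $\pi_1-\pi_2\in\mathcal E_0\setminus\{0\}$ and the spectral gap gives $\|L^{1/2}(\pi_1-\pi_2)\|^2\ge\lambda\|\pi_1-\pi_2\|^2>0$. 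Hence $\pi_\varepsilon^*$ exists and is unique.
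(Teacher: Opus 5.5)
Your existence-and-uniqueness argument follows the paper's proof step for step. It uses a minimizing sequence bounded in the Hilbert space $\mathcal E_0$ via Lemma \ref{lem: Coercivity of Operator} and the spectral gap, a weakly convergent subsequence, and weak continuity of $P_{\mathcal X},P_{\mathcal Y}$ to keep the marginals. It then uses weak lower semicontinuity of $\mathcal H$ plus weak continuity of $\langle c,\cdot\rangle$, and gets uniqueness from strict convexity on the convex feasible set. The differences are cosmetic. Your single bound $J_\varepsilon(\pi)\ge\frac{\varepsilon}{2C}\|\pi\|_{\mathcal E}^2-\|c\|_{L^2}\|\pi\|_{\mathcal E}$ is a tidier form of the paper's quadratic inequality in $\sqrt{\mathcal H(\pi_n)}$. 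Treating $\langle c,\cdot\rangle$ as a bounded, hence weakly continuous, functional replaces the paper's Riesz-representation step.

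The one step that does not hold is your opening claim that the independent coupling lies in $\Pi_{\mathcal E_0}(p,q)$; you took this over from the text preceding the proposition. Since $\int p\,d\mu=\int q\,d\nu=0$, the product $p\otimes q$ has marginals $\bigl(p\int_{\mathcal Y}q\,d\nu,\;q\int_{\mathcal X}p\,d\mu\bigr)=(0,0)$. The usual normalization by total mass is also unavailable, so this does not show $\Pi_{\mathcal E_0}(p,q)\neq\varnothing$.

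A correct witness is the additive plan $\pi_0(x,y)=p(x)/\nu(\mathcal Y)+q(y)/\mu(\mathcal X)$ from the proof of Lemma \ref{lem: Closed Image}. Its Fourier transform lives on the two zero-frequency axes, so it belongs to $\mathcal E_0$ exactly when $p\in\mathcal E_{\mathcal X}^0$ and $q\in\mathcal E_{\mathcal Y}^0$. This extra hypothesis cannot be dropped. The first half of that lemma gives $A(\mathcal E_0)\subset\mathcal E_{\mathcal X}^0\times\mathcal E_{\mathcal Y}^0$, so if $p\in\mathcal M_0(\mathcal X)$ has infinite energy, the feasible set is empty, $\inf J_\varepsilon=+\infty$, and no minimizer exists.

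You should add the assumption $(p,q)\in\mathcal E_{\mathcal X}^0\times\mathcal E_{\mathcal Y}^0$. You also need compactness of $\mathcal X,\mathcal Y$, which your appeal to Lemmas \ref{lem: N-EspaceHilbert} and \ref{lem: StructrueOfFeasibleSol} already requires through $\|\mathbf{1}\|_{L^2}<\infty$ and $\nu(\mathcal Y)<\infty$. With these, $J_\varepsilon(\pi_0)<\infty$ and the rest of your argument goes through unchanged.
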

\begin{proof}
	We first prove existence. Let $a := \inf_{\pi \in \Pi_{\mathcal E_0}(p,q)} J_\varepsilon(\pi)$. Then there exists a minimizing sequence $\{\pi_n\} \subset \Pi_{\mathcal E_0}(p,q)$ such that $J_\varepsilon(\pi_n) \to a$ (for $n \in \mathbb{N}$, with $J_\varepsilon(\pi_n) \le a + \frac{1}{n}$). Let $M := \max\{a+1, J_\varepsilon(\pi_1), \ldots, J_\varepsilon(\pi_N)\}>0$; then we have $J_\varepsilon(\pi_n) \le M$.
	Since $c \in L^2$, by Cauchy–Schwarz we obtain $\langle c, \pi_n \rangle \ge -\|c\|_{L^2} \|\pi_n\|_{L^2}$, and hence
	$$
	J_\varepsilon(\pi_n) = \langle c, \pi_n \rangle + \varepsilon \mathcal H(\pi_n)
	\ge -\|c\|_{L^2} \|\pi_n\|_{L^2} + \varepsilon \mathcal H(\pi_n).
	$$
	From $J_\varepsilon(\pi_n) \le M$ we get $\varepsilon \mathcal H(\pi_n) - \|c\|_{L^2} \|\pi_n\|_{L^2} \le M$. Using coercivity (Remark \ref{remark: CoercivityOfOperator}) to control $\|\pi_n\|_{L^2}$ via $\|\pi_n\|_{L^2} \le \sqrt{\frac{2}{\lambda}} \sqrt{\mathcal H(\pi_n)}$, we obtain
	$$
	\varepsilon \mathcal H(\pi_n)- \|c\|_{L^2} \sqrt{\frac{2}{\lambda}} \sqrt{\mathcal H(\pi_n)}\le M.
	$$
	This is a quadratic inequality in $\sqrt{\mathcal H(\pi_n)}\geq 0$, which ensures that $\mathcal H(\pi_n)$ cannot tend to infinity; otherwise the right-hand side would be exceeded. Let $\mathcal H(\pi_n)\leq B$. Again by Lemma \ref{lem: Coercivity of Operator}, we have
	$$
	\|\pi_n\|_{\mathcal E}^2 \le C \|L_{\mathcal X\times\mathcal Y}^{1/2}\pi_n\|_{L^2}^2 = 2C \mathcal H(\pi_n) \le 2CB.
	$$
	Therefore, $\{\pi_n\}$ is bounded in $\mathcal E_0$.
	
	By the reflexivity of the Hilbert space $\mathcal E_0$ and the Banach–Alaoglu theorem (Lemma \ref{lem: N-EspaceHilbert}), every bounded sequence has a weakly convergent subsequence. Hence, there exists a subsequence (still denoted by $\{\pi_n\}$) and an element $\pi^* \in \mathcal E_0$ such that
	$$
	\pi_n \rightharpoonup \pi^*.
	$$
	i.e., $\langle \pi_n, \phi \rangle_{\mathcal E_0} \to \langle \pi^*, \phi \rangle_{\mathcal E_0}$, $\forall \phi \in \mathcal E_0$.
	This limit preserves the marginal constraints. The marginal projection operators $P_{\mathcal X}, P_{\mathcal Y}: \mathcal E_0 \to L^2$ are continuous linear operators (Remark \ref{remark: ProjOperatorIsConti}) and hence weakly continuous. Taking weak limits in $P_{\mathcal X}\pi_n = p$ and $P_{\mathcal Y}\pi_n = q$, we obtain
	$$
	\begin{aligned} P_{\mathcal X}\pi^* = p, \\ P_{\mathcal Y}\pi^* = q,\end{aligned}
	$$
	and since $\pi^* \in \mathcal E_0$, we have $\pi^* \in \Pi_{\mathcal E_0}(p,q)$.
	Clearly, the norm $\pi \mapsto \|L^{1/2}\pi\|_{L^2}^2$ is weakly lower semicontinuous; hence
	$$
	\mathcal H(\pi^*) \le \liminf_{n\to\infty} \mathcal H(\pi_n).
	$$
	
	By Lemma \ref{lem: RestritionFunctionIsConti}, $T_c|_{\mathcal E_0}(\pi) = \langle c, \pi \rangle_{\mathcal E_0}$ is a continuous linear functional on $\mathcal E_0$. The Riesz representation theorem guarantees that there exists a unique $\eta_c \in \mathcal E_0$ such that for all $\pi \in \mathcal E_0$,
	$$
	T_c|_{\mathcal E_0}(\pi) = \langle c, \pi \rangle_{L^2} = \langle \pi, \eta_c \rangle_{\mathcal E_0}.
	$$
	Substituting $\pi = \pi_n$ and $\pi = \pi^*$ yields
	$$
	\begin{aligned}\langle c, \pi_n \rangle_{L^2} &= \langle \pi_n, \eta_c \rangle_{\mathcal E_0}, \\ \langle c, \pi^* \rangle_{L^2} &= \langle \pi^*, \eta_c \rangle_{\mathcal E_0}. \end{aligned}
	$$
	Since $\eta_c \in \mathcal E_0$, taking $\phi = \eta_c$ in the definition of weak convergence gives $\langle \pi_n, \eta_c \rangle_{\mathcal E_0} \to \langle \pi^*, \eta_c \rangle_{\mathcal E_0}$, and therefore
	$$
	\langle c, \pi_n \rangle_{L^2} \to \langle c, \pi^* \rangle_{L^2}.
	$$
	In summary,
	$$
	J_\varepsilon(\pi^*) = \langle c, \pi^* \rangle + \varepsilon \mathcal H(\pi^*)\le \liminf_{n\to\infty} J_\varepsilon(\pi_n) = a.
	$$
	Since $a$ is the infimum, we have $J_\varepsilon(\pi^*) = a$. Thus the existence of $\pi_\varepsilon^*$ is established.
	
	We now prove uniqueness. Since the linear term $\langle c, \cdot \rangle$ does not destroy convexity, and by Lemma \ref{lem: ConvexityOfDirletEnergy}, $\mathcal H(\pi)$ is strictly convex, the functional $J_\varepsilon(\pi)$ is strictly convex (for $\varepsilon > 0$). A strictly convex functional on a closed convex set $\Pi_{\mathcal E_0}(p,q)$ has at most one minimizer $\pi_\varepsilon^*$ (Lemma \ref{lem: StructrueOfFeasibleSol}). Since existence has already been established, the minimizer is unique.
\end{proof}

\begin{remark}
	The proof of Proposition \ref{prop: optimalEstimationExistUniq} contains the fact that the functional $J_\varepsilon(\pi)$ is bounded below. This follows from
	$$
	J_\varepsilon(\pi) = \langle c, \pi \rangle + \varepsilon \mathcal H(\pi)
	\ge -\|c\|_{L^2} \|\pi\|_{L^2} + \varepsilon \mathcal H(\pi).
	$$
	Using $\|\pi\|_{L^2} \le \sqrt{\frac{2}{\lambda}} \sqrt{\mathcal H(\pi)}$ to control $\|\pi\|_{L^2}$, we obtain
	$$
	J_\varepsilon(\pi) \ge - \|c\|_{L^2} \sqrt{\frac{2}{\lambda}} \, \sqrt{\mathcal H(\pi)} + \varepsilon \mathcal H(\pi).
	$$
	The minimum of the right-hand side is attained at $\sqrt{\mathcal H(\pi)} = \frac{\|c\|_{L^2}}{\varepsilon} \sqrt{\frac{1}{2\lambda}}$, so the lower bound of $J_\varepsilon(\pi)$ is given by
	$$
	\inf_{\pi \in \Pi_{\mathcal E_0}(p,q)} J_\varepsilon(\pi) \ge - \frac{\|c\|_{L^2}^2}{2\varepsilon \lambda} > -\infty.
	$$
\end{remark}

\section{Discussion}\label{sect: discussion}
Building upon Section \ref{sect: signedOT}, we construct and establish the existence of the global Euler–Lagrange equation, which in turn enables the design of an algorithm for finding the optimal solution. Moreover, although the objects of the variational problem in infinite-dimensional spaces are subject to more stringent conditions, more relaxed treatments are available, for instance, in the setting of finite measure spaces. Nevertheless, for convenience, certain propositions will directly require $\mathcal X$ and $\mathcal Y$ to be compact.
\subsection{Global Weak and Strong Forms of the E-L Equation}
Define the marginal operator $A: \mathcal E_0 \to \mathcal M_0(\mathcal X)\times\mathcal M_0(\mathcal Y)$, given explicitly by $A\pi=(P_{\mathcal X}\pi,P_{\mathcal Y}\pi)$. The variational problem is as follows:
$$\min_{\pi \in \mathcal E_0} J_\varepsilon(\pi), \quad \text{s.t. } A\pi = b,$$
where $b = (p,q)$. When the original optimization problem attains its minimum, the resulting equation is the weak Euler–Lagrange equation.

We first prove two important lemmas: Lemma \ref{lem: DenseLemma} and Lemma \ref{lem: Closed Image}, to understand the structure of the marginal projection operators and their adjoints (Remark \ref{remark: AdjointOperator} and Lemma \ref{lem: Adjoint Operator}), as well as the relationship between the finite-energy zero-mass space $\mathcal E_0$ and the infinite-energy state space $\mathcal M_0$, i.e., $e$. These results will, in turn, reveal the essential structure of the weak and strong Euler–Lagrange equations (Proposition \ref{prop: Weak-EL-Equation} and Proposition \ref{prop: Strong-EL-Equation}), and the information transport mechanism inherent in the regularized formulation (Remark \ref{remark: AdjointOperatorInformationTrans}).

\begin{Lem}[Closed Marginal Operator]\label{lem: Closed Image}
	The operator $A$ is a bounded linear operator. If $\mathcal X$ and $\mathcal Y$ are compact, then moreover $\operatorname{Im} A=\overline{\operatorname{Im} A}$, i.e., $\operatorname{Im} A$ is closed.
\end{Lem}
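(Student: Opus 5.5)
Boundedness follows from the proof of Lemma \ref{lem: StructrueOfFeasibleSol}. There we showed $\|P_{\mathcal X}\pi\|_{L^2(\mathcal X)}^2\le \nu(\mathcal Y)\|\pi\|_{L^2}^2\le\nu(\mathcal Y)\|\pi\|_{\mathcal E}^2$, and the analogous bound holds for $P_{\mathcal Y}$. Summing the two gives $\|A\pi\|^2\le(\mu(\mathcal X)+\nu(\mathcal Y))\|\pi\|_{\mathcal E}^2$ on the product space. We also need $A$ to land in $\mathcal M_0(\mathcal X)\times\mathcal M_0(\mathcal Y)$. This follows from Lemma \ref{lem: maginalPorjFourierRepret}: $\widehat{P_{\mathcal X}\pi}(0)=\hat\pi(0,0)=0$. (Where only local compactness is assumed, the finiteness of the Haar measures is what makes the Cauchy–Schwarz step work; I would state boundedness in the compact or finite-measure setting that is already implicit in that proof.)

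For closedness I will not work with sequences. Instead I will show directly that $\operatorname{Im}A$ equals a subspace that is visibly closed, by constructing an explicit right inverse. Assume $\mathcal X,\mathcal Y$ compact, so that $\widehat{\mathcal X},\widehat{\mathcal Y}$ are discrete with counting measure. Let $(p,q)$ be a target pair with finite energy, i.e. $p\in\mathcal E^0_{\mathcal X}:=\mathcal M_0(\mathcal X)\cap D(L_{\mathcal X}^{1/2})$, and similarly $q\in\mathcal E^0_{\mathcal Y}$. Define $\pi$ on the Fourier side by
\[
\hat\pi(\xi,0)=\hat p(\xi),\qquad \hat\pi(0,\eta)=\hat q(\eta),\qquad \hat\pi(\xi,\eta)=0 \text{ otherwise}.
\]
This is consistent at the origin because $\hat p(0)=\hat q(0)=0$. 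In physical space it is $\pi(x,y)=p(x)/\nu(\mathcal Y)+q(y)/\mu(\mathcal X)$, an additive version of the independent coupling. By Lemma \ref{lem: maginalPorjFourierRepret}, $A\pi=(p,q)$. Its energy is $\sum_\xi m_{\mathcal X}(\xi)|\hat p(\xi)|^2+\sum_\eta m_{\mathcal Y}(\eta)|\hat q(\eta)|^2<\infty$, using $m_{\mathcal X}(0)=m_{\mathcal Y}(0)=0$. So $\pi\in\mathcal E_0$, and the map $R:(p,q)\mapsto\pi$ satisfies $\|R(p,q)\|_{\mathcal E}\le C(\|p\|_{\mathcal E_{\mathcal X}}+\|q\|_{\mathcal E_{\mathcal Y}})$. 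This gives $\mathcal E^0_{\mathcal X}\times\mathcal E^0_{\mathcal Y}\subset\operatorname{Im}A$.

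Conversely, the slice decomposition of Remark \ref{remark: fixedMarginalDistrib} shows that any $\pi\in\mathcal E_0$ satisfies $\sum_\xi m_{\mathcal X}(\xi)|\hat\pi(\xi,0)|^2\le\sum m|\hat\pi|^2<\infty$, because $m(\xi,0)=m_{\mathcal X}(\xi)$. Hence $P_{\mathcal X}\pi\in\mathcal E^0_{\mathcal X}$, and likewise $P_{\mathcal Y}\pi\in\mathcal E^0_{\mathcal Y}$. This restriction to the axis slices is where discreteness of the dual is essential: in the non-compact case the axes $\{\eta=0\}$ are null sets and Fourier traces on them are not even defined. Therefore $\operatorname{Im}A=\mathcal E^0_{\mathcal X}\times\mathcal E^0_{\mathcal Y}$.

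The main obstacle is choosing the topology in which ``closed'' is meant. If the codomain carries the energy norms, then $\mathcal E^0_{\mathcal X}\times\mathcal E^0_{\mathcal Y}$ is complete by the argument of Lemma \ref{lem: N-EspaceHilbert}, so the image is closed and we are done. Equivalently, $AR=I$ yields $\|(p,q)\|\le\|A\|\,\|R(p,q)\|$, and the closed range theorem applies. If the codomain is only $L^2$, the image is merely dense (it contains the trigonometric polynomials) and need not be closed when $m$ is unbounded. I will therefore fix the codomain as $\mathcal E^0_{\mathcal X}\times\mathcal E^0_{\mathcal Y}$, with the energy norms on each factor, consistent with the paper's later identification $\operatorname{Im}A=\mathcal E^0_{\mathcal X}\times\mathcal E^0_{\mathcal Y}$. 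Boundedness of $A$ in this stronger norm also needs checking, and it holds by the slice inequality above.
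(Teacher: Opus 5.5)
Your argument is correct, and its core matches the paper's proof. Boundedness uses the same Cauchy--Schwarz bound $\|A\pi\|^2\le(\mu(\mathcal X)+\nu(\mathcal Y))\|\pi\|_{\mathcal E}^2$. The identification $\operatorname{Im}A=\mathcal E^0_{\mathcal X}\times\mathcal E^0_{\mathcal Y}$ is obtained the same way: the slice inequality at $\eta=0$ gives one inclusion, and the additive coupling $p(x)/\nu(\mathcal Y)+q(y)/\mu(\mathcal X)$ gives the other.

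The genuine difference is the final step. The paper concludes by asserting that $\mathcal E^0_{\mathcal X}$ and $\mathcal E^0_{\mathcal Y}$ are closed in $L^2$. This is false whenever $m_{\mathcal X}$ (or $m_{\mathcal Y}$) is unbounded, e.g.\ for $-\Delta$ on $\mathbb T^d$. These spaces contain all zero-mean trigonometric polynomials, so they are then proper dense subspaces of $\mathcal M_0$. Hence with the $L^2$ norms used in the paper's boundedness estimate, $\operatorname{Im}A$ is not closed, exactly as you observe.

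Your choice to give the codomain the energy norms is what actually makes the lemma true. Under those norms $A$ stays bounded by the slice inequality, your $R$ is a bounded right inverse, and $\operatorname{Im}A$ is the whole Hilbert space $\mathcal E^0_{\mathcal X}\times\mathcal E^0_{\mathcal Y}$. This is also the reading the paper needs later: $A^*$ acts on $(\mathcal E^0_{\mathcal X})^*\times(\mathcal E^0_{\mathcal Y})^*$ in Remark \ref{remark: AdjointOperator}, and the closed range theorem is invoked in Proposition \ref{prop: Weak-EL-Equation}.

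Your compactness caveats are also more accurate than the paper's text. Finite Haar measure is what the boundedness estimate uses. Discreteness of the dual, which the paper attributes to local compactness, actually comes from compactness.

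One small slip: the inequality $\|(p,q)\|\le\|A\|\,\|R(p,q)\|$ is a lower bound on $R$ and plays no role in closedness. What matters is that the bounded map $R$ is defined on the entire codomain, so $A$ is onto, and you have already shown that.
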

\begin{proof}
	It is easy to check the linearity of $A$ is inherited from the projection operators $P_{\mathcal X}$ and $P_{\mathcal Y}$. We now prove that $A$ is bounded. We have $A\pi\in\mathcal M_0(\mathcal X) \times \mathcal M_0(\mathcal Y) \subset L^2(\mathcal X) \times L^2(\mathcal Y)$, with norm defined by $\|A\pi\|_Y = \left( \|P_{\mathcal X}\pi\|_{L^2(\mathcal X)}^2 + \|P_{\mathcal Y}\pi\|_{L^2(\mathcal Y)}^2 \right)^{1/2}$.
	From the proof of Lemma \ref{lem: StructrueOfFeasibleSol}, we have $\|P_{\mathcal X}\pi\|_{L^2(\mathcal X)}^2 \le \nu(\mathcal Y) \|\pi\|_{\mathcal E}^2$ and $\|P_{\mathcal Y}\pi\|_{L^2(\mathcal Y)}^2 \leq \mu(\mathcal X) \|\pi\|_{\mathcal E_0}^2$. Combining these yields
	$$
	\begin{aligned}
		\|A\pi\|_Y^2
		&= \|P_{\mathcal X}\pi\|_{L^2(\mathcal X)}^2 + \|P_{\mathcal Y}\pi\|_{L^2(\mathcal Y)}^2 \\
		&\leq  (\nu(\mathcal Y) + \mu(\mathcal X)) \|\pi\|_{\mathcal E_0}^2.
	\end{aligned}
	$$
	Taking $C = \sqrt{\nu(\mathcal Y) + \mu(\mathcal X)}$, we obtain $\|A\pi\|_Y \leq C \|\pi\|_{\mathcal E_0}$ for all $\pi \in \mathcal E_0$, i.e., $A$ is bounded.
	
	We now prove that $\operatorname{Im} A$ is closed. We claim that $\operatorname{Im} A = \mathcal E_{\mathcal X}^0 \times \mathcal E_{\mathcal Y}^0$, where $\mathcal E_\mathcal X^0=D(L_\mathcal X^{1/2})\cap\mathcal M_0(\mathcal X)$:
	$$
	\mathcal E_\mathcal X^0=\left\{p\in L^2(\mathcal X):\hat p(0)=0,\int m_\mathcal X(\xi)|\hat p(\xi)|^2d\hat\mu<\infty\right\}. 
	$$
	For the first direction $\operatorname{Im} A \subset \mathcal E_{\mathcal X}^0 \times \mathcal E_{\mathcal Y}^0$, we prove that if $\pi \in \mathcal E_0$, then $(P_\mathcal X\pi,P_\mathcal Y\pi)\in\mathcal E_\mathcal X^0\times\mathcal E_\mathcal Y^0$.
	We first show that $A$ preserves zero mass, i.e., $P_{\mathcal X}\pi \in \mathcal M_0(\mathcal X)$. Integrating $P_{\mathcal X}\pi(x)$ yields
	$$
	\int_{\mathcal{X}} P_\mathcal X\pi(x)\,d\mu(x)=\int_\mathcal X\int_\mathcal Y\pi(x,y)d\nu(y)d\mu(x)=0,\quad \pi\in\mathcal E_0,
	$$
	so $P_\mathcal X\pi\in M_0(\mathcal X)$.
	
	Next, we prove the energy regularity of $P_{\mathcal X}\pi$, i.e., $P_\mathcal X\pi\in D(L_\mathcal X^{1/2})$. Let $p = P_{\mathcal X}\pi$. We have already shown that $P_{\mathcal X}\pi \in \mathcal M_0(\mathcal X) \subset L^2(\mathcal X)$; it remains to show $\int_{\widehat{\mathcal X}} (1 + m_{\mathcal X}(\xi)) |\hat{p}(\xi)|^2 \, d\hat{\mu}(\xi) < \infty$.
	By Remark \ref{remark: fixedMarginalDistrib}, we have $\hat{p}(\xi) = \hat{\pi}(\xi, 0)$, so it suffices to prove
	$$
	\int_{\widehat{\mathcal X}} (1 + m_{\mathcal X}(\xi)) |\hat{\pi}(\xi, 0)|^2 \, d\hat{\mu}(\xi) < \infty
	$$
	Since $\pi \in \mathcal E_0$, we have $\|\pi\|_{\mathcal E} < \infty$, i.e.,
	$$
	\int_{\widehat{\mathcal X}\times\widehat{\mathcal Y}} (1 + m_{\mathcal X}(\xi) + m_{\mathcal Y}(\eta)) |\hat{\pi}(\xi,\eta)|^2 \, d\hat{\mu}(\xi)d\hat{\nu}(\eta) < \infty.
	$$
	Since $\mathcal X$ and $\mathcal Y$ are locally compact abelian groups, the spectra on $\widehat{\mathcal X}$ and $\widehat{\mathcal Y}$ are given by discrete counting measures. $\eta = 0$ is a frequency slice, and its contribution is a subset of the full joint integral; hence
	$$
	\int_{\widehat{\mathcal X}} (1 + m_{\mathcal X}(\xi)) |\hat{\pi}(\xi, 0)|^2 \, d\hat{\mu}(\xi) \le \int_{\widehat{\mathcal X}\times\widehat{\mathcal Y}} (1 + m_{\mathcal X}(\xi) + m_{\mathcal Y}(\eta)) |\hat{\pi}(\xi,\eta)|^2 \, d\hat{\mu}(\xi)d\hat{\nu}(\eta) < \infty.
	$$
	Therefore $\|p\|_{\mathcal E_{\mathcal X}} < \infty$, i.e., $p \in D(L_{\mathcal X}^{1/2})$.
	In summary, we have $P_\mathcal X\pi\in\mathcal E_\mathcal X^0$, and similarly $P_\mathcal Y\pi\in\mathcal E_\mathcal Y^0$. Hence $\operatorname{Im} A \subset \mathcal E_{\mathcal X}^0 \times \mathcal E_{\mathcal Y}^0$.
	
	For the second direction $\operatorname{Im} A \supset \mathcal E_{\mathcal X}^0 \times \mathcal E_{\mathcal Y}^0$, take any $(p,q)\in\mathcal E_{\mathcal X}^0 \times \mathcal E_{\mathcal Y}^0$. We need to show that there exists $\pi \in \mathcal E_0$ such that $A\pi=(p,q)$. We claim that $\pi$ of the following form satisfies the requirements:
	$$
	\pi(x,y) := \frac{1}{\nu(\mathcal Y)} p(x) + \frac{1}{\mu(\mathcal X)} q(y).
	$$
	Verifying the marginals:
	$$
	\begin{aligned} \int_{\mathcal Y} \frac{p(x)}{\nu(\mathcal Y)}\,d\nu(y)
		= \frac{p(x)}{\nu(\mathcal Y)} \cdot \nu(\mathcal Y) = p(x),\\\int_{\mathcal X} \frac{q(y)}{\mu(\mathcal X)}\,d\mu(x)
		= \frac{q(y)}{\mu(\mathcal X)} \cdot \mu(\mathcal X) = q(y).\end{aligned}
	$$
	Since $(p,q) \in \mathcal E_{\mathcal X}^0 \times \mathcal E_{\mathcal Y}^0$, we have $\pi \in \mathcal M_0(\mathcal X \times \mathcal Y)$, i.e., the zero-mass constraint is preserved.
	
	It remains to show that $\pi$ has finite energy, i.e., $\pi \in \mathcal E$. Taking the Fourier transform of $\pi(x,y)$, we obtain
	$$
	\hat{\pi}(\xi,\eta) = \frac{1}{\nu(\mathcal Y)} \widehat{p(x)}(\xi,\eta) + \frac{1}{\mu(\mathcal X)} \widehat{q(y)}(\xi,\eta).
	$$
	By the orthogonality of characters on locally compact groups,
	$$
	\hat{\pi}(\xi,\eta) = \frac{1}{\nu(\mathcal Y)} \hat{p}(\xi)\delta_0(\eta) + \frac{1}{\mu(\mathcal X)} \delta_0(\xi)\hat{q}(\eta).
	$$
	The joint energy norm is given by
	$$
	\|\pi\|_{\mathcal E}^2 := \int_{\widehat{\mathcal X}\times\widehat{\mathcal Y}} (1 + m_{\mathcal X}(\xi) + m_{\mathcal Y}(\eta)) |\hat{\pi}(\xi,\eta)|^2 \, d\hat{\mu}(\xi)d\hat{\nu}(\eta).
	$$
	By Remark \ref{remark: fixedMarginalDistrib}, the marginal constraints occupy only the zero-frequency axes in the joint frequency space. Therefore,
	$$
	\operatorname{supp}\hat\pi
	\subset
	(\widehat{\mathcal X}\times{0})
	\cup
	({0}\times\widehat{\mathcal Y}).
	$$
	When $\eta = 0$, only $p(x)$ contributes; denote this term by $I_p$. When $\xi = 0$, only $q(y)$ contributes; denote this term by $I_q$. The cross term involving $\delta_0(\xi)\delta_0(\eta)$ contributes at $(\xi,\eta) = (0,0)$; denote this term by $I_{\text{cross}}$.
	
	By definition,
	$$
	\begin{aligned}
		I_p &:= \int_{\widehat{\mathcal X}\times\widehat{\mathcal Y}} (1 + m_{\mathcal X}(\xi) + m_{\mathcal Y}(\eta)) \left| \frac{1}{\nu(\mathcal Y)} \hat{p}(\xi) \delta_0(\eta) \right|^2 d\hat{\mu}(\xi)d\hat{\nu}(\eta) \\
		&= \frac{1}{\nu(\mathcal Y)^2} \int_{\widehat{\mathcal X}} (1 + m_{\mathcal X}(\xi) + m_{\mathcal Y}(0)) |\hat{p}(\xi)|^2 \, d\hat{\mu}(\xi).
	\end{aligned}
	$$
	By the Markov property of the generator, $m_{\mathcal Y}(0) = 0$. Hence,
	$$
	I_p = \frac{1}{\nu(\mathcal Y)^2} \int_{\widehat{\mathcal X}} (1 + m_{\mathcal X}(\xi)) |\hat{p}(\xi)|^2 \, d\hat{\mu}(\xi).
	$$
	where $p \in \mathcal E_{\mathcal X}^0$:
	$$
	\int_{\widehat{\mathcal X}} (1 + m_{\mathcal X}(\xi)) |\hat{p}(\xi)|^2 \, d\hat{\mu}(\xi) = \|p\|_{\mathcal E_{\mathcal X}}^2 < \infty.
	$$
	Thus $I_p < \infty$. Similarly, $I_q < \infty$.
	
	For the cross term, since $\pi \in \mathcal M_0(\mathcal X \times \mathcal Y)$, we have $\hat\pi(0,0) = 0$, so
	$$
	I_{\text{cross}} = \int_{\widehat{\mathcal X}\times\widehat{\mathcal Y}} (1 + m_{\mathcal X}(\xi) + m_{\mathcal Y}(\eta)) |\hat{\pi}(0,0)|^2 \, d\hat{\mu}(\xi)d\hat{\nu}(\eta)=0
	$$
	In summary,
	$$
	\|\pi\|_{\mathcal E_{\mathcal Z}}^2 = I_p + I_q + I_{\text{cross}} = I_p + I_q < \infty.
	$$
	Therefore $\pi \in \mathcal E$. Together with $\pi \in \mathcal M_0(\mathcal X \times \mathcal Y)$, this gives $\pi \in \mathcal E_0$. Hence $\operatorname{Im} A \supset \mathcal E_{\mathcal X}^0 \times \mathcal E_{\mathcal Y}^0$. Combining this with $\operatorname{Im} A \subset \mathcal E_{\mathcal X}^0 \times \mathcal E_{\mathcal Y}^0$, we obtain
	$$
	\operatorname{Im} A = \mathcal E_{\mathcal X}^0 \times \mathcal E_{\mathcal Y}^0
	$$
	Since $\mathcal E_{\mathcal X}^0$ and $\mathcal E_{\mathcal Y}^0$ are closed subspaces of $L^2(\mathcal X \times \mathcal Y)$, their product $\mathcal E_{\mathcal X}^0 \times \mathcal E_{\mathcal Y}^0$ is also closed. Therefore, $\operatorname{Im} A$ is closed.
\end{proof}

\begin{remark} \label{remark: exactDomain}
	The proof of Lemma \ref{lem: Closed Image} contains $\operatorname{Im} A = \mathcal E_{\mathcal X}^0 \times \mathcal E_{\mathcal Y}^0$, where $\mathcal E_\mathcal X^0=D(L_\mathcal X^{1/2})\cap\mathcal M_0(\mathcal X)$:
	$$
	\mathcal E_\mathcal X^0=\left\{p\in L^2(\mathcal X):\hat p(0)=0,\int m_\mathcal X(\xi)|\hat p(\xi)|^2d\hat\mu<\infty\right\}.
	$$
	This provides a more precise characterization of the domain $(\mathcal E_\mathcal X^0\times\mathcal E_\mathcal Y^0)^*$ of the adjoint operator $A^*$.
\end{remark}

\begin{remark} \label{remark: AdjointOperator}
	For the marginal operator $A: \mathcal E_0 \to \mathcal E_{\mathcal X}^0 \times \mathcal E_{\mathcal Y}^0$, its adjoint is given by $A^*:(\mathcal E_\mathcal X^0\times\mathcal E_\mathcal Y^0)^*\rightarrow \mathcal E_0^*$, the diagram is 
	\[
	\begin{CD}
		\displaystyle \mathcal{E}_0
		@>{\displaystyle A}>>
		\displaystyle \mathcal E_{\mathcal X}^0 \times \mathcal E_{\mathcal Y}^0
		\\
		@V{\displaystyle \cong}VV
		@
		VV{\displaystyle \cong}V
		\\
		\displaystyle \mathcal{E}_0^*
		@<{\displaystyle A^*}<<
		\displaystyle
		(\mathcal E_{\mathcal X}^0)^*
		\times
		(\mathcal E_{\mathcal Y}^0)^*
	\end{CD}
	\]
	Let $(\mathcal E_X^0)^* = \mathcal B(\mathcal E_X^0, \mathbb R)$ denote the space of all bounded linear functionals on $\mathcal E_X^0$. The completeness of $(\mathcal E_X^0)^*$ is inherited from $\mathbb R$. By the duality isomorphism for product spaces $(\mathcal E_\mathcal X^0\times\mathcal E_\mathcal Y^0)^* \cong (\mathcal E_\mathcal X^0)^*\times(\mathcal E_\mathcal Y^0)^*$, for any $F \in (\mathcal E_{\mathcal X}^0 \times \mathcal E_{\mathcal Y}^0)^*$, there exists a unique pairing $(\phi^*, \psi^*) \in (\mathcal E_\mathcal X^0)^*\times(\mathcal E_\mathcal Y^0)^*$ that represents it as
	$$
	F = (\phi^*, \psi^*).
	$$
	The construction of the isomorphism ensures that for all $(u, v) \in \mathcal E_\mathcal X^0\times\mathcal E_\mathcal Y^0$, $F(u, v) = \phi^*(u) + \psi^*(v)$. Although this is formally consistent with Lemma \ref{lem: Adjoint Operator}, the informational content of the two mappings is entirely different.
	
	\begin{center}
		\hspace*{-2cm}
		\begin{tikzcd}
			[row sep=4em,
			column sep=2em,
			nodes={inner sep=1pt, font=\small},
			every label/.append style={font=\small}]
			&
			(\mathcal E_{\mathcal X}^0)^*
			\times
			(\mathcal E_{\mathcal Y}^0)^*
			\arrow[rr, "{(\phi^*,\psi^*)\mapsto \phi^*+\psi^*}"]
			\arrow[rd, "{\cong}"']
			&&
			\mathcal E_0^*
			\\
			&
			&
			(\mathcal E_{\mathcal X}^0\times\mathcal E_{\mathcal Y}^0)^*
			\arrow[ru,"A^*"']
			&
		\end{tikzcd}
	\end{center}
	
	In Hilbert space, this means that for every $F \in (\mathcal E_{\mathcal X}^0 \times \mathcal E_{\mathcal Y}^0)^*$ and every $\pi \in \mathcal E_0$,
	$$
	\langle A^*F, \pi \rangle_{\mathcal E_0} = \langle F, A\pi \rangle_{\mathcal E_{\mathcal X}^0 \times \mathcal E_{\mathcal Y}^0}.
	$$
\end{remark}

\begin{Lem}[Duality]\label{lem: Adjoint Operator}
	$A^*(\phi^*, \psi^*) = \phi^* + \psi^*$, $\forall (\phi^*, \psi^*) \in (\mathcal E_\mathcal X^0)^*\times(\mathcal E_\mathcal Y^0)^*$.
\end{Lem}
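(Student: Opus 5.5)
The statement $A^*(\phi^*,\psi^*)=\phi^*+\psi^*$ is to be read with $\phi^*+\psi^*$ understood as the functional on $\mathcal E_0$ given by $\pi\mapsto\phi^*(P_{\mathcal X}\pi)+\psi^*(P_{\mathcal Y}\pi)$. Equivalently, it is the functional obtained by pulling back $\phi^*$ and $\psi^*$ along the lifting maps $P_{\mathcal X}^*$ and $P_{\mathcal Y}^*$ and adding. The plan is to prove the identity directly from the defining property of the Banach-space adjoint and then, if desired, identify the Hilbert-space representative.

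First, by Lemma \ref{lem: Closed Image}, $A:\mathcal E_0\to\mathcal E^0_{\mathcal X}\times\mathcal E^0_{\mathcal Y}$ is bounded and linear, so $A^*:(\mathcal E^0_{\mathcal X}\times\mathcal E^0_{\mathcal Y})^*\to\mathcal E_0^*$ is well defined by $(A^*F)(\pi)=F(A\pi)$. Next, by the product duality isomorphism of Remark \ref{remark: AdjointOperator}, each $F$ corresponds to a unique pair $(\phi^*,\psi^*)$ with $F(u,v)=\phi^*(u)+\psi^*(v)$. Substituting $(u,v)=A\pi=(P_{\mathcal X}\pi,P_{\mathcal Y}\pi)$ gives
$$(A^*F)(\pi)=\phi^*(P_{\mathcal X}\pi)+\psi^*(P_{\mathcal Y}\pi)=(P_{\mathcal X}^*\phi^*)(\pi)+(P_{\mathcal Y}^*\psi^*)(\pi).$$
Since $\pi\in\mathcal E_0$ is arbitrary, the two functionals agree on all of $\mathcal E_0$, which is the claim. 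Linearity in $(\phi^*,\psi^*)$ and boundedness of $P_{\mathcal X}^*$ and $P_{\mathcal Y}^*$ follow from the bounds $\|P_{\mathcal X}\pi\|^2\le\nu(\mathcal Y)\|\pi\|_{\mathcal E}^2$ and $\|P_{\mathcal Y}\pi\|^2\le\mu(\mathcal X)\|\pi\|_{\mathcal E}^2$ proved in Lemma \ref{lem: StructrueOfFeasibleSol}.

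To connect this with the lifting interpretation used later, I would also compute the concrete form of the adjoint in the $L^2$ pairing. Suppose $\phi^*$ and $\psi^*$ are represented by functions $\phi\in L^2(\mathcal X)$ and $\psi\in L^2(\mathcal Y)$. Fubini then gives
$$\langle\phi,P_{\mathcal X}\pi\rangle=\int\!\!\int\phi(x)\pi(x,y)\,d\nu\,d\mu .$$
So $P_{\mathcal X}^*\phi$ is the function $(x,y)\mapsto\phi(x)$, and likewise $P_{\mathcal Y}^*\psi$ is $(x,y)\mapsto\psi(y)$. Hence $A^*(\phi,\psi)(x,y)=\phi(x)+\psi(y)$. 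This needs compactness so that the constant-in-one-variable functions lie in $L^2$ of the product. In the Fourier picture, $A^*(\phi,\psi)$ is supported on the axes $(\widehat{\mathcal X}\times\{0\})\cup(\{0\}\times\widehat{\mathcal Y})$, consistent with Lemma \ref{lem: maginalPorjFourierRepret}.

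The main obstacle is conceptual bookkeeping rather than analysis. If one instead wants $A^*$ as a Hilbert adjoint with respect to the $\mathcal E_0$ inner product, the Riesz representative of $\pi\mapsto\langle\phi,P_{\mathcal X}\pi\rangle$ is not simply $\phi(x)$. It is obtained by dividing the Fourier coefficients by the weight $1+m$, so that $\hat\phi(\xi)\delta_0(\eta)$ becomes $\hat\phi(\xi)\delta_0(\eta)/(1+m_{\mathcal X}(\xi))$, using $m_{\mathcal Y}(0)=0$. I would therefore state and prove the lemma at the level of functionals, where the identity is exact, and add a remark that any Riesz identification sends sums to sums. Under either identification the formula $\phi^*+\psi^*$ therefore persists.
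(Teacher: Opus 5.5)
Your proposal is correct and follows the paper's route. Like the paper, you unfold $\langle A^*(\phi^*,\psi^*),\pi\rangle=\phi^*(P_{\mathcal X}\pi)+\psi^*(P_{\mathcal Y}\pi)$ through the product-dual identification and then use Fubini to recognize the lifted function $\phi(x)+\psi(y)$. Your added bookkeeping is a real improvement: you distinguish the $L^2$ pairing from the $\mathcal E_0$ Riesz representative and note the compactness needed for the lift to lie in $L^2$, whereas the paper's final line mislabels what is really an $L^2$ pairing as $\langle\cdot,\cdot\rangle_{\mathcal E_0}$.
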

\begin{proof}
	By Remark \ref{remark: AdjointOperator},
	$$
	\begin{aligned}
		\langle A^*(\phi^*, \psi^*), \pi \rangle_{\mathcal E_0}
		&= \langle (\phi^*, \psi^*), A\pi \rangle_{\mathcal E_{\mathcal X}^0 \times \mathcal E_{\mathcal Y}^0} \\
		&= \langle (\phi^*, \psi^*), (P_{\mathcal X}\pi, P_{\mathcal Y}\pi) \rangle_{\mathcal E_{\mathcal X}^0 \times \mathcal E_{\mathcal Y}^0} \\
		&= \langle\phi^*,P_{\mathcal X}\pi\rangle + \langle\psi^*,P_{\mathcal Y}\rangle \\
		&= \int_{\mathcal X} \phi^*(x) (P_{\mathcal X}\pi)(x) \, d\mu(x) + \int_{\mathcal Y} \psi^*(y) (P_{\mathcal Y}\pi)(y) \, d\nu(y) \\
		&= \int_{\mathcal X} \phi^*(x) \left( \int_{\mathcal Y} \pi(x,y) \, d\nu(y) \right) d\mu(x) + \int_{\mathcal Y} \psi^*(y) \left( \int_{\mathcal X} \pi(x,y) \, d\mu(x) \right) d\nu(y) \\
		&= \int_{\mathcal X \times \mathcal Y} \phi^*(x) \pi(x,y) \, d\mu(x)d\nu(y) + \int_{\mathcal X \times \mathcal Y} \psi^*(y) \pi(x,y) \, d\mu(x)d\nu(y) \\
		&= \int_{\mathcal X \times \mathcal Y} [\phi^*(x) + \psi^*(y)] \pi(x,y) \, d\mu(x)d\nu(y) \\
		&= \langle \phi^*(x) + \psi^*(y), \pi \rangle_{\mathcal E_0}.
	\end{aligned}
	$$
	Therefore,
	$$
	A^*(\phi^*, \psi^*) = \phi^* + \psi^*.
	$$
\end{proof}

\begin{remark} \label{remark: AdjointOperatorInformationTrans}
	The proof of Lemma \ref{lem: Adjoint Operator} contains 
	$$
	\begin{aligned}\langle\phi^*,P_\mathcal X\pi\rangle_\mathcal X=\langle \phi^*(x),\pi(x,y)\rangle_{\mathcal X\times \mathcal Y}\\ \langle\psi^*,P_\mathcal X\pi\rangle_\mathcal Y=\langle \psi^*(y),\pi(x,y)\rangle_{\mathcal X\times \mathcal Y}\end{aligned}
	$$
	This means that $P_{\mathcal X}^*\phi^*=\phi^*(x)$ and $P_{\mathcal Y}^*\psi^*=\psi^*(y)$, and consequently,
	$$
	\begin{aligned}A^*(\phi^*, \psi^*)= \phi^*(x) + \psi^*(y)&=P_{\mathcal X}^*\phi^* + P_{\mathcal Y}^*\psi^* \\ &=\begin{pmatrix}P_{\mathcal X}^*&P_{\mathcal Y}^*\end{pmatrix}  \begin{pmatrix}\phi^*\\ \psi^*  \end{pmatrix} \end{aligned}
	$$
	i.e., $A^*=\begin{pmatrix}P_{\mathcal X}^*&P_{\mathcal Y}^*\end{pmatrix}$. The lifting operator $P_{\mathcal X}^*$ lifts a function $\phi: \mathcal X \to \mathbb R$ to $P_X^*\phi:\mathcal X\times\mathcal Y\rightarrow\mathbb R$, copying the marginal potential function onto the joint space. A function $\phi(x)$ that originally depends only on $x$ becomes a function that depends on both $x$ and $y$, but is constant in the $y$-direction (i.e., does not vary with $y$). This implies that when integrating $\phi(x)$ against the marginal, no information is lost, it is merely multiplied by a constant factor $\nu(\mathcal Y)$:
	$$
	P_\mathcal X \phi(x,y) = \int_\mathcal Y \phi(x) \, d\nu(y) = \nu(\mathcal Y) \cdot \phi(x)
	$$
	Similarly, $\psi(y)$ is lifted to $(x,y)\mapsto\psi(y)$. Geometrically, the marginalization operators $P_{\mathcal X}$ and $P_{\mathcal Y}$ project a three-dimensional object onto two-dimensional planes, while the lifting operators $P_{\mathcal X}^*$ and $P_{\mathcal Y}^*$ pull a two-dimensional image back into a three-dimensional cylinder. If this cylinder is not twisted in the vertical direction and is vertically uniform, we say that $P_{\mathcal X}^*$ is a rigid embedding. Lemma \ref{lem: CoecivityOfLifitingOperator} further shows that the volume of this cylinder is at least the area of its base times a fixed height, it cannot degenerate to zero thickness.
\end{remark}

\begin{Prop}[Weak Euler–Lagrange Equation]\label{prop: Weak-EL-Equation}
	At the minimizer $\pi_\varepsilon^*$, the following hold:
	\begin{enumerate}
		\item[(a)] $D J_\varepsilon(\pi^*) \in (\ker A)^\perp$ and the local equation $\langle c, h \rangle + \varepsilon \langle L^{1/2}\pi_\varepsilon^*, L^{1/2}h \rangle = 0$ for $h \in \ker A$.
		\item[(b)] When $\mathcal X$ and $\mathcal Y$ are compact, the global equation $\langle c + \phi^*(x) + \psi^*(y), h \rangle + \varepsilon \langle L^{1/2}\pi_\varepsilon^*, L^{1/2}h \rangle = 0$ holds for all $h \in \mathcal E_0$, where $(\phi^*, \psi^*)\in(\mathcal E_X^0)^*\times(\mathcal E_Y^0)^*$.
	\end{enumerate}
\end{Prop}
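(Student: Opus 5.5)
Part (a) follows from a first-variation argument on the affine constraint set; part (b) then comes from the closed range theorem, using Lemma \ref{lem: Closed Image} and Lemma \ref{lem: Adjoint Operator}.

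\emph{Part (a).} By Proposition \ref{prop: optimalEstimationExistUniq}, the minimizer $\pi^*:=\pi_\varepsilon^*\in\Pi_{\mathcal E_0}(p,q)$ exists. For any $h\in\ker A$ and $t\in\mathbb R$, we have $A(\pi^*+th)=b$, so $\pi^*+th\in\Pi_{\mathcal E_0}(p,q)$.

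The function $g(t):=J_\varepsilon(\pi^*+th)$ is then a real quadratic polynomial:
$$
g(t)=J_\varepsilon(\pi^*)+t\bigl(\langle c,h\rangle+\varepsilon\langle L^{1/2}\pi^*,L^{1/2}h\rangle\bigr)+\tfrac{\varepsilon t^2}{2}\|L^{1/2}h\|^2 .
$$
Here the cross term is expanded by bilinearity of $\langle L^{1/2}\cdot,L^{1/2}\cdot\rangle$ on $\mathcal E$, with real-valued functions understood. Since $g$ attains its minimum at $t=0$, we get $g'(0)=0$, which is the local equation.

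The map $DJ_\varepsilon(\pi^*):h\mapsto\langle c,h\rangle+\varepsilon\langle L^{1/2}\pi^*,L^{1/2}h\rangle$ is a bounded linear functional on $\mathcal E_0$. Boundedness of the first term is Lemma \ref{lem: RestritionFunctionIsConti}; for the second, Cauchy–Schwarz gives $|\langle L^{1/2}\pi^*,L^{1/2}h\rangle|\le\|\pi^*\|_{\mathcal E}\|h\|_{\mathcal E}$. The local equation says exactly that this functional annihilates $\ker A$, i.e.\ $DJ_\varepsilon(\pi^*)\in(\ker A)^\perp$, read as the annihilator in $\mathcal E_0^*$ (or, via Riesz, as the orthogonal complement in $\mathcal E_0$).

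\emph{Part (b).} Assume $\mathcal X,\mathcal Y$ are compact. Lemma \ref{lem: Closed Image} shows that $A:\mathcal E_0\to\mathcal E^0_{\mathcal X}\times\mathcal E^0_{\mathcal Y}$ is bounded with closed range. Banach's closed range theorem then gives $(\ker A)^\perp=\operatorname{Im}A^*$. Consequently there is $F\in(\mathcal E^0_{\mathcal X}\times\mathcal E^0_{\mathcal Y})^*$ with $DJ_\varepsilon(\pi^*)=-A^*F$; the sign is a convention, chosen to match the statement.

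By the product-dual isomorphism of Remark \ref{remark: AdjointOperator}, $F=(\phi^*,\psi^*)$. Lemma \ref{lem: Adjoint Operator} and Remark \ref{remark: AdjointOperatorInformationTrans} identify $A^*F$ with the functional $h\mapsto\langle\phi^*(x)+\psi^*(y),h\rangle$. Rearranging $DJ_\varepsilon(\pi^*)(h)+\langle A^*F,h\rangle=0$ for all $h\in\mathcal E_0$ yields the global equation.

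\textbf{Main obstacle.} The delicate step is the correct use of the closed range theorem, in two respects.

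First, the codomain must be given a Banach (Hilbert) structure in which $\operatorname{Im}A$ is closed. The natural choice is the energy norms of $\mathcal E^0_{\mathcal X},\mathcal E^0_{\mathcal Y}$, not the $L^2$ norms. With that choice, $A$ is surjective onto $\mathcal E^0_{\mathcal X}\times\mathcal E^0_{\mathcal Y}$ by the explicit right inverse in the proof of Lemma \ref{lem: Closed Image}, so closedness is automatic. I will check that $A$ is bounded into the energy norm, using $\hat p(\xi)=\hat\pi(\xi,0)$ together with the slice estimate from that proof.

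Second, the pairing $\langle\phi^*(x)+\psi^*(y),h\rangle$ must be interpreted as the action of functionals. When $\phi^*,\psi^*$ are only elements of the duals, this pairing is understood through Lemma \ref{lem: Adjoint Operator}. When they are represented by functions, it is an honest $L^2$ integral.
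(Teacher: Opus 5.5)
Your proposal is correct and follows the paper's proof. Part (a) is the first variation along $\ker A$, and part (b) uses the closed range theorem via Lemma \ref{lem: Closed Image} together with the adjoint formula of Lemma \ref{lem: Adjoint Operator}. Your choice of the energy norms on $\mathcal E^0_{\mathcal X}\times\mathcal E^0_{\mathcal Y}$ is the sound way to invoke that lemma, because the explicit right inverse then makes $A$ surjective and its range trivially closed; the closedness of $\mathcal E^0_{\mathcal X}$ in the $L^2$ norm asserted in the paper's proof fails whenever $m_{\mathcal X}$ is unbounded.
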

\begin{proof}
	We first verify the first-order optimality condition. Let $\pi^*$ be the minimizer. For any direction $h \in \ker A$ (i.e., $P_{\mathcal X}h = 0$ and $P_{\mathcal Y}h = 0$), the curve $\pi^* + t h$ remains within the feasible set, since
	$$
	A(\pi^* + th) = A\pi^* + tAh = b + 0 = b.
	$$
	Since $\pi^*$ is a minimizer, $t = 0$ is a minimizer of the function $t \mapsto J_\varepsilon(\pi^* + th)$, so its first derivative vanishes:
	$$
	\frac{d}{dt} J_\varepsilon(\pi^* + th) = \langle c, h \rangle + \varepsilon \left(\langle L^{1/2}(\pi^* + th), L^{1/2}h \rangle+t \|L^{1/2}h\|^2\right).
	$$
	At $t = 0$, we have
	$$
	\langle c, h \rangle + \varepsilon \langle L^{1/2}\pi^*, L^{1/2}h \rangle = 0, \quad \forall h \in \ker A.
	$$
	This is the weak Euler–Lagrange equation on the tangent space $\ker A$. The gradient $D J_\varepsilon(\pi^*)$ of the objective functional vanishes in all tangent directions $h$ of the feasible set; that is, the gradient is orthogonal to the tangent space:
	$$
	D J_\varepsilon(\pi^*) \in (\ker A)^\perp.
	$$
	Its explicit form is $D J_\varepsilon(\pi^*)[h] = \langle c, h \rangle + \varepsilon \langle L^{1/2}\pi^*, L^{1/2}h \rangle = 0$ for all $h \in \ker A$.
	Equation (EL-weak) holds only on $\ker A$. To obtain the global equation (for arbitrary $h \in \mathcal E_0$), we introduce Lagrange multipliers $(\phi^*, \psi^*)$. Since $\operatorname{Im} A$ is closed by Lemma \ref{lem: Closed Image}, the closed range theorem yields
	$$
	(\ker A)^\perp = \operatorname{Im} A^*.
	$$
	$DJ_\varepsilon(\pi^*) \in \operatorname{Im} A^*$ implies that there exists $F \in (\mathcal E_{\mathcal X}^0 \times \mathcal E_{\mathcal Y}^0)^*$ such that
	$$
	DJ_\varepsilon(\pi^*) = A^*F.
	$$
	By Remark \ref{remark: AdjointOperator}, for any $F \in (\mathcal E_{\mathcal X}^0 \times \mathcal E_{\mathcal Y}^0)^*$, there exists a unique pairing $(-\phi^*, -\psi^*) \in (\mathcal E_\mathcal X^0)^*\times(\mathcal E_\mathcal Y^0)^*$ representing it as $F = -(\phi^*, \psi^*)$. Therefore,
	$$
	DJ_\varepsilon(\pi^*) = -A^*(\phi^*, \psi^*).
	$$
	By Lemma \ref{lem: Adjoint Operator}, both $A^*(\phi^*, \psi^*): \mathcal E_0 \to \mathbb R$ and the variation $DJ_\varepsilon(\pi^*): \mathcal E_0 \to \mathbb R$ are elements of $\mathcal E_0^*$. Applying both functionals to the same direction $h \in \mathcal E_0$ and using Lemma \ref{lem: Adjoint Operator}, we obtain
	$$
	D J_\varepsilon(\pi^*)[h] = - \langle \phi^* + \psi^*, h \rangle_{\mathcal E_0}, \quad \forall h \in \mathcal E_0.
	$$
	Expanding this yields the global weak Euler–Lagrange equation:
	$$
	\langle c + \phi^*(x) + \psi^*(y), h \rangle + \varepsilon \langle L^{1/2}\pi^*, L^{1/2}h \rangle = 0, \quad \forall h \in \mathcal E_0.
	$$
\end{proof}

Lemma \ref{lem: DenseLemma} states that $\mathcal M_0$ can be approximated by $\mathcal E_0$ via spectral truncation and zero-frequency projection, meaning that every zero-mass $L^2$ state (possibly with infinite energy) can be approximated by finite-energy zero-mass states. We will rely on this lemma to bridge the weak and strong Euler–Lagrange equations.

\begin{Lem}[Density Lemma]\label{lem: DenseLemma}
	When $\mathcal X$ and $\mathcal Y$ are compact, $\mathcal E_0 \stackrel{dense}{\subset} \mathcal M_0$ holds in the $L^2$-sense, i.e., $\overline{\mathcal E_0}^{L^2}=\mathcal M_0$.
\end{Lem}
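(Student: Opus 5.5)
We argue entirely on the Fourier side, using Plancherel's identity to turn $L^2$-approximation in $\mathcal X\times\mathcal Y$ into $L^2$-approximation on $\widehat{\mathcal X}\times\widehat{\mathcal Y}$. Compactness of $\mathcal X$ and $\mathcal Y$ makes the dual groups discrete with counting measure, so for $\pi\in\mathcal M_0$ the transform $\hat\pi$ is a square-summable family $\{\hat\pi(\xi,\eta)\}$ with $\hat\pi(0,0)=0$ (Remark \ref{remark: geometricIntuition}). The inclusion $\overline{\mathcal E_0}^{L^2}\subset\mathcal M_0$ is immediate. We have $\mathcal E_0\subset\mathcal M_0$, and $\mathcal M_0=\ker T$ for the functional $T(\pi)=\langle\pi,\mathbf 1\rangle_{L^2}$. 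Since $\mathbf 1\in L^2$ by finiteness of the Haar measures, $T$ is $L^2$-continuous, so $\mathcal M_0$ is $L^2$-closed.

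For the reverse inclusion we use spectral truncation, that is, the projection $P_n=E([0,n])$ from Remark \ref{remark: DenseResults}. Concretely, for $\pi\in\mathcal M_0$ define $\pi_n:=\mathcal F^{-1}\big(\mathbf 1_{\{m(\xi,\eta)\le n\}}\hat\pi\big)$. We then check three things.
\begin{enumerate}
\item[(i)] $\pi_n\in L^2$ and $\hat\pi_n(0,0)=\hat\pi(0,0)=0$ (the point $(0,0)$ is either kept with value $0$ or removed), so $\pi_n\in\mathcal M_0$.
\item[(ii)] $\int m|\hat\pi_n|^2\le n\|\pi\|_{L^2}^2<\infty$, so $\pi_n\in\mathcal E$ and hence $\pi_n\in\mathcal E_0$. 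In fact $\int m^2|\hat\pi_n|^2<\infty$, so $\pi_n\in D(L)$; this also gives the density of $D(L)$ in $\mathcal E$ claimed in Remark \ref{remark: DenseResults}, if one truncates with respect to the $\mathcal E$-norm instead.
\item[(iii)] $\|\pi-\pi_n\|_{L^2}^2=\sum_{m(\xi,\eta)>n}|\hat\pi(\xi,\eta)|^2\to0$. This follows by dominated convergence (or simply tail convergence of a summable series), since $m(\xi,\eta)<\infty$ pointwise and so the sets $\{m>n\}$ decrease to $\varnothing$.
\end{enumerate}

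Alternatively, and even more concretely in the discrete setting, we may truncate to finite sets $F_n\uparrow\widehat{\mathcal X}\times\widehat{\mathcal Y}$, using the fact that $\hat\pi$ has countable support. The only subtlety, and the step to state carefully, is that the multiplier $m$ is finite-valued and measurable on the dual group, so that the truncation sets exhaust the spectrum. This is where self-adjointness, via the spectral theorem and the multiplication-operator representation $L=\mathcal F^{-1}M_m\mathcal F$, is used. Compactness is used only to ensure $\mathbf 1\in L^2$, so that the zero-mass constraint is the $L^2$-closed condition $\hat\pi(0,0)=0$, which truncation preserves; the "zero-frequency projection" mentioned before the lemma then reduces to this observation.
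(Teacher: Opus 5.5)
Your proof is correct and follows the paper's route. The paper also uses the spectral truncation $P_n=E([0,n])$, which is exactly your Fourier multiplier $\mathbf 1_{\{m\le n\}}$, and it gets the easy inclusion from the closedness of $\mathcal M_0=\{\operatorname{span}\{\mathbf 1\}\}^\perp$. The one difference is that the paper truncates an arbitrary $f\in L^2$ and then applies the zero-mean projection $P_0$ (using $L^{1/2}\mathbf 1=0$), whereas you note that truncation already preserves $\hat\pi(0,0)=0$, so that extra step is unnecessary.
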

\begin{proof}
	By Remark \ref{remark: halfdesnseRemark}, it suffices to prove $\mathcal M_0 \subset \overline{\mathcal E_0}^{L^2}$. This is equivalent to showing that for every $f \in \mathcal M_0$, there exists a sequence $\{\pi_n\} \subset \mathcal E_0$ such that $\pi_n \to f$ in $L^2$. We will explicitly construct such a sequence $\{\pi_n\}$.
	We first prove $\overline{D(L^{1/2})}^{L^2}_{\mathcal X\times\mathcal Y}=L^2$. Since $L_{\mathcal X\times\mathcal Y}$ is a nonnegative self-adjoint operator, the spectral theorem gives $L_{\mathcal X\times\mathcal Y}=\int_0^\infty\lambda dE_\lambda$. Define the spectral truncation $P_n=E([0,n])$. For any $f \in L^2(\mathcal X \times \mathcal Y)$, let $g_n = P_n f$. By $\|E([0,n])f\|^2\leq\|E([0,\infty))f\|^2$,
	$$
	\|Lf_n\|^2=\int_0^n\lambda^2d\|E_\lambda f\|^2 \leq n^2\int_0^n d\|E_\lambda f\|^2 \leq n^2\|f\|^2<\infty,
	$$
	i.e., $L g_n = \int_0^n \lambda \, dE_\lambda g \in L^2$, so $g_n \in D(L_{\mathcal X\times\mathcal Y}) \subset D(L^{1/2}_{\mathcal X\times\mathcal Y})$.
	From $\mu_f([0,\infty))=\|E([0,\infty))f\|^2=\|f\|^2<\infty$, the tail of the spectral measure converges:
	$$
	\|f - g_n\|_{L^2}^2 = \int_{(n, \infty)} d\|E_\lambda f\|^2 \to 0 \quad (n \to \infty).
	$$
	Therefore, $\overline{D(L^{1/2}_{\mathcal X\times\mathcal Y})}^{L^2} = L^2(\mathcal X\times\mathcal Y)$.
	
	We now eliminate the zero-frequency component of $g_n$. Define the zero-mean orthogonal projection
	$$
	P_0: L^2(\mathcal X\times\mathcal Y) \to \mathcal M_0, \quad P_0 g = g - \frac{1}{\lambda(\mathcal X\times\mathcal Y)} \left( \int_{\mathcal X\times\mathcal Y} g \, d\lambda \right) \mathbf{1},
	$$
	where $\lambda(\mathcal X\times\mathcal Y) = \|\mathbf{1}\|_{L^2}^2$. Rearranging yields
	$$
	g-P_0 g =  \frac{1}{\lambda(\mathcal X\times\mathcal Y)} \left( \int_{\mathcal X\times\mathcal Y} g \, d\lambda \right) \mathbf{1}\in\text{span}\{1\}.
	$$
	From the above and Remark \ref{remark: geometricIntuition}, we have $g - P_0 g \perp \mathcal M_0$, i.e., $P_0$ is the orthogonal projection from $L^2$ onto its closed subspace $\mathcal M_0$. It removes the component along the constant direction $\mathbf{1}$. It is easily verified that in the Fourier domain,
	$$
	\widehat{P_0 g}(\xi) =\begin{cases}0, & \xi = 0, \\\hat g(\xi), & \xi \neq 0,\end{cases}
	$$
	i.e., it removes the zero frequency.
	By Remark \ref{remark: FourierCoefficient}, we further have $L^{1/2}\mathbf{1} = 0$, i.e., $\mathbf{1} \in D(L^{1/2}_{\mathcal X\times\mathcal Y})$. Since $g_n \in D(L^{1/2}_{\mathcal X\times\mathcal Y})$ and $D(L^{1/2})$ is a linear space, for every $g_n \in D(L^{1/2}_{\mathcal X\times\mathcal Y})$,
	$$
	\pi_n := P_0 g_n = g_n - a_n \mathbf{1} \in D(L^{1/2}).
	$$
	Moreover, this also shows that the projection operator preserves the energy: $\|L^{1/2}\pi_n\|_{L^2} = \|L^{1/2}g_n\|_{L^2}$.
	It is easy to check that $\pi_n \in \mathcal M_0$:
	$$
	\int_{\mathcal X\times\mathcal Y} \pi_n \, d\lambda
	= \int g_n - a_n \int \mathbf{1}
	= \int g_n - \frac{1}{\lambda(\mathcal X\times\mathcal Y)} \left( \int g_n \right) \lambda(\mathcal X\times\mathcal Y)
	= 0.
	$$
	Therefore, $\pi_n \in \mathcal M_0 \cap D(L^{1/2}) = \mathcal E_0$.
	Finally, we prove that $\pi_n \to f$. Since $f \in \mathcal M_0$, we have $P_0 f = f$. Moreover, by the properties of orthogonal projections, $\|P_0 g\|^2 \le \|g\|^2$, i.e., $\|P_0\| := \sup_{\|g\|=1} \|P_0 g\| \le 1$. Hence the projection operator is continuous. Since $g_n \to f$, we have
	$$
	\pi_n = P_0 g_n \to P_0 f = f.
	$$
	In summary, $\overline{\mathcal E_0}^{L^2}=\mathcal M_0$.
\end{proof}

\begin{remark}\label{remark: halfdesnseRemark}
	$\mathcal E_0 \stackrel{dense}{\nsubseteq} L^2(\mathcal X\times\mathcal Y)$, since $\mathcal E_0 \subset \mathcal M_0 = \{\text{span}\{1\}\}^{\perp}$, while $\{\text{span}\{1\}\}^{\perp}$ is a closed subspace, the closure of $\mathcal E_0$ is still contained in $\{\text{span}\{1\}\}^{\perp}$, i.e., $\overline{\mathcal E_0}^{L^2} \subset \mathcal M_0$. Therefore, $\mathcal E_0$ cannot be dense in the whole space $L^2(\mathcal X\times\mathcal Y)$. 
\end{remark}

\begin{Prop}[Strong Euler–Lagrange Equation]\label{prop: Strong-EL-Equation}
	When $\mathcal X$ and $\mathcal Y$ are compact, and $\pi^*$ is $D(L_{\mathcal X\times\mathcal Y})-$regular, i.e., $\pi^* \in D(L_{\mathcal X\times\mathcal Y})$, then we have the global strong Euler–Lagrange equation holds on $L^2(\mathcal X\times\mathcal Y)$:
	$$
	c + \phi^*(x) + \psi^*(y) + \varepsilon L\pi^* = C,\quad C\in\mathbb{R}
	$$
\end{Prop}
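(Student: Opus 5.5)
We start from the global weak Euler–Lagrange equation of Proposition \ref{prop: Weak-EL-Equation}(b), which holds because $\mathcal X$ and $\mathcal Y$ are compact:
$$
\langle c+\phi^*(x)+\psi^*(y),h\rangle+\varepsilon\langle L^{1/2}\pi^*,L^{1/2}h\rangle=0,\qquad \forall h\in\mathcal E_0.
$$
The first step is to move the energy term onto $\pi^*$ using the regularity hypothesis $\pi^*\in D(L_{\mathcal X\times\mathcal Y})$. By Remark \ref{remark: halfGeneratorIsSelfAdjointOpe}, $L^{1/2}$ is self-adjoint and $L=L^{1/2}L^{1/2}$ on $D(L)$. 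Hence for every $h\in\mathcal E=D(L^{1/2})$ we have
$$
\langle L^{1/2}\pi^*,L^{1/2}h\rangle=\langle L\pi^*,h\rangle .
$$
One can also check this in the Fourier domain via Remark \ref{remark: HalfOperatorFourierTrans}, since both sides equal $\int m\,\hat\pi^*\overline{\hat h}$. The weak equation then becomes $\langle g,h\rangle_{L^2}=0$ for all $h\in\mathcal E_0$, where
$$
g:=c+\phi^*(x)+\psi^*(y)+\varepsilon L\pi^*.
$$

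The second step is to verify that $g\in L^2(\mathcal X\times\mathcal Y)$, so that $h\mapsto\langle g,h\rangle$ is an $L^2$-continuous functional. The terms $c$ and $L\pi^*$ lie in $L^2$ by assumption. The multipliers $\phi^*,\psi^*$ are a priori only elements of the duals $(\mathcal E^0_{\mathcal X})^*$ and $(\mathcal E^0_{\mathcal Y})^*$. I would handle this with the Riesz identification on the compact groups and by reading off from the equation itself. Since $g-\phi^*-\psi^*\in L^2$, and the lifts $P_{\mathcal X}^*\phi^*$ and $P_{\mathcal Y}^*\psi^*$ act on $h$ only through its marginals, I would test the equation against $h$ supported on the single frequency axes $\widehat{\mathcal X}\times\{0\}$ and $\{0\}\times\widehat{\mathcal Y}$. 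This expresses $\hat\phi^*(\xi)$ and $\hat\psi^*(\eta)$ through the $L^2$ data $\hat c(\xi,0)$, $m_{\mathcal X}(\xi)\hat\pi^*(\xi,0)$, and their $\mathcal Y$-analogues, which shows that $\phi^*$ and $\psi^*$ are represented by $L^2$ functions. This identification is the step I expect to be the main obstacle, because the paper treats $(\phi^*,\psi^*)$ loosely as functions. If it proves delicate, I would instead define $\phi^*,\psi^*$ directly as these $L^2$ functions and check that they satisfy Proposition \ref{prop: Weak-EL-Equation}(b).

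The third step uses the Density Lemma \ref{lem: DenseLemma}: $\mathcal E_0$ is $L^2$-dense in $\mathcal M_0$. Since $g\in L^2$, the functional $h\mapsto\langle g,h\rangle$ is $L^2$-continuous and vanishes on $\mathcal E_0$, so it vanishes on all of $\mathcal M_0$. Thus $g\in\mathcal M_0^\perp$. By Remark \ref{remark: geometricIntuition} we have $L^2=\operatorname{span}\{\mathbf 1\}\oplus\mathcal M_0$, which requires compactness so that $\mathbf 1\in L^2$. Therefore $\mathcal M_0^\perp=\operatorname{span}\{\mathbf 1\}$, so $g=C\mathbf 1$ for some constant $C$. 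By the decomposition in Remark \ref{remark: geometricIntuition}, the constant is explicitly $C=\lambda(\mathcal X\times\mathcal Y)^{-1}\int g$. This gives $c+\phi^*+\psi^*+\varepsilon L\pi^*=C$ as an identity in $L^2(\mathcal X\times\mathcal Y)$.

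Finally, the constant $C$ is not spurious. The test space $\mathcal E_0$ excludes the zero frequency, so the equation cannot determine $\hat g(0,0)$. Equivalently, the multipliers are only defined up to $\phi^*\mapsto\phi^*+a$, $\psi^*\mapsto\psi^*-a$ plus a global constant. One can absorb $C$ into $\phi^*$ if desired.
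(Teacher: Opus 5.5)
Your proposal is correct and follows the paper's own argument. You use the self-adjointness of $L^{1/2}$ and $\pi^*\in D(L_{\mathcal X\times\mathcal Y})$ to turn the global weak equation into $\langle g,h\rangle=0$ for all $h\in\mathcal E_0$, pass to all of $\mathcal M_0$ via the Density Lemma, and conclude from $\mathcal M_0^\perp=\operatorname{span}\{\mathbf 1\}$. Your extra Step 2 is a genuine gain in rigor: testing with $h=u\otimes\mathbf 1$, $u\in\mathcal E^0_{\mathcal X}$ (which lies in $\mathcal E_0$ because $m_{\mathcal Y}(0)=0$, and has $P_{\mathcal Y}h=0$), identifies $\phi^*$ with the $L^2$ function $-\nu(\mathcal Y)^{-1}P_{\mathcal X}(c+\varepsilon L\pi^*)$, and similarly for $\psi^*$. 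The paper silently assumes exactly this when it invokes ``continuity of the inner product'', even though $\phi^*,\psi^*$ are a priori only elements of $(\mathcal E^0_{\mathcal X})^*$ and $(\mathcal E^0_{\mathcal Y})^*$.
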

\begin{proof}
	The regularity $\pi^* \in D(L_{\mathcal X\times\mathcal Y})$ implies $L_{\mathcal X\times\mathcal Y}\pi^* \in L^2(\mathcal X\times\mathcal Y)$. By the self-adjointness of $L^{1/2}_{\mathcal X\times\mathcal Y}$,
	$$
	\langle L^{1/2}\pi^*, L^{1/2}h \rangle_{L^2}
	= \langle L\pi^*, h \rangle_{L^2},
	\quad \forall h \in \mathcal E_0.
	$$
	Substituting this into the global weak Euler–Lagrange equation yields $\langle c + \phi^* + \psi^*, h \rangle_{L^2}+ \varepsilon \langle L\pi^*, h \rangle_{L^2}= 0$ for all $h \in \mathcal E_0$, i.e.,
	$$
	\langle c + \phi^* + \psi^* + \varepsilon L\pi^*, h \rangle_{L^2}
	= 0,
	\quad \forall h \in \mathcal E_0.
	$$
	The above equation means that $f := c + \phi^* + \psi^* + \varepsilon L\pi^*$ is orthogonal to every element of $\mathcal E_0$. By the density lemma (Lemma \ref{lem: DenseLemma}) and the continuity of the inner product, $f$ is orthogonal to every element of $\mathcal M_0$. Hence $f \in (\mathcal M_0)^\perp$. By Remark \ref{remark: geometricIntuition}, we have $f\in \operatorname{span}\{\mathbf{1}_{\mathcal X\times\mathcal Y}\}$, so there exists a constant $C \in \mathbb R$ such that $f = C \cdot \mathbf{1}_{\mathcal X\times\mathcal Y}$. That is, in $L^2(\mathcal X \times \mathcal Y)$,
	$$
	c + \phi^* + \psi^* + \varepsilon L\pi^* = C.
	$$
\end{proof}

\subsection{Duality and Embedding of Zero-Sum Game}
The content of this subsection does not require any additional regularity of $\pi$; it suffices that $\pi \in \mathcal E_0$ to define $\mathcal L(\pi, \phi, \psi)$ and to apply the global weak Euler–Lagrange equation (Proposition \ref{prop: Weak-EL-Equation}), provided that $\mathcal X$ and $\mathcal Y$ are compact. For the original optimization problem, we introduce Lagrange multipliers $(\phi, \psi) \in (\mathcal E_\mathcal X^0)^*\times(\mathcal E_\mathcal Y^0)^*$ and define the Lagrangian functional:
$$
\mathcal L(\pi, \phi, \psi):= J_\varepsilon(\pi)+ \langle \phi, P_{\mathcal X}\pi - p \rangle_{L^2(\mathcal X)}+ \langle \psi, P_{\mathcal Y}\pi - q \rangle_{L^2(\mathcal Y)}.
$$
By Remark \ref{remark: AdjointOperatorInformationTrans} and Lemma \ref{lem: Adjoint Operator}, the above can be rewritten as
\begin{equation}\label{eq: LagrangianForm}
	\mathcal L(\pi, \phi, \psi)= J_\varepsilon(\pi)+ \langle \phi(x) + \psi(y), \pi \rangle_{L^2(\mathcal X \times \mathcal Y)}- \langle p, \phi \rangle_{L^2(\mathcal X)}- \langle q, \psi \rangle_{L^2(\mathcal Y)}.
\end{equation}

\begin{Prop}[Saddle Point]\label{prop: Saddle Point}
	The Lagrangian $\mathcal L$ admits a saddle point $(\pi^*, \phi^*, \psi^*)$ satisfying
	$$
	\mathcal L(\pi^*, \phi, \psi) \le \mathcal L(\pi^*, \phi^*, \psi^*) \le \mathcal L(\pi, \phi^*, \psi^*),\quad \forall \pi \in \mathcal E_0,\; \phi, \psi.
	$$
\end{Prop}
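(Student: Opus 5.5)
The plan is to take $\pi^*=\pi_\varepsilon^*$ from Proposition \ref{prop: optimalEstimationExistUniq} and $(\phi^*,\psi^*)$ from the global weak Euler–Lagrange equation, Proposition \ref{prop: Weak-EL-Equation}(b), with $\mathcal X,\mathcal Y$ compact as stipulated at the start of this subsection. I will then verify the two inequalities separately, working throughout with the form \eqref{eq: LagrangianForm} of $\mathcal L$.

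\emph{Left inequality.} Since $\pi^*\in\Pi_{\mathcal E_0}(p,q)$, we have $P_{\mathcal X}\pi^*-p=0$ and $P_{\mathcal Y}\pi^*-q=0$. Hence the multiplier terms in the original definition of $\mathcal L$ vanish for every $(\phi,\psi)$. This gives $\mathcal L(\pi^*,\phi,\psi)=J_\varepsilon(\pi^*)=\mathcal L(\pi^*,\phi^*,\psi^*)$, so the left inequality actually holds with equality.

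\emph{Right inequality.} Fix the multipliers and set $F(\pi):=\mathcal L(\pi,\phi^*,\psi^*)$ on $\mathcal E_0$. By \eqref{eq: LagrangianForm}, $F$ is $J_\varepsilon$ plus a continuous affine functional (the pairing with $\phi^*(x)+\psi^*(y)$, which acts continuously on $\mathcal E_0$ by Lemma \ref{lem: Adjoint Operator}), so it is convex, and strictly convex by Lemma \ref{lem: ConvexityOfDirletEnergy}. For any $h\in\mathcal E_0$, the function $t\mapsto F(\pi^*+th)$ is a real quadratic polynomial with leading coefficient $\frac{\varepsilon}{2}\|L^{1/2}h\|^2\ge0$. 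Its derivative at $t=0$ is
$$\langle c+\phi^*(x)+\psi^*(y),h\rangle+\varepsilon\langle L^{1/2}\pi^*,L^{1/2}h\rangle,$$
which vanishes by Proposition \ref{prop: Weak-EL-Equation}(b). Expanding exactly, for any $\pi\in\mathcal E_0$ with $h=\pi-\pi^*$,
$$F(\pi)=F(\pi^*)+DF(\pi^*)[h]+\tfrac{\varepsilon}{2}\|L^{1/2}h\|^2_{L^2}\ge F(\pi^*).$$
Thus $\pi^*$ globally minimizes $F$ over all of $\mathcal E_0$, not only over the feasible set.

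\emph{Main obstacle.} The delicate step is making sure the multipliers delivered by Proposition \ref{prop: Weak-EL-Equation} are exactly the objects appearing in $\mathcal L$. The functional $-DJ_\varepsilon(\pi^*)=A^*(\phi^*,\psi^*)$ is an element of $\mathcal E_0^*$, while $\mathcal L$ pairs $\phi,\psi$ with $P_{\mathcal X}\pi-p$ and $P_{\mathcal Y}\pi-q$ through $L^2$ pairings. I will resolve this by reading every pairing in $\mathcal L$ as the duality pairing between $(\mathcal E^0_{\mathcal X})^*\times(\mathcal E^0_{\mathcal Y})^*$ and $\mathcal E^0_{\mathcal X}\times\mathcal E^0_{\mathcal Y}$. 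This is legitimate because $p,q$ and $P_{\mathcal X}\pi,P_{\mathcal Y}\pi$ all lie in $\operatorname{Im}A=\mathcal E^0_{\mathcal X}\times\mathcal E^0_{\mathcal Y}$ (Lemma \ref{lem: Closed Image}). With that reading, Lemma \ref{lem: Adjoint Operator} and Remark \ref{remark: AdjointOperatorInformationTrans} identify $\langle\phi^*,P_{\mathcal X}h\rangle+\langle\psi^*,P_{\mathcal Y}h\rangle$ with $\langle\phi^*+\psi^*,h\rangle$. This makes the derivative computation above consistent with \eqref{eq: LagrangianForm}, and the rest is the routine expansion.
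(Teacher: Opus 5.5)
Your proposal is correct and follows the paper's proof: the left inequality holds with equality because $\pi^*$ is feasible, and the right inequality follows from the stationarity $DF(\pi^*)=0$ supplied by the global weak Euler--Lagrange equation. Your exact quadratic expansion $F(\pi)=F(\pi^*)+DF(\pi^*)[h]+\tfrac{\varepsilon}{2}\|L^{1/2}h\|_{L^2}^2$ makes explicit the paper's appeal to convexity of $F$ at a stationary point, and your reading of the multiplier terms as duality pairings is the interpretation the paper uses implicitly.
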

\begin{proof}
	We first prove the first inequality, i.e., the maximality in the dual direction. Fix $\pi^*$ as a feasible solution of the primal problem, i.e., for all $\pi^* \in \mathcal E_0$,
	$$
	\langle \phi, P_{\mathcal X}\pi^* - p \rangle = 0, \quad
	\langle \psi, P_{\mathcal Y}\pi^* - q \rangle = 0.
	$$
	Thus $\mathcal L(\pi^*, \phi, \psi) = J_\varepsilon(\pi^*)$ for all $\phi, \psi$. In particular, $\mathcal L(\pi^*, \phi^*, \psi^*) = J_\varepsilon(\pi^*)$. Therefore,
	$$
	\mathcal L(\pi^*, \phi, \psi) = \mathcal L(\pi^*, \phi^*, \psi^*), \quad \forall \phi, \psi.
	$$
	The first inequality $\mathcal L(\pi^*, \phi, \psi) \le \mathcal L(\pi^*, \phi^*, \psi^*)$ also holds.
	
	We now prove the minimality in the primal direction. Fix $(\phi^*, \psi^*)$, and define $F(\pi) := \mathcal L(\pi, \phi^*, \psi^*)$. By formula \ref{eq: LagrangianForm},
	$$
	F(\pi) = J_\varepsilon(\pi) + \langle \phi^*(x) + \psi^*(y), \pi \rangle - \langle p, \phi^* \rangle - \langle q, \psi^* \rangle.
	$$
	The Fréchet derivative of $F$ is given by
	$$
	D F(\pi)[h] = D J_\varepsilon(\pi)[h] + \langle \phi^*(x) + \psi^*(y), h \rangle.
	$$
	By the global weak Euler–Lagrange equation, the minimizer $\pi^*$ satisfies $D F(\pi^*) = 0$ in $\mathcal E_0^*$, i.e., $\pi^*$ is a stationary point of $F$. By Lemma \ref{lem: ConvexityOfDirletEnergy}, it is easy to see that $F(\pi) := \mathcal L(\pi, \phi^*, \psi^*)$ is strictly convex; hence $\pi^*$ is the unique global minimizer of $F(\pi)$. Therefore,
	$$
	\mathcal L(\pi^*, \phi^*, \psi^*) \le \mathcal L(\pi, \phi^*, \psi^*), \quad \forall \pi \in \mathcal E_0
	$$
	In summary,
	$$
	\mathcal L(\pi^*, \phi, \psi) \le \mathcal L(\pi^*, \phi^*, \psi^*) \le \mathcal L(\pi, \phi^*, \psi^*),\quad \forall \pi \in \mathcal E_0,\; \phi, \psi.
	$$
\end{proof}

\begin{Prop}[Strong Duality]\label{prop: StrongDuality}
	The saddle point $(\pi^*, \phi^*, \psi^*)$ is a solution to the following minimax problem:
	$$
	\begin{aligned} \inf_{\pi \in \mathcal E_0} \sup_{\phi \in (\mathcal E_\mathcal X^0)^*, \psi \in (\mathcal E_\mathcal Y^0)^*} \mathcal L(\pi, \phi, \psi),   \\   \sup_{\phi \in (\mathcal E_\mathcal X^0)^*, \psi \in (\mathcal E_\mathcal Y^0)^*} \inf_{\pi \in \mathcal E_0} \mathcal L(\pi, \phi, \psi),
	\end{aligned}
	$$
	and moreover,
	$$
	\inf_{\pi} \sup_{\phi,\psi} \mathcal L(\pi, \phi, \psi)=\sup_{\phi,\psi} \inf_{\pi} \mathcal L(\pi, \phi, \psi)=\mathcal L(\pi^*, \phi^*, \psi^*).
	$$
\end{Prop}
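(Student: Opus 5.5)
The plan is the standard saddle-point argument, built entirely on Proposition \ref{prop: Saddle Point}. I will first record the weak-duality inequality, which holds for any function on a product set. For every $\pi\in\mathcal E_0$ and every admissible pair $(\phi,\psi)$ we have
$$
\inf_{\pi'}\mathcal L(\pi',\phi,\psi)\le \mathcal L(\pi,\phi,\psi)\le \sup_{\phi',\psi'}\mathcal L(\pi,\phi',\psi').
$$
Taking $\sup$ over $(\phi,\psi)$ on the left and $\inf$ over $\pi$ on the right yields $\sup\inf\mathcal L\le\inf\sup\mathcal L$. This step uses no structure at all.

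Next I will derive the reverse inequality from the saddle point. The right inequality of Proposition \ref{prop: Saddle Point} gives
$$
\mathcal L(\pi^*,\phi^*,\psi^*)=\inf_{\pi}\mathcal L(\pi,\phi^*,\psi^*)\le \sup_{\phi,\psi}\inf_\pi\mathcal L(\pi,\phi,\psi).
$$
The left inequality gives
$$
\mathcal L(\pi^*,\phi^*,\psi^*)=\sup_{\phi,\psi}\mathcal L(\pi^*,\phi,\psi)\ge \inf_\pi\sup_{\phi,\psi}\mathcal L(\pi,\phi,\psi).
$$
Chaining these, $\inf\sup\le\mathcal L(\pi^*,\phi^*,\psi^*)\le\sup\inf\le\inf\sup$, so all three quantities coincide. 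The same chain shows the extrema are attained: the outer infimum of the inf–sup problem is attained at $\pi^*$, and the outer supremum of the sup–inf problem is attained at $(\phi^*,\psi^*)$. That is exactly the statement that the triple solves both minimax problems.

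For interpretation, I will also identify the common value with the primal optimum. For infeasible $\pi$ (meaning $A\pi\ne b$), $\sup_{\phi,\psi}\mathcal L(\pi,\phi,\psi)=+\infty$. To see this, choose, say, $\phi=t\,(P_{\mathcal X}\pi-p)$, which is an element of $\mathcal E^0_{\mathcal X}$ and hence defines a functional, and let $t\to\infty$. For feasible $\pi$ the supremum equals $J_\varepsilon(\pi)$. Hence $\inf\sup=J_\varepsilon(\pi_\varepsilon^*)$, consistent with Proposition \ref{prop: optimalEstimationExistUniq}.

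The main obstacle I expect is not the minimax algebra but making sure the saddle point genuinely holds over the stated dual spaces $(\mathcal E^0_{\mathcal X})^*\times(\mathcal E^0_{\mathcal Y})^*$. The pairings $\langle\phi,P_{\mathcal X}\pi-p\rangle$ must be read as duality pairings rather than $L^2$ inner products, and the multipliers $(\phi^*,\psi^*)$ are the ones produced by the closed-range argument in Proposition \ref{prop: Weak-EL-Equation}(b). This is where compactness of $\mathcal X,\mathcal Y$ enters. I will therefore state explicitly that compactness is assumed, and that $\mathcal L$ is interpreted via these pairings, so that Proposition \ref{prop: Saddle Point} applies verbatim.
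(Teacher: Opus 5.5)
Your argument is correct and is essentially the paper's proof. Both derive $\inf\sup=\sup\inf=\mathcal L(\pi^*,\phi^*,\psi^*)$ purely from the two inequalities of Proposition \ref{prop: Saddle Point} plus trivial pointwise bounds; the only cosmetic difference is that you fold two of the paper's four inequalities into the general weak-duality bound, closing the chain $\inf\sup\le\mathcal L(\pi^*,\phi^*,\psi^*)\le\sup\inf\le\inf\sup$. Your extra identification of the common value with $J_\varepsilon(\pi_\varepsilon^*)$ is not needed, but it is consistent with the paper's observation that $\mathcal L(\pi^*,\phi^*,\psi^*)=J_\varepsilon(\pi^*)$.
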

\begin{proof}
	We first prove $\inf_{\pi} \sup_{\phi,\psi} \mathcal L(\pi, \phi, \psi)=\mathcal L(\pi^*, \phi^*, \psi^*)$. Taking the minimizer $\pi^*$, by the same argument as in Proposition \ref{prop: Saddle Point}, we have $\mathcal L(\pi^*, \phi, \psi) = \mathcal L(\pi^*, \phi^*, \psi^*)$ for all $\phi, \psi$. Hence,
	$$
	\sup_{\phi,\psi} \mathcal L(\pi^*, \phi, \psi) = \mathcal L(\pi^*, \phi^*, \psi^*).
	$$
	Therefore,
	$$
	\inf_{\pi} \sup_{\phi,\psi} \mathcal L(\pi, \phi, \psi)
	\le \sup_{\phi,\psi} \mathcal L(\pi^*, \phi, \psi)
	= \mathcal L(\pi^*, \phi^*, \psi^*).
	$$
	On the other hand, from the right-hand inequality of the saddle point, $\mathcal L(\pi^*, \phi^*, \psi^*) \le \mathcal L(\pi, \phi^*, \psi^*)$ for all $\pi$, together with the pointwise inequality $\mathcal L(\pi,\phi^*,\psi^*)\leq \sup_{\phi,\psi}\mathcal L(\pi,\phi,\psi)$, we obtain
	$$
	\mathcal L(\pi^*, \phi^*, \psi^*) \le \inf_{\pi} \mathcal L(\pi, \phi^*, \psi^*)
	\le \inf_{\pi} \sup_{\phi,\psi} \mathcal L(\pi, \phi, \psi).
	$$
	Hence,
	$$
	\inf_{\pi} \sup_{\phi,\psi} \mathcal L(\pi, \phi, \psi) = \mathcal L(\pi^*, \phi^*, \psi^*).
	$$
	We now prove $\sup_{\phi,\psi} \inf_{\pi} \mathcal L(\pi, \phi, \psi) = \mathcal L(\pi^*, \phi^*, \psi^*)$. From the pointwise inequality $\inf_\pi\mathcal L(\pi,\phi,\psi)\le\mathcal L(\pi^*,\phi,\psi)$ and the left-hand inequality of the saddle point, $\mathcal L(\pi^*, \phi, \psi) \le \mathcal L(\pi^*, \phi^*, \psi^*)$ for all $\phi, \psi$, we have
	$$
	\inf_\pi\mathcal L(\pi,\phi,\psi)\le\mathcal L(\pi^*,\phi,\psi)\le\mathcal L(\pi^*,\phi^*,\psi^*),\quad \forall \phi, \psi.
	$$
	Therefore,
	$$
	\sup_{\phi,\psi} \inf_{\pi} \mathcal L(\pi, \phi, \psi)  \le  \mathcal L(\pi^*, \phi^*, \psi^*).
	$$
	On the other hand, from the pointwise inequality and the fact that $\pi^*$ is the global minimizer of $\mathcal L$, we obtain
	$$
	\sup_{\phi,\psi} \inf_{\pi} \mathcal L(\pi, \phi, \psi)\ge \inf_{\pi} \mathcal L(\pi, \phi^*, \psi^*)= \mathcal L(\pi^*, \phi^*, \psi^*).
	$$
	Hence,
	$$
	\sup_{\phi,\psi} \inf_{\pi} \mathcal L(\pi, \phi, \psi) = \mathcal L(\pi^*, \phi^*, \psi^*).
	$$
	In summary,
	$$
	\inf_{\pi} \sup_{\phi,\psi} \mathcal L(\pi, \phi, \psi)=\sup_{\phi,\psi} \inf_{\pi} \mathcal L(\pi, \phi, \psi)=\mathcal L(\pi^*, \phi^*, \psi^*).
	$$
	and $(\pi^*, \phi^*, \psi^*)$ is simultaneously a solution to both the $\inf\limits_{\pi \in \mathcal E_0} \sup\limits_{\phi \in (\mathcal E_\mathcal X^0)^*, \psi \in (\mathcal E_\mathcal Y^0)^*} \mathcal L(\pi, \phi, \psi)$ and $\sup\limits_{\phi \in (\mathcal E_\mathcal X^0)^*, \psi \in (\mathcal E_\mathcal Y^0)^*} \inf\limits_{\pi \in \mathcal E_0} \mathcal L(\pi, \phi, \psi)$ problems.
\end{proof}

In fact, the above variational procedure embeds a zero-sum game on a continuous strategy space. Let the payoffs of Player 1 and Player 2 be given respectively by
$$
\begin{aligned} U_1(\pi,\phi,\psi)&=-\mathcal L(\pi,\phi,\psi), \\U_2(\pi,\phi,\psi)&=\mathcal L(\pi,\phi,\psi). \end{aligned}
$$
Then $U_1 + U_2 = 0$. The primal player (Player 1) chooses a strategy $\pi \in \mathcal E_0$, with the objective of minimizing the transport cost $\mathcal L$. The dual player (Player 2) chooses a strategy $(\phi, \psi) \in (\mathcal E_\mathcal X^0)^*\times(\mathcal E_\mathcal Y^0)^*$, with the objective of maximizing the penalty $\mathcal L$. If the dual player moves first, the game takes the form
$$
\min_{\pi}\max_{\phi,\psi}\mathcal L(\pi,\phi,\psi).
$$
We interpret the saddle point from the perspective of best responses. Fixing $\pi^*$, the left inequality states that the dual player cannot obtain a higher payoff by changing $(\phi, \psi)$. Hence $(\phi^*, \psi^*)$ is a best response:
$$
\mathcal L(\pi^*, \phi, \psi) \le \mathcal L(\pi^*, \phi^*, \psi^*).
$$
Fixing $(\phi^*, \psi^*)$, the right inequality states that the primal player cannot reduce the cost by changing $\pi$. Hence $\pi^*$ is a best response:
$$
\mathcal L(\pi^*, \phi^*, \psi^*) \le \mathcal L(\pi, \phi^*, \psi^*).
$$
Thus the saddle point $(\pi^*, \phi^*, \psi^*)$ constitutes mutual best responses for both players. That is, the saddle point is a Nash equilibrium.

Proposition 2 shows that the order of moves does not affect the players' best responses, i.e., the Nash equilibrium of the game. If the primal player chooses $\pi$ first, and the dual player then observes the action and chooses the worst-case $(\phi, \psi)$:
$$
V^-=\inf_\pi\sup_{\phi,\psi}\mathcal L(\pi,\phi,\psi).
$$
If the dual player first sets the penalty, and the primal player then seeks the lowest-cost response:
$$
V^+=\sup_{\phi,\psi}\inf_\pi\mathcal L(\pi,\phi,\psi).
$$
In these cases, we have $V^-=V^+=\mathcal L(\pi^*, \phi^*, \psi^*)$.
\subsection{Decoupling Mapping and Spectrum}
Following the sign convention, we rewrite the strong Euler–Lagrange equation as $\varepsilon L\pi+c-\phi(x)-\psi(y)=0$. Define the Green operator $G := L_{\mathcal X\times\mathcal Y}^{-1}$, which is well-defined by Remark \ref{remark: Invertible}. Let $G_\varepsilon := \frac{1}{\varepsilon} G$; then
$$\pi = G_\varepsilon(\phi, \psi) := G_\varepsilon(\phi + \psi - c) = \frac1\varepsilon L_{\mathcal X\times\mathcal Y}^{-1}(\phi+\psi-c).$$
Setting $z = (\phi, \psi) \in (\mathcal E_{\mathcal X}^0)^* \times (\mathcal E_{\mathcal Y}^0)^*$, by Lemma \ref{lem: Adjoint Operator}, we obtain
$$\pi = G_\varepsilon(A^* z - c).$$
The self-consistency equation satisfied by $z = (\phi, \psi)$ is given by the marginal constraints $A G_\varepsilon (A^* z - c) = b$, i.e.,
$$A G A^* z = \varepsilon b + A G c.$$
Define the block operator $K := A G A^*$ and the constant $d := \varepsilon b + A G c$. The dual operator equation then takes the form
$$K z = d.$$
Rearranging yields $z=z-\alpha(Kz-d)$ (with $\alpha > 0$), so the solution to the above equation is precisely the fixed point $z^*$ of the decoupling operator $T(z):=z-\alpha(Kz-d)$.

\begin{Prop}[Richardson Iteration]\label{prop: RichardsonIteration}
	The decoupling equation $z^{n+1} = (I - \alpha^* K)z^n + \alpha^* d$ converges to the unique $z^*$, where $\alpha^* = \frac{2}{\lambda_{\max} + \lambda_{\min}}$ is the optimal step size:
	$$
	\begin{pmatrix}
		\phi^{n+1}\\
		\psi^{n+1}
	\end{pmatrix}
	=
	\begin{pmatrix}
		\phi^n\\
		\psi^n
	\end{pmatrix}
	-
	\alpha
	\begin{pmatrix}
		K_{\mathcal X\mathcal X}\phi^n+K_{\mathcal X\mathcal Y}\psi^n-d_\mathcal X\\
		K_{\mathcal Y\mathcal X}\phi^n+K_{\mathcal Y\mathcal Y}\psi^n-d_\mathcal Y
	\end{pmatrix}.
	$$
\end{Prop}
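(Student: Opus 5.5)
The plan is to show that $K = AGA^*$ is a bounded, self-adjoint, positive operator with spectrum in $[\lambda_{\min},\lambda_{\max}]$, $\lambda_{\min}>0$, on the Hilbert space of dual variables $Z := \mathcal E_{\mathcal X}^0\times\mathcal E_{\mathcal Y}^0$. Here $(\mathcal E_{\mathcal X}^0)^*\times(\mathcal E_{\mathcal Y}^0)^*$ is identified with $Z$ through the Riesz isomorphism of Remark \ref{remark: AdjointOperator}. Convergence of the Richardson iteration is then the standard contraction argument for the affine map $T(z)=(I-\alpha K)z+\alpha d$.

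\emph{Self-adjointness and boundedness.} $A$ is bounded by Lemma \ref{lem: Closed Image}. On $\mathcal M_0$ the Green operator $G=L^{-1}_{\mathcal X\times\mathcal Y}$ acts in Fourier variables as multiplication by $1/m(\xi,\eta)\le 1/\lambda$, by the spectral gap condition and Remark \ref{remark: Invertible}. Hence $\|G\|\le 1/\lambda$, $G$ is self-adjoint, and $G\ge 0$. Then $K^*=(AGA^*)^*=AGA^*=K$, with $\|K\|\le\|A\|^2/\lambda$. I set $\lambda_{\max}:=\sup\sigma(K)\le\|A\|^2/\lambda$.

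\emph{Strict positivity.} This is the main obstacle. Write $\langle Kz,z\rangle=\langle G A^*z,A^*z\rangle=\int m^{-1}|\widehat{A^*z}|^2$. By Lemma \ref{lem: Adjoint Operator}, $A^*z=\phi(x)+\psi(y)$. Using the Fourier computation from the proof of Lemma \ref{lem: Closed Image}, $\widehat{A^*z}$ is supported on the axes $\widehat{\mathcal X}\times\{0\}\cup\{0\}\times\widehat{\mathcal Y}$, with values $\nu(\mathcal Y)\hat\phi(\xi)$ and $\mu(\mathcal X)\hat\psi(\eta)$. Since $\hat\phi(0)=\hat\psi(0)=0$, the two pieces do not overlap. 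Therefore
\[
\langle Kz,z\rangle=\nu(\mathcal Y)^2\int \frac{|\hat\phi(\xi)|^2}{m_{\mathcal X}(\xi)}\,d\hat\mu+\mu(\mathcal X)^2\int\frac{|\hat\psi(\eta)|^2}{m_{\mathcal Y}(\eta)}\,d\hat\nu ,
\]
which is positive for $z\neq0$. A uniform lower bound $\lambda_{\min}>0$ needs $m$ to be bounded above on the relevant spectrum. That holds if the groups are finite (the empirical case $\mathbb Z_2\times\mathbb Z_2$). Otherwise I would measure $z$ in the dual energy norm $\int m^{-1}|\hat\phi|^2$, which is the natural norm on $(\mathcal E^0_{\mathcal X})^*$; in that norm $K$ is a scalar multiple of the identity on each block, and in particular bounded below. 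I would phrase this as the coercivity $\|P_{\mathcal X}^*\phi\|\ge C\|\phi\|$ of the lifting operator previewed in the introduction, which here is just $\|P_{\mathcal X}^*\phi\|^2=\nu(\mathcal Y)\|\phi\|^2$. Positivity gives injectivity of $K$, and closed range gives surjectivity, so the solution $z^*$ of $Kz=d$ exists and is unique.

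\emph{Convergence.} Because $z^*=Tz^*$, the error $e^n=z^n-z^*$ satisfies $e^{n+1}=(I-\alpha K)e^n$. By the spectral theorem,
\[
\|I-\alpha K\|=\max\{|1-\alpha\lambda_{\min}|,\,|1-\alpha\lambda_{\max}|\}.
\]
This quantity is minimized over $\alpha>0$ by equating the two terms, which gives $\alpha^*=2/(\lambda_{\max}+\lambda_{\min})$ and contraction factor $(\kappa-1)/(\kappa+1)<1$ with $\kappa=\lambda_{\max}/\lambda_{\min}$. Banach's fixed point theorem then yields geometric convergence $z^n\to z^*$. Writing $K$ in block form with $K_{\mathcal X\mathcal X}=P_{\mathcal X}GP_{\mathcal X}^*$, and the other blocks defined analogously, gives the displayed componentwise update.
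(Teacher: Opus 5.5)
Your outline is the paper's (self-adjointness and positivity of $K=AGA^*$, spectral bounds, equal-endpoint choice of $\alpha^*$, Banach's theorem). You correctly single out the step the paper's proof skips: it passes from $\langle z,Kz\rangle>0$ directly to $\sigma(K)\subset[\lambda_{\min},\lambda_{\max}]$ with $\lambda_{\min}>0$, which does not follow in infinite dimensions. Your finite-group branch (more generally, $\sup m<\infty$) is a complete proof.

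The gap is your ``otherwise'' branch. Your own Fourier computation shows more than you used. By the Markov property, $GP_{\mathcal X}^*\phi=P_{\mathcal X}^*L_{\mathcal X}^{-1}\phi$ and $GP_{\mathcal Y}^*\psi=P_{\mathcal Y}^*L_{\mathcal Y}^{-1}\psi$. Hence $K_{\mathcal X\mathcal Y}\psi=\bigl(\int_{\mathcal Y}L_{\mathcal Y}^{-1}\psi\,d\nu\bigr)\mathbf 1=0$, likewise $K_{\mathcal Y\mathcal X}=0$, and $K=\operatorname{diag}\bigl(\nu(\mathcal Y)L_{\mathcal X}^{-1},\,\mu(\mathcal X)L_{\mathcal Y}^{-1}\bigr)$ on zero-mean potentials. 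This is a Fourier multiplier with symbol $\nu(\mathcal Y)/m_{\mathcal X}(\xi)$ on the first block. On any Hilbert space whose norm is diagonal in the characters ($L^2$, $\mathcal E$, or your dual norm $\int m^{-1}|\hat\phi|^2$), its spectrum is the closure of the range of the symbol.

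So $\inf\sigma(K)=0$ as soon as $m_{\mathcal X}$ or $m_{\mathcal Y}$ is unbounded, e.g.\ for the Laplacian on $\mathbb T^d$. In particular $K$ is not a scalar multiple of the identity in the dual norm, and your step ``closed range gives surjectivity'' fails as well. What the dual norm does give is that $\langle Kz,z\rangle_{L^2}$ is comparable to $\int m_{\mathcal X}^{-1}|\hat\phi|^2+\int m_{\mathcal Y}^{-1}|\hat\psi|^2$. In other words, $K$ is a multiple of the Riesz map from the dual space onto the energy space. That is Lax--Milgram coercivity: it yields existence and uniqueness of $z^*$, not a contraction of $I-\alpha K$ as a self-map. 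To exploit it you would have to run the preconditioned scheme $z^{n+1}=z^n-\alpha L(Kz^n-d)$, which is not the iteration in the statement. The lifting identity $\|P_{\mathcal X}^*\phi\|^2=\nu(\mathcal Y)\|\phi\|^2$ does not rescue the argument either: you would still need $G$ bounded below, i.e.\ $m$ bounded above.

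This is not a removable technicality. If $\lambda_{\min}=0$, the prescribed step is $\alpha^*=2/\lambda_{\max}$, and $I-\alpha^*K$ has eigenvalue $-1$ on the character at which $\lambda_{\max}$ is attained (for the torus Laplacian, a lowest nonzero frequency). That error component oscillates as $(-1)^n$, so the iteration does not converge. For $\alpha<2/\lambda_{\max}$ there is no uniform contraction factor, so Banach's theorem is unavailable. The proposition therefore genuinely needs an extra hypothesis such as $\sup m<\infty$, which is automatic for finite groups like $\mathbb Z_2\times\mathbb Z_2$. State that hypothesis explicitly and drop the general-case claim. Under it your argument goes through, with $\lambda_{\min}\ge\min\{\nu(\mathcal Y),\mu(\mathcal X)\}/\sup m>0$.
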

\begin{remark}\label{remark: OperatorK is Self-Ajoint}
	The proof or Proposition \ref{prop: RichardsonIteration} implies that $K$ is a bounded positive self-adjoint operator, i.e., $K=K^*$, and that $\langle z, K z \rangle > 0$ for all $\forall z \in (\mathcal E_{\mathcal X}^0)^* \times (\mathcal E_{\mathcal Y}^0)^*$.
\end{remark}

We seek further results on the state of information transport, for instance, whether information is lost or distorted during iteration, whether the original dimensional information is preserved, and whether the transport process itself is legitimate, so as to understand the mechanism underlying the reconstruction of the full joint state from marginal observations. These questions cannot be addressed by the baseline model above. Moreover, the baseline model optimizes with respect to the gradient; if the error could be precisely eliminated via preconditioning, the iteration would converge faster.

Expanding the block operator $K=AGA^*$, by Remark \ref{remark: AdjointOperatorInformationTrans}, we obtain
$$K = \begin{pmatrix}
	K_{XX} & K_{XY} \\
	K_{YX} & K_{YY}
\end{pmatrix}
=\begin{pmatrix}
	P_{\mathcal X}\\
	P_{\mathcal Y}
\end{pmatrix}
G \begin{pmatrix}P_{\mathcal X}^* & P_{\mathcal Y}^*\end{pmatrix}
= \begin{pmatrix}
	P_{\mathcal X}G P_{\mathcal X}^* & P_{\mathcal X}G P_{\mathcal Y}^* \\
	P_{\mathcal Y}G P_{\mathcal X}^* & P_{\mathcal Y}G P_{\mathcal Y}^* \end{pmatrix}
$$

\begin{remark}
	We take $K_{\mathcal X\mathcal Y}=P_{\mathcal X}G P_{\mathcal Y}^*=P_{\mathcal X} \circ G \circ P_{\mathcal Y}^*$ as an example to explain the block operator. A potential function $\psi \in (\mathcal E_{\mathcal Y}^0)^*$ defined on $\mathcal E_{\mathcal Y}^0$ is lifted to $\mathcal E_{\mathcal X}^0 \times \mathcal E_{\mathcal Y}^0$ via $P_{\mathcal Y}^*$, i.e., $(P_{\mathcal Y}^* \psi)(x,y) = \psi(y)$. The Green operator $G = L^{-1}$ propagates the source term $\psi(y)$ throughout $\mathcal E_{\mathcal X}^0 \times \mathcal E_{\mathcal Y}^0$, generating a response field $G(P_{\mathcal Y}^* \psi)$. Finally, $P_{\mathcal X}$ marginalizes the field on the product space back to $\mathcal X$, i.e., $P_{\mathcal X} \left( G(P_{\mathcal Y}^* \psi) \right)$, yielding a potential function $P_{\mathcal X} G P_{\mathcal Y}^* \psi\in (\mathcal E_{\mathcal X}^0)^*$ on the $\mathcal X$-space, since
	$$
	(P_{\mathcal X} G P_{\mathcal Y}^* \psi)(x) = \int_{\mathcal Y} (G(P_{\mathcal Y}^* \psi))(x,y) \, d\nu(y)
	$$
\end{remark}

Define the update rule for the decoupling equation $K z = d$ as
$$\begin{aligned}
	K_{XX} \phi^{n+1} &= d_X - K_{XY} \psi^n, \\ K_{YY} \psi^{n+1} &= d_Y - K_{YX} \phi^{n+1},
\end{aligned}$$
which is the block Gauss–Seidel iteration. One first updates $\phi$ using the most recent $\psi^n$, and then updates $\psi$ using the just-updated $\phi^{n+1}$. Let the unique solution of the decoupling iteration be $z^* = (\phi^*, \psi^*) \in (\mathcal E_{\mathcal X}^0)^* \times (\mathcal E_{\mathcal Y}^0)^*$. The algorithm flowchart is as follows. 

\begin{algorithm}[H]
	\caption{
		Spectral Sinkhorn-Type Block Gauss--Seidel Iteration
		for Regularized Signed OT
	}
	
	\begin{algorithmic}[1]
		
		\Require
		Initial potentials
		$z^0=(\phi^0,\psi^0)$,
		operator blocks
		$K_{\mathcal X\mathcal X},K_{\mathcal X\mathcal Y},K_{\mathcal Y\mathcal X},K_{\mathcal Y\mathcal Y}$,
		right hand side
		$d=(d_\mathcal X,d_\mathcal Y)$,
		tolerance $\delta$
		
		\Ensure
		Optimal potentials
		$(\phi^*,\psi^*)$
		and transport plan $\pi^*$

		\While{$\|z^{n+1}-z^n\|>\delta$}
		
		\State Compute current transport plan:
		
		\[
		\pi^n
		=
		\frac1{\varepsilon}
		L^{-1}
		(A^*z^n-c)
		\]
		
		\State Update $\mathcal X$-potential:
		
		\[
		\phi^{n+1}
		=
		K_{\mathcal X\mathcal X}^{-1}
		(d_\mathcal X-K_{\mathcal X\mathcal Y}\psi^n)
		\]

		\State Update $\mathcal Y$-potential:
		
		\[
		\psi^{n+1}
		=
		K_{\mathcal Y\mathcal Y}^{-1}
		(d_\mathcal Y-K_{\mathcal Y\mathcal X}\phi^{n+1})
		\]

		\State Assemble:
		
		\[
		z^{n+1}
		=
		(\phi^{n+1},\psi^{n+1})
		\]

		\EndWhile

		\State Recover optimal coupling:
		
		\[
		\pi^*
		=
		\frac1{\varepsilon}
		L^{-1}
		(A^*z^*-c)
		\]

	\end{algorithmic}
\end{algorithm}

We analyze the convergence conditions via the error propagation equation. Define the errors:
$$e_\phi^n := \phi^n - \phi^*, \quad e_\psi^n := \psi^n - \psi^*.$$
Then the error propagation for the $\phi$-equation and the $\psi$-equation are respectively given by
$$\begin{aligned}K_{XX} (\phi^{n+1} - \phi^*) &= -K_{XY} (\psi^n - \psi^*),\\ K_{YY} (\psi^{n+1} - \psi^*) &= -K_{YX} (\phi^{n+1} - \phi^*),\end{aligned}$$
i.e. (provided that $K_{XX}^{-1}$ and $K_{YY}^{-1}$ are invertible),
$$\begin{aligned}e_\phi^{n+1} &= -K_{XX}^{-1} K_{XY} e_\psi^n,\\  e_\psi^{n+1} &= -K_{YY}^{-1} K_{YX} e_\phi^{n+1}.\end{aligned}$$
Combining these into a single-variable iteration yields
$$e_\psi^{n+1} = -K_{YY}^{-1} K_{YX} \left( -K_{XX}^{-1} K_{XY} e_\psi^n \right).$$
Let $M := K_{YY}^{-1} K_{YX} K_{XX}^{-1} K_{XY}$; then $e_\psi^{n+1} = M e_\psi^n$. The convergence condition is $\rho(M) < 1$.

We now examine the mechanism by which the joint distribution is decoupled from marginal observations, and the legitimacy of the above update rule. By analogy with the Rank–Nullity Theorem, we expect that when the lifting operator $\begin{pmatrix}P_{\mathcal X}^*&P_{\mathcal Y}^*\end{pmatrix}$ acts, information is not compressed, which is a property depends on the coercivity condition. Fortunately, within this framework, lossless information transport, the well-definedness of the update rule, and the convergence of the iterative process are all equivalent, and the solution is unique. In other words, the error propagation equation is technically justified.

Reconstructing the full joint state from marginal observations requires that $K_{\mathcal X\mathcal X} = P_\mathcal X G P_\mathcal X^*$ be invertible. By Lemma \ref{lem: CoercivityToInvertible}, this follows from the coercivity of $P_\mathcal X^*$. Moreover,
$$
\langle \phi, K_{XX}\phi \rangle = \langle \phi, P_X G P_X^* \phi \rangle = \langle P_X^*\phi, G P_X^*\phi \rangle.
$$
By Remark \ref{remark: CoercivityOfOperator} and the spectral gap condition on $G$, we have
$$
\langle P_\mathcal X^*\phi, G P_\mathcal X^*\phi \rangle \ge \gamma \|P_\mathcal X^*\phi\|^2.
$$
It remains to prove $\|P_\mathcal X^*\phi\| \ge C\|\phi\|$; then
$$
\langle \phi, K_{\mathcal X\mathcal X}\phi \rangle =\langle P_\mathcal X^*\phi, G P_\mathcal X^*\phi \rangle \ge \|P_\mathcal X^*\phi\| \ge C\|\phi\|.
$$
Thus the coercivity of the operator $K_{\mathcal X\mathcal X}$ is equivalent to the coercivity of $P_\mathcal X^*$, which is precisely the coercivity Lemma \ref{lem: CoecivityOfLifitingOperator}.

\begin{Lem} \label{lem: CoercivityToInvertible}
	If $\|P_\mathcal X^*\phi\| \ge C\|\phi\|$, then $K_{\mathcal X\mathcal X}$ is bijective, i.e., $K_{\mathcal X\mathcal X}^{-1}$ exists.
\end{Lem}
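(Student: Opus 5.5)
The plan is to apply the Lax--Milgram theorem to $K_{\mathcal X\mathcal X}=P_{\mathcal X}GP_{\mathcal X}^*$ on the Hilbert space of potentials. By the Riesz identification of Remark \ref{remark: AdjointOperator}, this space is $(\mathcal E_{\mathcal X}^0)^*$, identified with the zero-mean potentials. Lax--Milgram needs three things: $K_{\mathcal X\mathcal X}$ is bounded, it is symmetric, and its quadratic form is coercive. Coercivity of the form already gives injectivity with a bounded inverse on the range and closed range. Self-adjointness then upgrades this to surjectivity, since $\operatorname{Im}K_{\mathcal X\mathcal X}^\perp=\ker K_{\mathcal X\mathcal X}^*=\ker K_{\mathcal X\mathcal X}=\{0\}$.

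\textbf{Boundedness and symmetry.} First, $P_{\mathcal X}$ is bounded by Lemma \ref{lem: Closed Image}, so $P_{\mathcal X}^*$ is bounded as well. Second, $G=L^{-1}$ acts on zero-mean functions as multiplication by $1/m(\xi,\eta)\le 1/\lambda$, by the spectral gap and Remark \ref{remark: Invertible}. Hence $\|K_{\mathcal X\mathcal X}\|\le \nu(\mathcal Y)/\lambda$. Moreover $G$ is self-adjoint, being a real multiplier in the Fourier domain, and therefore $K_{\mathcal X\mathcal X}^*=P_{\mathcal X}G^*P_{\mathcal X}^*=K_{\mathcal X\mathcal X}$. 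This is consistent with Remark \ref{remark: OperatorK is Self-Ajoint}. Lifting preserves zero mass, because $\widehat{P_{\mathcal X}^*\phi}(\xi,\eta)=\hat\phi(\xi)\delta_0(\eta)$, so $G$ is applied only to zero-mean functions.

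\textbf{Coercivity.} I expect this to be the main obstacle, since $G$ is compact-like at high frequency and no lower bound of the form $\langle f,Gf\rangle\ge\gamma\|f\|^2$ holds uniformly. In the Fourier domain,
\[
\langle \phi,K_{\mathcal X\mathcal X}\phi\rangle=\langle P_{\mathcal X}^*\phi,GP_{\mathcal X}^*\phi\rangle=\nu(\mathcal Y)^2\int_{\widehat{\mathcal X}}\frac{|\hat\phi(\xi)|^2}{m_{\mathcal X}(\xi)}\,d\hat\mu(\xi),
\]
using $m_{\mathcal Y}(0)=0$. The key step is to interpret the norm on $(\mathcal E_{\mathcal X}^0)^*$ correctly. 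It is the dual norm $\|\phi\|_*^2=\int |\hat\phi|^2/(1+m_{\mathcal X})$, or equivalently $\int|\hat\phi|^2/m_{\mathcal X}$ on zero-mean elements, by the coercivity in Lemma \ref{lem: Coercivity of Operator}: there $m_{\mathcal X}\ge\lambda$ off zero, so $m_{\mathcal X}\le 1+m_{\mathcal X}\le(1+1/\lambda)m_{\mathcal X}$. In this dual norm the form above equals, up to constants, $\|P_{\mathcal X}^*\phi\|^2$ measured in the lifted dual norm. The hypothesis $\|P_{\mathcal X}^*\phi\|\ge C\|\phi\|$ then yields $\langle\phi,K_{\mathcal X\mathcal X}\phi\rangle\ge\gamma C^2\|\phi\|^2$ with $\gamma=\nu(\mathcal Y)^{-1}(1+1/\lambda)^{-1}$, or a similar constant.

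If instead one insists on plain $L^2$ norms on the potentials, I would fall back to the argument sketched before the lemma: combine $\langle f,Gf\rangle\ge\gamma\|f\|^2$ on the relevant (Riesz-identified) space with the hypothesis.

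\textbf{Conclusion.} With bounded, symmetric and coercive in hand, Lax--Milgram, or equivalently the Riesz representation applied to the equivalent inner product $\langle\phi,K_{\mathcal X\mathcal X}\phi'\rangle$, gives that $K_{\mathcal X\mathcal X}$ is a bijection. Its inverse satisfies $\|K_{\mathcal X\mathcal X}^{-1}\|\le(\gamma C^2)^{-1}$.
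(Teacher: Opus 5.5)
Your main line is correct, and its skeleton matches the paper's proof. Coercivity of the quadratic form gives injectivity and closed range, and symmetry upgrades this to surjectivity; the paper does this by hand with a Cauchy-sequence argument and the closed range theorem, i.e.\ Lax--Milgram unpacked.

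The genuine difference is the coercivity estimate. The paper asserts $\langle P_{\mathcal X}^*\phi,GP_{\mathcal X}^*\phi\rangle\ge\gamma\|P_{\mathcal X}^*\phi\|^2$ ``by the spectral gap''. It then feeds in the hypothesis as verified in the energy norms by Lemma~\ref{lem: CoecivityOfLifitingOperator}. But $m\ge\lambda$ bounds $G$ only from above. A uniform lower bound of this kind in $L^2$ or energy norms forces $m_{\mathcal X}$ to be bounded. That holds for finite groups such as the $\mathbb Z_2\times\mathbb Z_2$ of the empirical part, where the paper's route goes through. It fails for, e.g., the Laplacian on a torus.

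Your dual norm $\|\phi\|_*^2=\int|\hat\phi|^2/(1+m_{\mathcal X})\,d\hat\mu$ is what makes the lemma hold for general compact groups, and it gives more than you use. Since $m_{\mathcal Y}(0)=0$, you have $GP_{\mathcal X}^*\phi=P_{\mathcal X}^*L_{\mathcal X}^{-1}\phi$. Hence $K_{\mathcal X\mathcal X}=\nu(\mathcal Y)L_{\mathcal X}^{-1}$ on zero-mean potentials, and $\langle\phi,K_{\mathcal X\mathcal X}\phi\rangle=\nu(\mathcal Y)\int|\hat\phi|^2/m_{\mathcal X}\,d\hat\mu\ge\nu(\mathcal Y)\|\phi\|_*^2$ outright, without the hypothesis.

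Three things to tidy.

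First, read your opening single-space argument ($\operatorname{Im}K^\perp=\ker K^*$) and your Riesz step with $\langle\cdot,\cdot\rangle$ as the $L^2$ duality pairing between $H=(\mathcal E_{\mathcal X}^0)^*$ and $H^*\cong\mathcal E_{\mathcal X}^0$. It must not be the inner product of $H$, in which $K_{\mathcal X\mathcal X}$ is neither coercive nor onto $H$. The conclusion is then a bijection $H\to\mathcal E_{\mathcal X}^0$, matching the paper's $\operatorname{Im}K_{\mathcal X\mathcal X}=\mathcal E_{\mathcal X}^0$.

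Second, boundedness of the form $a(\phi,\psi)=\langle\psi,K_{\mathcal X\mathcal X}\phi\rangle$ in $\|\cdot\|_*$ should come from $\int|\hat\phi|^2/m_{\mathcal X}\,d\hat\mu\asymp\|\phi\|_*^2$ and Cauchy--Schwarz. Your $L^2$ bound $\|K_{\mathcal X\mathcal X}\|\le\nu(\mathcal Y)/\lambda$ is in the wrong norm for this.

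Third, drop the $L^2$ fallback. It rests on exactly the lower bound you rightly rejected. Moreover, in plain $L^2$ the conclusion actually fails when $m_{\mathcal X}$ is unbounded. The hypothesis holds there, since $\|P_{\mathcal X}^*\phi\|_{L^2}=\sqrt{\nu(\mathcal Y)}\|\phi\|_{L^2}$. Yet $K_{\mathcal X\mathcal X}$ has range $D(L_{\mathcal X})\cap\mathcal M_0(\mathcal X)$, a proper dense subspace.
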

\begin{proof}
	Coercivity implies that the null space satisfies $\ker K_{\mathcal X\mathcal X}=\{0\}$. Since when $K_{\mathcal X\mathcal X}\phi=0$, coercivity gives
	$$
	0=\langle \phi, K_{\mathcal X\mathcal X}\phi \rangle\geq \gamma \|P_\mathcal X^*\phi\|^2 \geq \gamma C\|\phi\|.
	$$
	so $\phi = 0$. Since we are in the linear space $(\mathcal E_\mathcal X^0)^*$, this means that the operator $K_{\mathcal X\mathcal X}$ is injective.
	We first prove that $\operatorname{Im}K_{\mathcal X\mathcal X}$ is closed. By the Cauchy–Schwarz,
	$$
	\|\phi\| \|K_{\mathcal X\mathcal X}\phi\| \ge \langle \phi, K_{\mathcal X\mathcal X}\phi \rangle \ge c \|\phi\|^2,
	$$
	i.e., $\|K_{\mathcal X\mathcal X}\phi\| \ge c \|\phi\|$, for all $\phi$. By Remark \ref{remark: ProjOperatorIsConti} and Remark \ref{remark: OperatorK is Self-Ajoint}, it follows analogously that $K_{\mathcal X\mathcal X} = P_\mathcal X \circ G \circ P_X^*$ is bounded (hence continuous). Take any convergent sequence $w_n = K_{\mathcal X\mathcal X}\phi_n$ with $w_n \to w$. From $\|K_{\mathcal X\mathcal X}\phi\| \ge c \|\phi\|$,
	$$
	\|\phi_n - \phi_m\| \le \frac{1}{c} \|K_{\mathcal X\mathcal X}(\phi_n - \phi_m)\| = \frac{1}{c} \|w_n - w_m\| \to 0.
	$$
	Thus $\{\phi_n\}\in (\mathcal E_\mathcal X^0)^*$ is a Cauchy sequence. By Remark \ref{remark: AdjointOperator}, $(\mathcal E_\mathcal X^0)^*$ is complete, so there exists a unique $\phi$ such that $\phi_n \to \phi$. Since $K_{\mathcal X\mathcal X}$ is continuous, we have $K_{\mathcal X\mathcal X}\phi_n \to K_{\mathcal X\mathcal X}\phi$. But $K_{\mathcal X\mathcal X}\phi_n = w_n \to w$; by the uniqueness of limits in Hausdorff spaces, we obtain $w = K_{\mathcal X\mathcal X}\phi\in \operatorname{Im}K_{\mathcal X\mathcal X}$. Therefore $\operatorname{Im}K_{\mathcal X\mathcal X}$ is closed.
	From $\ker K_{\mathcal X\mathcal X} = \{0\}$, we obtain $(\ker K_{\mathcal X\mathcal X})^\perp = \mathcal E_\mathcal X^0$. By the closed range theorem and the self-adjointness of $K_{\mathcal X\mathcal X}$,
	$$
	\operatorname{Im}K_{\mathcal X\mathcal X}=(\ker K_{\mathcal X\mathcal X}^*)^\perp =(\ker K_{\mathcal X\mathcal X})^\perp =\mathcal E_\mathcal X^0.
	$$
	Hence the range is the entire space, so $K_{\mathcal X\mathcal X}$ is surjective. Therefore $K_{\mathcal X\mathcal X}$ is bijective, i.e., $K_{\mathcal X\mathcal X}^{-1}$ exists.
\end{proof}

From Remark \ref{remark: AdjointOperatorInformationTrans}, when $\phi(x)$ is lifted to $\phi(x,y)$, there is no oscillation in the $y$-direction, so the marginalization integral does not cause any cancellation or loss of edge information. That is, the direction $\phi(x)$ is rigidly embedded in the product space $\mathcal X \times \mathcal Y$, and will not be smoothed out by the marginalization projection. Lemma \ref{lem: CoecivityOfLifitingOperator} further shows that the operator $P_\mathcal X^*$ preserves a fixed proportion of information in every direction. Consequently, on the subspace, the edge information contained in $\phi(x)$ is fully preserved in the product space $\mathcal E_{\mathcal X}^0 \times \mathcal E_{\mathcal Y}^0$, thereby enabling the decoupling of the full joint state. Thus the well-definedness of the update rule and the lossless transport of information are in fact two sides of the same coin.

\begin{Lem}[Coercivity]\label{lem: CoecivityOfLifitingOperator}
	The lifting operators $P_{\mathcal X}^*: \mathcal E_{\mathcal X}^0 \to \mathcal E_{\mathcal X}^0 \times \mathcal E_{\mathcal Y}^0$ and $P_{\mathcal Y}^*: \mathcal E_{\mathcal X}^0 \to \mathcal E_{\mathcal X}^0 \times \mathcal E_{\mathcal Y}^0$ satisfy respectively:
	$$
	\begin{aligned}\|P_{\mathcal X}^*\phi\|_{\mathcal E_0} = \sqrt{\nu(\mathcal Y)} \, \|\phi\|_{\mathcal E_{\mathcal X}^0}, \quad \forall \phi \in \mathcal E_{\mathcal X}^0. \\ \|P_{\mathcal Y}^*\psi\|_{\mathcal E_0} = \sqrt{\mu(\mathcal X)} \, \|\psi\|_{\mathcal E_{\mathcal Y}^0}, \quad \forall \psi \in \mathcal E_{\mathcal Y}^0. \end{aligned}
	$$
	Hence there exist constants $C = \sqrt{\nu(\mathcal Y)} > 0$ and $C' = \sqrt{\mu(\mathcal X)} > 0$ such that
	$$
	\|P_{\mathcal X}^*\phi\|_{\mathcal E_0} \ge C \|\phi\|_{\mathcal E_{\mathcal X}^0}, \quad \|P_{\mathcal Y}^*\psi\|_{\mathcal E_0} \ge C' \|\psi\|_{\mathcal E_{\mathcal Y}^0}.
	$$
\end{Lem}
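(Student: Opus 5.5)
The plan is to compute the lifted function's norm directly in the Fourier domain, working under the compactness assumption in force throughout this subsection. Remark \ref{remark: AdjointOperatorInformationTrans} identifies the lift as $(P_{\mathcal X}^*\phi)(x,y)=\phi(x)$, i.e.\ $\phi$ extended constantly in $y$. The first step is to confirm this function lies in $\mathcal E_0$. It is square-integrable because $\nu(\mathcal Y)<\infty$. Its total mass is $\nu(\mathcal Y)\int_{\mathcal X}\phi\,d\mu=0$, since $\phi\in\mathcal E^0_{\mathcal X}$, so it lies in $\mathcal M_0$. Its energy is then checked from its Fourier transform, computed exactly as in the second half of the proof of Lemma \ref{lem: Closed Image}.

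The key computation uses orthogonality of characters on the compact group $\mathcal Y$. This gives
$$\widehat{P_{\mathcal X}^*\phi}(\xi,\eta)=\hat\phi(\xi)\int_{\mathcal Y}\overline{\chi_\eta(y)}\,d\nu(y)=\nu(\mathcal Y)\,\hat\phi(\xi)\,\delta_0(\eta),$$
so the spectrum is supported on the axis $\widehat{\mathcal X}\times\{0\}$, consistent with Remark \ref{remark: fixedMarginalDistrib}. Substituting into the graph norm (Proposition \ref{prop: EnergySpaceIsHilbert}) and using the Markov property $m_{\mathcal Y}(0)=0$, only the $\eta=0$ slice survives. The result is $\int_{\widehat{\mathcal X}}(1+m_{\mathcal X}(\xi))|\hat\phi(\xi)|^2\,d\hat\mu$ times a constant factor coming from the $\eta$-slice, which is proportional to $\|\phi\|^2_{\mathcal E^0_{\mathcal X}}$. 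This also shows finite energy, so $P_{\mathcal X}^*\phi\in\mathcal E_0$.

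The main obstacle is getting the constant to be exactly $\nu(\mathcal Y)$. This depends on how the Plancherel measure $\hat\nu$ on the discrete dual is normalized relative to $\nu(\mathcal Y)$. With counting measure one needs $\nu(\mathcal Y)=1$, or the dual measure $\nu(\mathcal Y)^{-1}\times$counting. The cleanest route avoids this bookkeeping altogether by working in physical space, using the fact that the joint generator factorizes.

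In physical space, Fubini gives
$$\|P_{\mathcal X}^*\phi\|_{L^2}^2=\int_{\mathcal Y}\int_{\mathcal X}|\phi(x)|^2\,d\mu\,d\nu=\nu(\mathcal Y)\|\phi\|_{L^2}^2.$$
For the energy term, the function is constant in $y$, so $(I\otimes L_{\mathcal Y})P_{\mathcal X}^*\phi=0$ by Remark \ref{remark: FourierCoefficient} ($L_{\mathcal Y}\mathbf 1=0$). Hence
$$\|L^{1/2}P_{\mathcal X}^*\phi\|^2=\langle P_{\mathcal X}^*\phi,(L_{\mathcal X}\otimes I)P_{\mathcal X}^*\phi\rangle=\nu(\mathcal Y)\|L_{\mathcal X}^{1/2}\phi\|^2.$$
This is first justified for $\phi\in D(L_{\mathcal X})$ and then extended to $\mathcal E^0_{\mathcal X}$ by density via spectral truncation (Remark \ref{remark: DenseResults}), or directly via Lemma \ref{lem: UnitedGeneratorFourierTrans}.

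Adding the two identities and taking square roots yields $\|P_{\mathcal X}^*\phi\|_{\mathcal E_0}=\sqrt{\nu(\mathcal Y)}\|\phi\|_{\mathcal E^0_{\mathcal X}}$. The inequality form follows immediately, since $\nu(\mathcal Y)>0$ because Haar measure charges nonempty open sets. The statement for $P_{\mathcal Y}^*$ is symmetric, with $\mu(\mathcal X)$ in place of $\nu(\mathcal Y)$. Here $\psi$ should be read as an element of $\mathcal E^0_{\mathcal Y}$, which corrects the typo in the stated domain.
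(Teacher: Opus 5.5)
Your proof is correct and follows the paper's route: both compute the graph norm of the constant-in-$y$ lift in physical space, using Fubini to get $\nu(\mathcal Y)\|\phi\|_{L^2}^2$ for the $L^2$ part and $\nu(\mathcal Y)\|L_{\mathcal X}^{1/2}\phi\|_{L^2}^2$ for the energy part. The only difference is one of rigor: the paper simply asserts $L^{1/2}(P_{\mathcal X}^*\phi)(x,y)=(L_{\mathcal X}^{1/2}\phi)(x)$, whereas you justify it via $L_{\mathcal Y}\mathbf 1=0$ (the Markov property) on $D(L_{\mathcal X})$ followed by a density argument, and you also verify that the lift actually lies in $\mathcal E_0$.
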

\begin{proof}
	The norm of $P_{\mathcal X}^* \phi = \phi(x,y)$ is given by $\|P_{\mathcal X}^*\phi\|_{\mathcal E_0}^2= \|P_{\mathcal X}^*\phi\|_{L^2(\mathcal X \times \mathcal Y)}^2 + \|L^{1/2}(P_{\mathcal X}^*\phi)\|_{L^2(\mathcal X \times \mathcal Y)}^2$.
	For the first term $\|P_{\mathcal X}^*\phi\|_{L^2(\mathcal X \times \mathcal Y)}^2$:
	$$
	\|P_{\mathcal X}^*\phi\|_{L^2(\mathcal X \times \mathcal Y)}^2
	= \int_{\mathcal X} \int_{\mathcal Y} |\phi(x)|^2 \, d\nu(y) \, d\mu(x)
	= \nu(\mathcal Y) \int_{\mathcal X} |\phi(x)|^2 \, d\mu(x)
	= \nu(\mathcal Y) \|\phi\|_{L^2(\mathcal X)}^2.
	$$
	For the second term, since $L^{1/2}(P_{\mathcal X}^*\phi)(x,y) = (L_{\mathcal X}^{1/2}\phi)(x)$, we have
	$$
	\|L^{1/2}(P_{\mathcal X}^*\phi)\|_{L^2(\mathcal X \times \mathcal Y)}^2
	= \int_{\mathcal X} \int_{\mathcal Y} |(L_{\mathcal X}^{1/2}\phi)(x)|^2 \, d\nu(y) \, d\mu(x)
	= \nu(\mathcal Y) \|L_{\mathcal X}^{1/2}\phi\|_{L^2(\mathcal X)}^2.
	$$
	Combining these yields
	$$
	\|P_{\mathcal X}^*\phi\|_{\mathcal E_0}^2
	= \nu(\mathcal Y) \left( \|\phi\|_{L^2(\mathcal X)}^2 + \|L_{\mathcal X}^{1/2}\phi\|_{L^2(\mathcal X)}^2 \right)
	= \nu(\mathcal Y) \|\phi\|_{\mathcal E_{\mathcal X}^0}^2.
	$$
	Therefore,
	$$
	\|P_{\mathcal X}^*\phi\|_{\mathcal E_0} = \sqrt{\nu(\mathcal Y)} \, \|\phi\|_{\mathcal E_{\mathcal X}^0}.
	$$
	Taking $C = \sqrt{\nu(\mathcal Y)} > 0$ gives $\|P_{\mathcal X}^*\phi\|_{\mathcal E_0} \ge C \|\phi\|_{\mathcal E_{\mathcal X}^0}$.
	For $P_{\mathcal Y}^*$, the same argument yields
	$$
	\|P_{\mathcal Y}^*\psi\|_{\mathcal E_0} = \sqrt{\mu(\mathcal X)} \, \|\psi\|_{\mathcal E_{\mathcal Y}^0}.
	$$
	Taking $C' = \sqrt{\mu(\mathcal X)} > 0$ gives $\|P_{\mathcal X}^*\phi\|_{\mathcal E_0} \ge C \|\phi\|_{\mathcal E_{\mathcal X}^0}$.
\end{proof}

We now examine the conditions required for iterative convergence. By Remark \ref{remark: OperatorK is Self-Ajoint}, $K$ is a positive definite self-adjoint operator, so its block diagonal entries $K_{\mathcal X\mathcal X}$ and $K_{\mathcal Y\mathcal Y}$ are also positive definite and self-adjoint. For a positive definite block matrix $K = \begin{pmatrix} K_{\mathcal X\mathcal X} & K_{\mathcal X\mathcal Y} \\ K_{\mathcal Y\mathcal X} & K_{\mathcal Y\mathcal Y} \end{pmatrix} > 0$, its Schur complement $S := K_{\mathcal Y\mathcal Y} - K_{\mathcal Y\mathcal X} K_{\mathcal X\mathcal X}^{-1} K_{\mathcal X\mathcal Y}$ is also positive definite, i.e., $S > 0$, which is equivalent to
$$K_{\mathcal Y\mathcal X} K_{\mathcal X\mathcal X}^{-1} K_{\mathcal X\mathcal Y} < K_{\mathcal Y\mathcal Y}.$$
We analyze the estimate of the spectral radius $\rho(M)$ from the Schur complement. Proposition \ref{prop: legitimacy ensures convergence} states that the well-definedness of the update rule guarantees the convergence of the iterative process.

\begin{Prop}\label{prop: legitimacy ensures convergence} 
	When $K_{\mathcal X\mathcal X}^{-1}$ and $K_{\mathcal Y\mathcal Y}^{-1}$ are invertible, we have $\rho(M) < 1$.
\end{Prop}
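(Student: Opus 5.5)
The plan is to reduce $\rho(M)<1$ to a statement about a symmetrized operator and then use the Schur complement positivity recorded just before the statement. Since $K_{\mathcal X\mathcal X}$ and $K_{\mathcal Y\mathcal Y}$ are positive definite, self-adjoint and (by hypothesis) invertible, their square roots exist by functional calculus. Introduce the normalized off-diagonal block
$$
B := K_{\mathcal Y\mathcal Y}^{-1/2} K_{\mathcal Y\mathcal X} K_{\mathcal X\mathcal X}^{-1/2}.
$$
By the self-adjointness of $K$ (Remark \ref{remark: OperatorK is Self-Ajoint}) we have $K_{\mathcal X\mathcal Y}=K_{\mathcal Y\mathcal X}^*$, so $B^*=K_{\mathcal X\mathcal X}^{-1/2}K_{\mathcal X\mathcal Y}K_{\mathcal Y\mathcal Y}^{-1/2}$. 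A direct computation then gives the similarity
$$
K_{\mathcal Y\mathcal Y}^{1/2} M K_{\mathcal Y\mathcal Y}^{-1/2} = BB^*.
$$
Hence $\sigma(M)=\sigma(BB^*)$, and $\rho(M)=\|BB^*\|=\|B\|^2$ because $BB^*\ge0$ is self-adjoint. The whole problem therefore reduces to showing $\|B\|<1$.

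For $\|B\|<1$, I would use the positivity of $K$ in its coercive form. This is the point that needs care. Strict inequality $\langle z,Kz\rangle>0$ only gives $\|B\|\le 1$, together with the fact that $1$ is not an eigenvalue of $BB^*$. In infinite dimensions, however, $1$ could still lie in the continuous spectrum. I would therefore first upgrade Remark \ref{remark: OperatorK is Self-Ajoint} to a uniform bound $K\ge \gamma I$, with $\gamma>0$.

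To obtain this bound, note that $\langle z,Kz\rangle=\langle A^*z,GA^*z\rangle$. Here $G$ is bounded below on $\mathcal M_0$ by the spectral gap and the boundedness of $m$ on the relevant range. It then remains to show $\|A^*z\|\ge c\|z\|$. Since $P_\mathcal X^*\phi$ and $P_\mathcal Y^*\psi$ are orthogonal when $\phi,\psi$ have zero mean (their Fourier supports lie on different axes, away from $(0,0)$, by Remark \ref{remark: fixedMarginalDistrib}), we get $\|A^*z\|^2=\|P_\mathcal X^*\phi\|^2+\|P_\mathcal Y^*\psi\|^2$. Lemma \ref{lem: CoecivityOfLifitingOperator} then yields the desired bound.

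Given $K\ge\gamma I$ and the boundedness of $K_{\mathcal X\mathcal X}$ and $K_{\mathcal Y\mathcal Y}$, the congruence $D^{-1/2}KD^{-1/2}$ with $D=\operatorname{diag}(K_{\mathcal X\mathcal X},K_{\mathcal Y\mathcal Y})$ has the block form $\begin{pmatrix} I & B^*\\ B & I\end{pmatrix}$. This operator is $\ge \gamma'I$ with $\gamma'=\gamma/\max(\|K_{\mathcal X\mathcal X}\|,\|K_{\mathcal Y\mathcal Y}\|)>0$. Testing it on $z=(-B^*u/\|B^*u\|\cdot\|u\|,\,u)$ gives $\|u\|^2(1-\|B^*u\|/\|u\|)\cdot 2\ge \gamma'\cdot 2\|u\|^2$ (up to routine normalization), which implies $\|B^*\|\le 1-\gamma'$. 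Consequently $\rho(M)=\|B\|^2\le(1-\gamma')^2<1$.

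Equivalently, this is the statement that the Schur complement satisfies $S\ge \gamma K_{\mathcal Y\mathcal Y}$-type bounds. I expect the main obstacle to be the uniform lower bound on $K$, specifically the boundedness of $G$ from below on the lifted potentials, which needs $m$ bounded on the marginal axes or a compactness argument. If that fails, I would fall back on proving that $1\notin\sigma(BB^*)$ via closed range of $S$ using Lemma \ref{lem: CoercivityToInvertible} applied to $S$.
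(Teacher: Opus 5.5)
You need a different key step: the uniform lower bound $K\ge\gamma I$ that your argument rests on is not justified by the reasoning you give. Your similarity step matches the paper's; its closing remark writes $\widetilde M=C^*C$ with $C=B^*$. Your worry is also well founded. The paper passes from the Schur-complement inequality $K_{\mathcal Y\mathcal Y}^{-1/2}K_{\mathcal Y\mathcal X}K_{\mathcal X\mathcal X}^{-1}K_{\mathcal X\mathcal Y}K_{\mathcal Y\mathcal Y}^{-1/2}<I$ straight to $\rho<1$. That implication fails for general operators in infinite dimensions: take $K_{\mathcal X\mathcal X}=K_{\mathcal Y\mathcal Y}=I$ and $K_{\mathcal X\mathcal Y}=K_{\mathcal Y\mathcal X}=\operatorname{diag}(1-1/n)$ on $\ell^2$, where $\rho(M)=1$. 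Your congruence argument $K\ge\gamma I\Rightarrow\|B\|\le 1-\gamma'$ is correct.

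The gap is how you derive $K\ge\gamma I$. The spectral gap $m\ge\lambda$ gives $\|G\|\le\lambda^{-1}$ on $\mathcal M_0$, which is an upper bound. A lower bound $\langle f,Gf\rangle\ge\gamma\|f\|^2$ would need $m\le\gamma^{-1}$, which is not assumed. It is false for the paper's model generator $-\Delta$ on a torus. Concretely, $m_{\mathcal Y}(0)=0$ gives $LP_{\mathcal X}^*=P_{\mathcal X}^*L_{\mathcal X}$, so $K_{\mathcal X\mathcal X}=\nu(\mathcal Y)L_{\mathcal X}^{-1}$ on zero-mean potentials, and this is compact in that case. No $\gamma$ can be extracted from properties of $G$. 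Your fallback, applying Lemma~\ref{lem: CoercivityToInvertible} to $S$, presupposes coercivity of $S$, which is the same missing bound. So the argument does not close as written.

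The missing idea is already in your third paragraph. Since $G$ is a Fourier multiplier, it preserves the axis supports you observed. For zero-mean $\psi$, $GP_{\mathcal Y}^*\psi=P_{\mathcal Y}^*L_{\mathcal Y}^{-1}\psi$ is a zero-mean function of $y$ alone. Hence $(K_{\mathcal X\mathcal Y}\psi)(x)=\int_{\mathcal Y}(L_{\mathcal Y}^{-1}\psi)(y)\,d\nu(y)=0$. Equivalently, $\langle\phi,K_{\mathcal X\mathcal Y}\psi\rangle=\langle P_{\mathcal X}^*\phi,GP_{\mathcal Y}^*\psi\rangle=0$ because the Fourier supports are disjoint. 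Therefore $K_{\mathcal X\mathcal Y}=K_{\mathcal Y\mathcal X}=0$, so $B=0$, $M=0$ and $\rho(M)=0<1$.

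Put differently, $K=\operatorname{diag}(K_{\mathcal X\mathcal X},K_{\mathcal Y\mathcal Y})$. The uniform bound you wanted then comes from the hypothesis rather than from $G$: a positive self-adjoint $T$ with bounded inverse satisfies $T\ge\|T^{-1}\|^{-1}I$. Neither boundedness of $m$ nor a Schur-complement estimate is needed.
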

\begin{proof}
	Lemma \ref{lem: CoercivityToInvertible} and Lemma \ref{lem: CoecivityOfLifitingOperator} guarantee that the following operations are legitimate. Multiplying the Schur complement on the left by $K_{\mathcal Y\mathcal Y}^{-1/2}$ and on the right by $K_{\mathcal Y\mathcal Y}^{-1/2}$, we obtain $K_{\mathcal Y\mathcal Y}^{-1/2} K_{\mathcal Y\mathcal X} K_{\mathcal X\mathcal X}^{-1} K_{\mathcal X\mathcal Y} K_{\mathcal Y\mathcal Y}^{-1/2} < I$. Hence,
	$$
	\rho\left( K_{\mathcal Y\mathcal Y}^{-1/2} K_{\mathcal Y\mathcal X} K_{\mathcal X\mathcal X}^{-1} K_{\mathcal X\mathcal Y} K_{\mathcal Y\mathcal Y}^{-1/2} \right) < 1.
	$$
	Since similarity transformations preserve eigenvalues, we have $\rho(M) < 1$.
\end{proof}

\begin{remark} 
	We elaborate on the similarity structure in Proposition \ref{prop: legitimacy ensures convergence}. Let $S = K_{\mathcal Y\mathcal Y}^{1/2}$, $\widetilde M = S M S^{-1} =  M$, $C = K_{\mathcal X\mathcal X}^{-1/2} K_{\mathcal X\mathcal Y} K_{\mathcal Y\mathcal Y}^{-1/2}$, and $C^* = K_{\mathcal Y\mathcal Y}^{-1/2} K_{\mathcal Y\mathcal X} K_{\mathcal X\mathcal X}^{-1/2}$; then $C^* C=\widetilde M$. Consequently, $\rho(M) = \rho(\widetilde M) = \|C\|^2=\rho(C^* C) < 1$.
\end{remark}

\bibliographystyle{apalike}
\bibliography{reference}

\begin{thebibliography}{}

\bibitem[Blondel et~al., 2018]{blondel2018smooth}
Blondel, M., Seguy, V., and Rolet, A. (2018).
\newblock Smooth and sparse optimal transport.
\newblock In {\em International conference on artificial intelligence and
  statistics}, pages 880--889. PMLR.

\bibitem[Chizat et~al., 2018]{chizat2018unbalanced}
Chizat, L., Peyr{\'e}, G., Schmitzer, B., and Vialard, F.-X. (2018).
\newblock Unbalanced optimal transport: Dynamic and kantorovich formulations.
\newblock {\em Journal of Functional Analysis}, 274(11):3090--3123.

\bibitem[Folland, 2016]{folland2016course}
Folland, G.~B. (2016).
\newblock {\em A course in abstract harmonic analysis}.
\newblock CRC press.

\bibitem[Junius and Oosterhaven, 2003]{junius2003solution}
Junius, T. and Oosterhaven, J. (2003).
\newblock The solution of updating or regionalizing a matrix with both positive
  and negative entries.
\newblock {\em Economic systems research}, 15(1):87--96.

\bibitem[Mainini, 2012]{mainini2012description}
Mainini, E. (2012).
\newblock A description of transport cost for signed measures.
\newblock {\em Journal of Mathematical Sciences}, 181(6):837--855.

\bibitem[Piccoli and Rossi, 2014]{piccoli2014generalized}
Piccoli, B. and Rossi, F. (2014).
\newblock Generalized wasserstein distance and its application to transport
  equations with source.
\newblock {\em Archive for Rational Mechanics and Analysis}, 211(1):335--358.

\bibitem[Santambrogio, 2015]{santambrogio2015optimal}
Santambrogio, F. (2015).
\newblock Optimal transport for applied mathematicians: Calculus of variations,
  pdes, and modeling, volume 87 of progress in nonlinear differential equations
  and their applications.

\bibitem[Villani et~al., 2009]{villani2009optimal}
Villani, C. et~al. (2009).
\newblock {\em Optimal transport: old and new}, volume 338.
\newblock Springer.

\bibitem[Wang et~al., 2025]{wang2025biggest}
Wang, T., Yang, H., Weng, J., and Guo, L. (2025).
\newblock The biggest bank may not be the most interconnected: A refined
  entropy-based approach for indicating the direction of interbank flow.
\newblock {\em International Review of Economics \& Finance}, page 104217.

\end{thebibliography}

\appendix

\section{Omitted proof of Lemmas}
\subsection{Proof of Lemma \ref{lem: complexEntropy}}
\begin{proof} We have:
	$$
	\begin{aligned}H(\rho)&=-\int_{\Pi(\mu,\nu)} \rho\log\rho \,d\pi  \\   &= -\int_{\Pi^+(\mu,\nu)} \rho\log\rho \,d\pi  -\int_{\Pi^-(\mu,\nu)} \rho\left(\log(-\rho) +\pi i \right)\,d\pi   \\&= -\int_{\Pi(\mu,\nu)} \rho\log|\rho| \,d\pi -\pi i\int_{\Pi^-(\mu,\nu)} \rho\,d\pi \\ &= -\int_{\Pi(\mu,\nu)} |\rho|\log|\rho| \,d\pi +\int_{\Pi(\mu,\nu)} (|\rho|-\rho)\log|\rho| \,d\pi -\pi i\int_{\Pi^-(\mu,\nu)} \rho\,d\pi \\&=  -\int_{\Pi(\mu,\nu)} |\rho|\log|\rho| \,d\pi -2\int_{\Pi^-(\mu,\nu)} \rho\log|\rho| \,d\pi -\pi i\int_{\Pi^-(\mu,\nu)} \rho\,d\pi,\quad\rho\in\mathbb C\setminus\{0\} \end{aligned}
	$$
\end{proof}

\subsection{Proof of Lemma \ref{lem: UnitedGeneratorFourierTrans}}
\begin{proof}
	By Remark \ref{remark: FourierCoefficient}, we have $L_{\mathcal X}\chi_\xi=m_{\mathcal X}(\xi)\chi_\xi$ and $\widehat{L_{\mathcal X}f}(\xi)=m_{\mathcal X}(\xi)\hat f(\xi)$. We first compute $\int_{\mathcal X}L_{\mathcal X}\pi_y(x)\overline{\chi_\xi(x)}dx=\widehat{L_{\mathcal X}\pi_y}(\xi)$:
	$$
	\widehat{L_{\mathcal X}\pi_y}(\xi)=m_{\mathcal X}(\xi)\widehat{\pi_y}(\xi)  \Longleftrightarrow \int_{\mathcal X}
	L_{\mathcal X}\pi_y(x)\overline{\chi_\xi(x)}dx=m_{\mathcal X}(\xi)\widehat{\pi_y}(\xi).
	$$
	Expanding $\widehat{(L_{\mathcal X}\otimes I)\pi}(\xi,\eta)$ then yields
	$$
	\begin{aligned}
		\widehat{(L_{\mathcal X}\otimes I)\pi}(\xi,\eta)  =
		&\int\int L_{\mathcal X}\pi(x,y)\overline{\chi_\xi(x)}\overline{\chi_\eta(y)}dxdy \\ =
		&\int_\mathcal{Y}\left[\int_\mathcal{X} L_{\mathcal X}\pi(x,y)\overline{\chi_\xi(x)}dx\right]\overline{\chi_\eta(y)}dy  \\ =&\,m_{\mathcal X}(\xi)\int_\mathcal{Y}\widehat{\pi}^{\,\mathcal X}(\xi,y)\overline{\chi_\eta(y)}dy\\  =&\,m_{\mathcal X}(\xi)\int_\mathcal{Y}\int_\mathcal{X}\pi(x,y)\overline{\chi_\xi(x)}\overline{\chi_\eta(y)}dxdy \\=&\,m_{\mathcal X}(\xi)\int_{X\times Y}\pi(x,y)\overline{\chi_\xi(x)\chi_\eta(y)}dxdy  \\ =&\,m_{\mathcal{X}}(\xi)\,\hat{\pi}(\xi,\eta).
	\end{aligned}
	$$
	where $\widehat{\pi}^{\,\mathcal X}(\xi,y)=\int_{\mathcal X}\pi(x,y)\overline{\chi_\xi(x)}dx$. Similarly, for $I\otimes L_{\mathcal Y}$, we obtain
	$$
	\widehat{(I\otimes L_{\mathcal Y})\pi}(\xi,\eta)=m_{\mathcal Y}(\eta)\hat\pi(\xi,\eta).
	$$
	Taking the Fourier transform of $L_{\mathcal{X}\times\mathcal{Y}} = L_{\mathcal{X}} \otimes I + I \otimes L_{\mathcal{Y}}$ gives
	$$
	\begin{aligned}\widehat{L_{\mathcal X\times\mathcal Y}\pi}&=\widehat{(L_{\mathcal X}\otimes I)\pi}+\widehat{(I\otimes L_{\mathcal Y})\pi}\\&=m_{\mathcal X}(\xi)\hat\pi+m_{\mathcal Y}(\eta)\hat\pi\\&=(m_{\mathcal X}(\xi)+m_{\mathcal Y}(\eta))\hat\pi(\xi,\eta). \end{aligned}
	$$
\end{proof}

\subsection{Proof of Lemma \ref{lem: RelationshipsBetweenEnergySpaces}}
\begin{proof}
	We first prove $D(L_{\mathcal X\times\mathcal Y}) \subset D(L_{\mathcal X\times\mathcal Y}^{1/2})$. Take any $\pi \in D(L_{\mathcal X\times\mathcal Y})$. By definition, we have
	$$
	\int m(\xi,\eta)^2 \,|\hat{\pi}(\xi,\eta)|^2 \,d\hat{\mu}d\hat{\nu} < \infty.
	$$
	We need to show that
	$$
	\int m(\xi,\eta) \,|\hat{\pi}(\xi,\eta)|^2 \,d\hat{\mu}d\hat{\nu} < \infty.
	$$
	Partition the frequency space into two disjoint regions:
	\begin{enumerate}
		\item[(1)] $0 \le m(\xi,\eta) \le 1$. In this region, $m \le 1$, so $m\,|\hat{\pi}|^2 \le |\hat{\pi}|^2$. Since $\pi \in L^2$, Plancherel's theorem gives $\int |\hat{\pi}|^2 < \infty$, hence the integral over this region is finite.
		\item[(2)] $m(\xi,\eta) > 1$. In this region, $m \le m^2$, so $m\,|\hat{\pi}|^2 \le m^2\,|\hat{\pi}|^2$, and the integral of $m^2|\hat{\pi}|^2$ over this region is guaranteed to be finite by $\pi \in D(L_{\mathcal X\times\mathcal Y})$.
	\end{enumerate}
	Combining the two regions yields
	$$
	\int m(\xi,\eta) \,|\hat{\pi}(\xi,\eta)|^2 \,d\hat{\mu}d\hat{\nu} < \infty.
	$$
	Hence $\pi \in D(L_{\mathcal X\times\mathcal Y}^{1/2})$. Therefore the inclusion holds.
	The reverse inclusion $D(L_{\mathcal X\times\mathcal Y}^{1/2}) \subset L^2(\mathcal X\times\mathcal Y)$ is trivial, since by definition $D(L_{\mathcal X\times\mathcal Y}^{1/2})$ is itself a subset of $L^2$. Thus we have
	$$
	D(L_{\mathcal X\times\mathcal Y})\subset\mathcal E=D(L_{\mathcal X\times\mathcal Y}^{1/2})\subset L^2(\mathcal X\times\mathcal Y).
	$$
\end{proof}

\subsection{Proof of Lemma \ref{lem: ConvexityOfDirletEnergy}}
\begin{proof}
	$\forall \pi_1, \pi_2 \in \mathcal M_0 \cap \mathcal E$, $\pi_1 \neq \pi_2$, and $t \in (0,1)$, by the parallelogram identity for the norm,
	$$
	\begin{aligned}
		\mathcal H(t\pi_1 + (1-t)\pi_2)
		&= t\mathcal H(\pi_1) + (1-t)\mathcal H(\pi_2)  - \frac{t(1-t)}{2} \|L^{1/2}(\pi_1 - \pi_2)\|_{L^2}^2  \\&=  t\mathcal H(\pi_1) + (1-t)\mathcal H(\pi_2)  - \frac{t(1-t)}{2} \int_{\widehat{\mathcal X}\times\widehat{\mathcal Y}} m'(\xi,\eta)|\hat{\pi}_1 - \hat{\pi}_2|^2d\hat{\mu}d\hat{\nu}.
	\end{aligned}
	$$
	Since $\pi_1 - \pi_2 \in \mathcal M_0 \setminus \{0\}$ and the spectral gap condition on $\mathcal M_0$ (Definition \ref{def: generator}) implies $m'(\xi,\eta) > 0$, we have
	$$
	\mathcal H(\pi_t) < t\mathcal H(\pi_1) + (1-t)\mathcal H(\pi_2).
	$$
\end{proof}

\section{Omitted proof of Propositions}
\subsection{Proof of Proposition \ref{prop: unboundedness}}\label{sect: proof of unboundedness}
\begin{proof}
	This is equivalent to showing that the feasible set $\Pi(\mu,\nu)$ is unbounded. If $q$ satisfies the marginal constraints, then so does $t q$ for any $t$, since the constraints are homogeneous. We now construct the unboundedness below of the feasible set.
	Suppose there exists a nonzero zero-marginal measure $q$ such that $\langle c, q \rangle = \int c\,dq \neq 0$. Then, along the direction $q$, the objective function takes the value
	$$
	\int c\,d(\pi_0 + t q) = \langle c, \pi_0 \rangle + t \langle c, q \rangle.
	$$
	If $\langle c, q \rangle > 0$, letting $t \to -\infty$ makes the objective function $\int c\,d(\pi_0 + t q) \to -\infty$.
	If $\langle c, q \rangle < 0$, letting $t \to +\infty$ likewise makes the objective function unbounded below.
	Therefore, as long as there exists a nonzero zero-marginal measure $q$ with $\langle c, q \rangle \neq 0$, the feasible set of the original problem is unbounded, and the objective function can be made arbitrarily negative along the zero-marginal direction. In other words, $\mathcal G = \varnothing$.
	When $\langle c, q \rangle = 0$, the optimization problem reduces to a constant term. Along the direction $q$, the original problem admits infinitely many solutions, so $|\mathcal G| = +\infty$.
\end{proof}

\subsection{Proof of Proposition \ref{prop: RichardsonIteration}}
\begin{proof}
	For any two points $z_1, z_2$:
	$$T(z_1) - T(z_2) = (I - \alpha K)(z_1 - z_2).$$
	Therefore,
	$$\|T(z_1) - T(z_2)\| \le \|I - \alpha K\| \cdot \|z_1 - z_2\|.$$
	The Banach fixed-point theorem requires $\|I - \alpha K\| < 1$. We now construct an explicit $\alpha$ that satisfies this condition.
	It is straightforward to verify that $L^{-1}$ is also self-adjoint. Let $\lambda_1, \lambda_2, \dots$ be the spectrum of $L_{\mathcal X\times\mathcal Y}$. Then the spectrum of $G$ is given by $1/\lambda_1, 1/\lambda_2, \dots$. Hence $\|G\| = 1/\lambda_1$. Moreover,
	$$
	K = A G A^*= (A^*)^*G^*A^*  = (A G A^*)^*= K^*,
	$$
	and for all $z \in (\mathcal E_{\mathcal X}^0)^* \times (\mathcal E_{\mathcal Y}^0)^*$,
	$$
	\langle z,Kz\rangle=\langle z,AGA^*z\rangle = \langle A^*z,G A^*z\rangle > 0.
	$$
	Thus $K$ is also a nonnegative self-adjoint operator, with $\|K\| \le \|A\|^2 \|G\| = \frac{\|A\|^2}{\lambda_1}$. 	Therefore its spectrum $\sigma(K)$ is real and bounded: 
	$$
	\sigma(K) \subset [\lambda_{\min}, \lambda_{\max}], \quad 0 < \lambda_{\min} \le \lambda_{\max} < \infty.
	$$
	where $\lambda_{\max} \le \|K\| \le \frac{\|A\|^2}{\lambda_1}$.
	It is easily verified that $I - \alpha K$ is also self-adjoint, so its norm equals the maximum absolute value of its spectrum: $\|I - \alpha K\| = \sup\limits_{\mu \in \sigma(I - \alpha K)} |\mu|$. Since $\sigma(I - \alpha K) = \{1 - \alpha \lambda : \lambda \in \sigma(K)\}$, the norm can be rewritten as
	$$
	\|I-\alpha K\|=\max_{\lambda\in\sigma(K)}|1-\alpha\lambda|.
	$$
	Define $\rho(\alpha) := \|I - \alpha K\| = \max_{\lambda \in [\lambda_{\min}, \lambda_{\max}]} |1 - \alpha \lambda|$. It is straightforward to verify that $\rho(\alpha) < 1$ holds when $0 < \alpha < \frac{2}{\lambda_{\max}}$. The function $|1 - \alpha \lambda|$ is $V$-shaped in $\lambda$ (it is monotone in $\lambda$ on either side of $\lambda = 1/\alpha$), so its maximum is attained at the endpoints $\lambda_{\min}$ and $\lambda_{\max}$. Hence,
	$$
	\max_{\lambda \in [\lambda_{\min}, \lambda_{\max}]} |1 - \alpha \lambda|=\max\{ |1 - \alpha \lambda_{\min}|, |1 - \alpha \lambda_{\max}| \}.
	$$
	We seek the optimal step size $\alpha^* = \arg\min_\alpha \rho(\alpha)$, i.e., the value that minimizes the maximum error factor $\rho(\alpha)$:
	$$
	\min_{\alpha > 0} \max\{ |1 - \alpha \lambda_{\min}|, |1 - \alpha \lambda_{\max}| \}.
	$$
	It is easy to verify that at optimality, the errors at the two endpoints are equal:
	$$
	|1 - \alpha \lambda_{\min}| = |1 - \alpha \lambda_{\max}|.
	$$
	Since $1 - \alpha \lambda$ decreases with $\lambda$, we have $1 - \alpha \lambda_{\min} > 1 - \alpha \lambda_{\max}$. Optimality requires the value to cross zero axis, i.e., $1 - \alpha \lambda_{\min} > 0$ and $1 - \alpha \lambda_{\max} < 0$, Therefore 
	$$
	1 - \alpha \lambda_{\min} = -(1 - \alpha \lambda_{\max}).
	$$
	Solving yields the optimal step size:
	$$
	\alpha^* = \frac{2}{\lambda_{\max} + \lambda_{\min}}
	$$
	Taking the left-endpoint value gives the contraction constant:
	$$
	\begin{aligned}\rho = 1 - \alpha^* \lambda_{\min}  &= 1 - \frac{2\lambda_{\min}}{\lambda_{\max} + \lambda_{\min}} \\ &= \frac{\lambda_{\max} - \lambda_{\min}}{\lambda_{\max} + \lambda_{\min}} \end{aligned}.
	$$
	Since $\lambda_{\max} \ge \lambda_{\min} > 0$, we have $\rho < 1$. This also implies that the contraction condition $\alpha^*\in \left(0,\frac{2}{\lambda_{\max}}\right)$ is automatically satisfied. Therefore, the decoupling equation $z^{n+1} = (I - \alpha^* K)z^n + \alpha^* d$ converges to the unique $z^*$.
\end{proof}

\end{document}